\newcounter{tprocedure}
\renewcommand{\thetprocedure}{\arabic{tprocedure}}
\newenvironment{testprocedure}[1][]{%
  \refstepcounter{tprocedure}\par\medskip
  \noindent\textbf{Procedure~\thetprocedure: #1.} \begin{itshape}%
}{%
  \medskip\end{itshape}%
}
\newcommand{\eps}{\varepsilon}
\newcommand{\R}{\mathbb{R}}
\newcommand{\E}{\mathbb{E}}
\newcommand{\cN}{\mathcal{N}}
\newcommand{\diag}{\mathrm{diag}}
\newcommand{\indep}{\perp \!\!\! \perp}
\newcommand{\rodds}{r_{\mathrm{odds}}}
\renewcommand{\P}{\mathbb{P}}
\def\ddefloop#1{\ifx\ddefloop#1\else\ddef{#1}\expandafter\ddefloop\fi}
\def\ddef#1{\expandafter\def\csname b#1\endcsname{\ensuremath{\mathbf{#1}}}}
\def\ddef#1{\expandafter\def\csname c#1\endcsname{\ensuremath{\mathcal{#1}}}}
\def\ddef#1{\expandafter\def\csname t#1\endcsname{\ensuremath{\widetilde{#1}}}}
\def\ddef#1{\expandafter\def\csname tb#1\endcsname{\ensuremath{\widetilde{\mathbf{#1}}}}}
\newtheorem{remark}{Remark}
\newtheorem{theorem}{Theorem}
\newtheorem{lemma}{Lemma}
\newtheorem{assumption}{Assumption}
\newtheorem{definition}{Definition}
\newtheorem{proposition}{Proposition}
\newcommand{\XX}{\mathbf{X}}
\newcommand{\DD}{\mathbf{D}}
\newcommand{\yy}{\mathbf{y}}
\newcommand{\ee}{\boldsymbol{\varepsilon}}
\newcommand{\GG}{\mathcal{G}}
\newcommand{\FOB}{\mathcal{F}}
\newcommand{\ER}{Erd\"{o}s–R\'{e}nyi}
\date{}
\title{\vspace{-30px}
 Permutation Inference under Multi-way Clustering and Missing Data}
\author{Wenxuan Guo\thanks{Booth School of Business, University of Chicago. Email: wxguo@chicagobooth.edu}, Panos Toulis\thanks{Booth School of Business, University of Chicago. Email: ptoulis@chicagobooth.edu}, Yuhao Wang\thanks{IIIS, Tsinghua University and Shanghai Qi Zhi Institute. Email: yuhaow@tsinghua.edu.cn}}
\begin{document}

\maketitle
\onehalfspacing
\begin{abstract}
Econometric applications with multi-way clustering often feature a small number of effective clusters or heavy-tailed data, making standard cluster-robust and bootstrap inference unreliable in finite samples. In this paper, we develop a framework for finite-sample valid permutation inference in linear regression with multi-way clustering under an assumption of conditional exchangeability of the errors. Our assumption is closely related to the notion of separate exchangeability studied in earlier work, but can be more realistic in many economic settings as it imposes minimal restrictions on the covariate distribution. We construct permutation tests of significance that are valid in finite samples and establish theoretical power guarantees, in contrast to existing methods that are justified only asymptotically. We also extend our methodology to settings with missing data and derive power results that reveal phase transitions in detectability. Through simulation studies, we demonstrate that the proposed tests maintain correct size and competitive power, while standard cluster-robust and bootstrap procedures can exhibit substantial size distortions.

\end{abstract}

\section{Introduction}\label{sec:intro}
Cluster-robust standard errors are ubiquitous in applied econometrics and statistics whenever data exhibit within-cluster dependence. In many empirical settings, clustering occurs along more than one dimension, such as importer–exporter country pairs in international trade. This leads to dyadic (or multi-way) regression specifications, where standard one-way cluster-robust standard errors~\citep{liang1986longitudinal, arellano1987computing, moulton1986random} are no longer sufficient. 

To address this issue,~\citet{cameron2011robust} proposed a multi-way extension of standard cluster-robust variance estimators that is relatively simple to implement and has since become standard practice. However, as pointed out by~\citet{mackinnon2021wild}, the multi-way variance formulas of~\citet{cameron2011robust} are largely heuristic: they lack formal proofs of validity and can even yield non-positive definite variance matrices. Such numerical issues tend to arise precisely in settings where robust inference is needed, such as when clusters are few, small, or unbalanced.

Building on this literature,~\citet{mackinnon2021wild} provided formal justifications for multi-way cluster-robust inference. Their analysis follows the theoretical framework of {\em separately exchangeable arrays} developed by~\citet{davezies2018asymptotic, menzel2021bootstrap}, which treats the data as drawn from a two-dimensional array. 
One clustering dimension corresponds to the rows and the other to the columns of this array, with each “cell” containing the data from the corresponding cluster pair. 
Within this setup,~\citet{mackinnon2021wild} proved the asymptotic validity of $t$-statistics based on multi-way cluster-robust standard errors and proposed a wild cluster bootstrap procedure that performs well in simulations.

While elegant, the analysis of~\citet{mackinnon2021wild} relies on several assumptions inherited from the separately exchangeable framework: (i) the number of rows and columns must grow without bound, (ii) covariates and errors are jointly exchangeable in the two dimensions, (iii) data are independent across distinct rows and across distinct columns, and (iv) cell sizes are positive in expectation. These conditions are analytically convenient but can exclude many empirically relevant settings. For instance, in international trade data, importer–exporter pairs are finite, exhibit geographic dependence, and covariates are typically nonexchangeable. Such features are common in real data and violate the assumptions in earlier work.

In this paper, we develop an alternative framework for inference under multi-way clustering that replaces all four assumptions above with a single assumption of conditional separate exchangeability of the errors given covariates. Under this symmetry, we construct a permutation test of significance that is valid in finite samples and comes with theoretical power guarantees under plausible alternatives. To our knowledge, this is the first test to deliver finite-sample validity for regression models with multi-way clustered errors. The key restriction, relative to existing work, is that our exchangeability condition applies to the finite sample rather than to the population. See Section \ref{sec:overview} for a detailed comparison.

Owing to its construction, our permutation test is conceptually simple and straightforward to implement, especially when compared to multi-way wild cluster bootstrap procedures. Our tests also accommodate a finite number of clusters and impose no assumptions on the covariate distribution, thus allowing for irregular and heavy-tailed data. The extension to missing data can be handled naturally within our framework, with the permutation test performed on a carefully chosen subset of the data consisting of fully observed cells---that is, a clique in graph-theoretic terms. This novel connection to graph theory facilitates the analysis of power properties under \ER-type missingness.

In summary, a key contribution of our paper is to show that a small modification of the separate exchangeability condition in earlier work can lead to finite-sample valid inference via permutation tests and can provide a potentially valuable addition to the standard econometric toolkit.

\section{Main Idea and Overview of Results}\label{sec:overview}
To illustrate the main idea, consider the following 
regression model:
\begin{equation}\label{eq:dyadic}
y_{i j}= x_{ij}^\top\gamma +  d_{i j}^\top\beta + \eps_{ij}\;, \quad i, j = 1, \dots, n\;,
\end{equation}
where $y_{ij}\in\mathbb{R}$ are the outcomes, $d_{ij}\in\mathbb{R}^d$ are the covariates of interest, and $x_{ij}\in\mathbb{R}^p$ are auxiliary covariates; $\eps_{ij}$ are the unobserved errors. The parameter $\beta$ is of primary interest, while $\gamma$ is a nuisance parameter.
Equation~\eqref{eq:dyadic} describes a dyadic regression specification \citep{graham2020dyadic}, such as trade data between importer country $i$ and exporter country $j$. 
In the array representation of~\eqref{eq:dyadic}, index $i$ is the ``row cluster", $j$ is the ``column cluster", and 
there is a single observation per cell $(i,j)$. We will consider extensions in Section~\ref{sec:multicluster}.

Our goal is to test the null hypothesis $H_0: \beta = 0$ and perform inference via test inversion. Throughout the paper, our main assumption is the conditional exchangeability of the errors:
\begin{equation}\label{eq:ex1}
(\eps_{i j})_{i,j=1}^n \stackrel{\mathrm{d}}{=} (\eps_{\pi(i) \sigma(j)})_{i,j=1}^n \mid \{x_{ij}, d_{ij}\}_{i,j=1}^n~,~\text{for all permutations $\pi, \sigma$ on  $\{1, \dots, n\}$.}
\end{equation}
That is, the error disturbances have an importer--exporter ``random effects'' structure conditional on the covariates. Relative to earlier work, this assumption is weaker in that it imposes no restrictions on the covariate distribution, but stronger in that it holds in finite samples rather than the population. We give a detailed comparison in the following section.

Under this invariance assumption, we propose a finite-sample valid permutation test for $H_0$. 
The procedure, detailed in Section~\ref{sec:procedure}, consists of the following three steps:
\begin{enumerate}
    \item ({\bf Partialling-out.}) We construct a group $\mathcal{G}$ of two-way permutations and matrices $V_g$ for each $(\pi, \sigma) \coloneqq g\in\mathcal{G}$ such that
\begin{equation}\label{eq:Vg}
V_g^\top \XX = 0\;,~\text{and}~V_g^\top \XX^g = 0\;,    
\end{equation}
where $\XX=[x_{ij}]\in\mathbb{R}^{n^2 \times p}$ are the auxiliary covariates in matrix form, and $\XX^g$ is obtained from $\XX$ after permuting the rows according to $\pi$ and 
the columns according to $\sigma$. Similarly, define $\yy$ and $\ee$ as outcome and error vectors in $\R^{n^2}$, respectively, following the same stacking operation as for $\XX$.

This step constructs the matrices $\{V_g\}_{g\in\GG}$, which act as orthonormal projections to partial out nuisance effects, similar to the Frisch–Waugh–Lovell theorem \citep{frisch1933partial, lovell1963seasonal}. Section~\ref{sec:procedure} provides simple conditions under which such matrices exist.

\item ({\bf Minorized statistics.}) We consider test statistics $s(\yy,V)$ that depend on $\yy$ only through $V \yy$. Following from Equation \eqref{eq:Vg} and the linear model~\eqref{eq:dyadic}, we can thus write $s(\yy^g,V_g) = f(V_g \ee^g)$ under the null hypothesis, for some known function $f$. Next, define the statistic's {\em minorized} version as $s^*(\yy) =  \min_{g\in\mathcal{G}} s(\yy, V_g)$. 
Then, under the null hypothesis, we have
\begin{equation}\label{eq:minorized}
    s^*(\yy) = \min_{g\in\mathcal{G}} f(V_g \ee) := f^*(\ee)\;.
\end{equation}
Thus, minorization provides an explicit function of the unobserved errors.

\item ({\bf Permutation test.}) We compare $s^*(\yy)$ with the values 
 $\{ s(\yy^g, V_g)\}_{g\in\mathcal{G}}$, i.e., the randomization 
distribution constructed from two-way permutations $\mathcal{G}$ on the outcomes. 
For instance, we may define the $p$-value through the function
$$
\mathrm{pval}(s^*(\yy), \{ s(\yy^g, V_g)\}_{g\in\mathcal{G}}) = 
\frac{1}{1+|\mathcal{G}|}\Bigl(1 + \sum_{g\in \mathcal{G}}\mathbbm{1}\{ s^*(\yy) \le s(\yy^g, V_g)\}\Bigr)\;.
$$
\end{enumerate}

It is straightforward to verify that the resulting testing procedure is valid in finite samples for any $n>0$ since
    \begin{align}\label{eq:proof_concept}
\underbrace{\mathrm{pval}(s^*(\yy), \{ s(\yy^g, V_g)\}_{g\in\mathcal{G}})}_{\text{feasible test}} & = \mathrm{pval}(f^*(\ee), \{ s(\yy^g, V_g)\}_{g\in\mathcal{G}}) \nonumber\\
& =
\mathrm{pval}(f^*(\ee), \{ f(V_g\ee^g)\}_{g\in\mathcal{G}})\;,\nonumber\\
& \ge
\underbrace{\mathrm{pval}(f^*(\ee), \{ f^*(\ee^g)\}_{g\in\mathcal{G}})}_{\text{finite-sample valid, infeasible}}\;.
\end{align}

In the above derivation, the first equality follows from~\eqref{eq:minorized}. The second step follows from the null hypothesis, as argued above.
The last step follows from the minorization property. 
Note that the last $p$-value is finite-sample exact due to the invariant hypothesis~\eqref{eq:ex1}. However, such a $p$-value is infeasible because it depends on 
transformations of the errors, $\ee^g$, which are unobserved. 
The minorization step yields a feasible test that remains finite-sample valid, albeit slightly conservative due to \eqref{eq:proof_concept}; we quantify this conservativeness through simulations.
The full proof for validity is given in Theorem~\ref{thm:valid}.

An important feature of our construction is that the derivation in~\eqref{eq:proof_concept} is not specific to two-way permutations and applies under any invariance group $\mathcal{G}$, provided that an invariance assumption like~\eqref{eq:ex1} holds. We exploit this flexibility to handle missing data and multi-way clustering in Sections~\ref{sec:missing} and~\ref{sec:multicluster}.

 Beyond type I error control, in Theorem~\ref{thm:power} we derive a power theory for our test, which is asymptotic and characterizes the minimal signal strength of $|\beta|$ required to ensure a diminishing type II error. This result depends explicitly on a parameter of the tail behavior of regression errors and can be adapted to missing data~(Section \ref{sec:missing}). 
 This new test preserves finite-sample validity if two-way exchangeability holds conditional on the pattern of missingness~(Theorem \ref{thm:cond_valid}). To maintain high power, we propose an efficient block-construction algorithm based on graph methods. Using this construction, we derive theoretical power bounds under \ER-type missingness. This analysis quantifies how the density of cells with observed data as compared to all possible cells in the data array affects treatment effect detection~(Theorem~\ref{thm:power_miss}). Finally, in Section \ref{sec:multicluster}, we generalize our procedure to multi-way clustering, including classical two-way layouts and panel data.

\subsection{Related Work}
Our work contributes a new perspective to a long line of research on cluster-robust standard errors, developed mainly in econometrics~\citep{liang1986longitudinal, arellano1987computing, bertrand2004much, kezdi2004robust, hansen2007asymptotic}. In subsequent work,~\citet{cameron2011robust} extended these methods to multi-way clustering. A rigorous analysis of the two-way clustering regime was initiated by~\citet{davezies2018asymptotic, menzel2021bootstrap}, based on the notion of separate exchangeability of array-valued data.

In a significant development,~\citet{mackinnon2021wild} built upon this theory to establish the asymptotic properties of $t$-statistics based on two-way cluster-robust standard errors; see also~\citet{djogbenou2019asymptotic, conley2018inference}.
Their analysis relies on four key assumptions: (i) both cluster dimensions grow without bound, (ii) covariates and error disturbances are jointly exchangeable, (iii) data are independent across distinct cluster pairs, and (iv) the number of observations within each cluster pair is positive in expectation.
Assumption~(ii), in particular, entails an invariance condition that can be written as
\begin{equation}\label{eq:inv_mackinnon}
(d_{ij}, x_{ij}, \eps_{ij})_{i,j=1}^n \stackrel{\mathrm{d}}{=} (d_{\pi(i)\sigma(j)}, x_{\pi(i)\sigma(j)}, \eps_{\pi(i)\sigma(j)})_{i,j=1}^n\;,
\end{equation}
for all permutations $\pi, \sigma$ on $\{1, \dots, n\}$.
This condition is closely related to our invariance assumption~\eqref{eq:ex1}, as both impose a form of two-way (double) exchangeability on the data.

However, this comparison also underscores key conceptual differences between our framework and the separate exchangeability approach adopted in earlier work.
Specifically, unlike assumption~\eqref{eq:inv_mackinnon}, our invariance assumption~\eqref{eq:ex1} imposes no restrictions on the covariate distribution.
Our approach is therefore more realistic in empirical econometric settings where the covariates differ systematically across clusters and cannot reasonably be modeled as random effects.
For instance, in international trade data, $x_{ij}$ may represent the distance between countries $i$ and $j$.
In this context, it is not realistic to assume that distance is exchangeable with respect to $i$ or $j$, and it is even unclear what a superpopulation distribution of between-country distances would really represent.
By contrast, our approach in this paper is to conduct inference conditional on the observed $x_{ij}$.

At the same time, our invariance assumption~\eqref{eq:ex1} is required to hold for any fixed covariates in the sample, whereas the restriction in~\eqref{eq:inv_mackinnon} is on the hypothetical superpopulation. For example, suppose $x_{ij} = f(\eta_{ij})$ and $\eps_{ij} = g(\eta_{ij}, u_{ij})$ for some functions $f, g$ and doubly exchangeable arrays $\eta_{ij}, u_{ij}$.
In this case, our assumption could be violated because $\eps_{ij}$ may no longer be exchangeable conditional on $f(\eta_{ij})$, whereas condition~\eqref{eq:inv_mackinnon} would continue to hold assuming the superpopulation is well defined.
The benefit of assuming in-sample invariance is that it enables the construction of permutation tests that are valid in finite samples, while dispensing with all other assumptions (i)–(iv) outlined above.
In short, the distinction arises because our assumption corresponds to a “fixed-design” invariance, whereas earlier work imposes a “random-design” invariance.
Thus, neither framework is a special case of the other, and from that perspective they are best viewed as complementary.

In terms of methodology, our paper is more closely related to an emerging literature in statistics that studies testing and inference under data invariance assumptions. Building on the seminal work of~\citet{freedman1983nonstochastic},~\citet{toulis2019life, toulis2025biometrika} formalized early ideas in invariance-based inference through the construction of test statistics that, under the null hypothesis, become explicit functions of the errors, as in Equation~\eqref{eq:minorized}.
The two-way exchangeability invariance~\eqref{eq:ex1} that we use in this paper was originally introduced by~\citet[Section~4.4]{toulis2019life}. However, the randomization tests developed in that paper depend on OLS residuals and are generally not finite-sample valid, except in special cases. An important advance toward this goal was achieved by~\citet{Lei2020} and later substantially extended by~\citet{wen2025residual}. Our construction based on partialling-out and minorization, as described earlier, and the theoretical framework we use to study power follow closely~\citet{wen2025residual}. However, those papers rely on simple one-dimensional exchangeability and do not account for clustering in the regression errors. In addition, they do not address the problem of missing data. 

 In the social sciences and psychology,~\citet{mantel1967detection, krackhardt1988predicting} proposed permutation tests for dyadic regression---a special case of two-way clustering---known as quadratic assignment procedures (QAP). However, these tests are not finite-sample valid for the general regressions considered in our paper; see also~\citet{dekker2007sensitivity, shi2025qap}. 
More broadly, permutation tests frequently arise in experimental and causal inference settings as Fisherian randomization tests~\citep{fisher1935design}. These procedures are finite-sample valid under sharp null hypotheses, even for complex test statistics~\citep{gerber2012field, guo2025mlfrt}, but generally require additional modifications to maintain validity under weaker null hypotheses (e.g., regression tests of signifiance), at least in an asymptotic sense~\citep{romano1990behavior, chung2013exact, wu2021randomization, ding2016variation, basse2019randomization, puelz2021graph}. A notable exception is the use of the minorization trick to construct finite-sample valid tests of certain weak null hypotheses~\citep{caughey2023randomisation}, which also serves as an inspiration for our work. In settings with missing data, we follow a similar approach to~\citet{puelz2021graph} by transforming the problem of constructing block permutations into a biclique search problem in graph theory. 
%

\section{Main Test under Dyadic Regression}\label{sec:procedure}
In this section, we remain focused on the dyadic regression setting~\eqref{eq:dyadic} with $n$ row clusters, $n$ column clusters, and each cell $(i,j)$ containing a single observation. We will generalize this setup to multi-way clustering in Section~\ref{sec:multicluster}.
For notational simplicity, we express the dyadic regression model \eqref{eq:dyadic} in matrix form,
\begin{equation}\label{eq:model_matrix}
    \yy = \XX \gamma + \DD \beta + \ee\;.
\end{equation}
Here, $\yy$, $\XX$, $\DD$, and $\ee$ collect all dyadic observations into long matrices (or vectors) through a stacking operation. Concretely, we stack the dyadic entries in lexicographic order, and create vectors when stacking scalar values and matrices when stacking vectors; i.e.,
\begin{align*}
    \yy &= (y_{11}, \dots, y_{1n}, y_{21}, \dots, y_{2n}, \dots, y_{n1}, \dots, y_{nn})^\top\in\R^{n^2}\;,\\
    \XX &= (x_{11}, \dots, x_{1n}, x_{21}, \dots, x_{2n}, \dots, x_{n1}, \dots, x_{nn})^\top\in\R^{n^2\times p}\;.
\end{align*}
That is, the observation associated with the cell $(i,j)$ occupies row $(i-1)n + j$ in the stacked objects. In this notation, the error vector $\ee$ is defined similarly as $\yy$. $\DD\in\R^{n^2 \times d}$ is a long matrix when $d>1$ and reduces to a vector when $d = 1$. Our procedure below will handle both cases.

Our goal is to test $H_0: \beta = 0$ under the key assumption of double exchangeability discussed in the introduction. We repeat the definition here for completeness.
\begin{assumption}\label{asmp:double_ex}
For any permutations $\pi, \sigma$ on $[n]:=\{1,2, \ldots, n\}$, the errors satisfy 
\begin{equation*}
(\eps_{i j})_{i,j\in[n]} \stackrel{\mathrm{d}}{=} (\eps_{\pi(i) \sigma(j)})_{i,j\in[n]} \mid \XX, \DD\;.
\end{equation*}
\end{assumption}
By Assumption \ref{asmp:double_ex}, the $n\times n$ matrix of dyadic errors is both row-wise and column-wise exchangeable. Double exchangeability holds, for instance, in ``random effects" models,
\begin{equation}\label{eq:random_eff1}
\eps_{i j}=\eta_i+\xi_j+ u_{i j}~,    
\end{equation}
where $\eta_i, \xi_j, u_{i j}$ are i.i.d. random variables conditional on covariates. 
Then, Assumption \ref{asmp:double_ex} holds since, conditional on covariates,
$$
\eps_{\pi(i) \sigma(j)}=\eta_{\pi(i)}+\xi_{\sigma(j)}+u_{\pi(i)\sigma(j)} \stackrel{d}{=} \eta_{i}+\xi_{j}+u_{ij} = \eps_{ij}\;.
$$
Note that the assumption does not require that $\eps_{ij}$ are independent of the covariates $\XX,\DD$. For instance, the assumption allows $\xi_{j} = f(\XX,\DD) \tilde \xi_{j} + g(\XX,\DD)$, where $\{\tilde\xi_{j}\}_{j\in[n]}$ represent generic i.i.d. random variables, and $f, g$ are arbitrary functions. 

We also note that under Assumption~\ref{asmp:double_ex}, we may further assume $\E(\eps_{ij} | x_{ij}, d_{ij})=0$ without loss of generality, since the invariance is maintained under additive shifts.
As a result, parameters $\beta$ and $\gamma$ retain their standard regression interpretation.

\subsection{Main Testing Procedure}
Our main procedure now follows the three steps outlined in Section~\ref{sec:overview}.
First, we randomly generate $K$ row permutations $\pi_{1},\dots,\pi_{K}$ and $K$ column permutations $\sigma_{1},\dots,\sigma_{K}$, and define the full set of two-way permutations as 
\begin{equation}\label{eq:pi_K}
\GG = \left\{\left(\pi_{0}, \sigma_{0}\right),\left(\pi_{1}, \sigma_{1}\right), \ldots,\left(\pi_{K}, \sigma_{K}\right)\right\}\;,
\end{equation}
where $\pi_{0} = \sigma_{0} = \mathrm{Id}$ is the identity mapping on $[n]$. For now, we assume that $\GG$ forms an algebraic group in both dimensions. That is, for any $r,s \in\{0,1, \ldots, K\}$, there exists $k\in\{0, \dots, K\}$ such that $\pi_{k}=\pi_{r} \circ \pi_{s}$, $\sigma_{k}=\sigma_{r} \circ \sigma_{s}$, where $\circ$ denotes function composition.
In the following section, we will provide a specific algorithmic construction of $\GG$, which will guarantee the group property.

Given a row permutation $\pi$ and a column permutation $\sigma$, we can apply them to 
outcomes $\yy$ by replacing $y_{ij}$ with $y_{\pi(i)\sigma(j)}$. Hence, we define the permuted data by $\yy_{\pi, \sigma}$, where the $[(i-1)n+j]$-th entry of $\yy_{\pi, \sigma}$ is equal to $y_{\pi(i) \sigma(j)}$, and define the permuted covariates $\XX_{\pi, \sigma}$ similarly. Finally, let $N = n^2$ denote the total number of observations in this setup. We are now ready to describe our main testing procedure.

\begin{testprocedure}[Invariant Permutation Test under Dyadic Regression]\label{proc1}
\begin{enumerate}
    \setlength{\itemsep}{-1.2em}
    \item For $k = 1,\dots, K$, find an orthonormal matrix $V_k\in\R^{N\times (N-2p)}$ such that $V_k^\top \XX=0$ 
    and $V_k^\top \XX_{\pi_{k},\sigma_{k}}=0$. Compute 
    \[
    a_k = \|\DD^\top V_k V_k^\top \yy\|\;, \quad b_k = \| \DD^\top V_k V_k^\top \yy_{\pi_{k},\sigma_{k}}\|\;.
    \]
    For $d = 1$, using $\langle \cdot, \cdot \rangle$ to denote inner product, the quantities above reduce to
    \[
    a_k = |\langle V_k^\top \DD,   V_k^\top \yy\rangle|\;, \quad b_k = |\langle V_k^\top \DD, V_k^\top \yy_{\pi_{k},\sigma_{k}} \rangle|\;.
    \]
    \item Compute the randomization $p$-value
    \begin{equation}\label{eq:rand_pval}
        \mathrm{pval} \coloneqq \frac{1}{K+1}\Bigl(1+\sum_{k=1}^K \mathbbm{1}\bigl\{\min _{1 \leq j \leq K} a_j \leq b_k\bigr\}\Bigr)\;.
    \end{equation}
    \item \textbf{Inference.} Let $\mathrm{pval}(b)$ denote the randomization $p$-value for $H_0^b: \beta = b$, and define the confidence region
    \begin{equation*}
        \mathrm{CI} = \{x \in \R^d: \mathrm{pval}(x) > \alpha \}\;.
    \end{equation*}
    Then, $\mathrm{CI}$ is the $100\times (1-\alpha) \%$ confidence region for $\beta$.
\end{enumerate}
\end{testprocedure}

As described in the introduction, Step 1 of the procedure constructs an orthonormal matrix $V_k$ that projects $\yy$ onto the orthogonal complement of the nuisance-parameter space. 
This removes nuisance effects from $\XX$ (partialling-out step). The other key ingredient is the minorization trick in~Equation~\eqref{eq:rand_pval} of Step 2, which guarantees 
that the reference distribution of the test stochastically dominates the true reference distribution of the idealized test. 
The resulting $p$-value is therefore finite-sample valid conditionally on the data according to the following theorem, and thus it is valid marginally as well. See Section \ref{sec:proof_IPT} in the Appendix for the proof.
\begin{theorem}\label{thm:valid}
Let $(\XX, \DD, \yy)$ be the data from model \eqref{eq:model_matrix} with $p< N/2$ and suppose that Assumption~\ref{asmp:double_ex} holds. Under $H_0: \beta = 0$, the $p$-value defined in \eqref{eq:rand_pval} satisfies
\begin{equation*}
    \mathbb{P}(\mathrm{pval} \leq \alpha\mid \XX, \DD) \leq \alpha~,~\text{for all}~\alpha \in[0,1],~n > 0\;.
\end{equation*}
\end{theorem}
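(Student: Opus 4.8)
The plan is to follow the ``proof of concept'' chain \eqref{eq:proof_concept} rigorously, working conditionally on $(\XX,\DD)$ throughout. First I would fix the invariance group $\GG$ of two-way permutations and observe that Assumption~\ref{asmp:double_ex} makes the joint conditional law of $(\ee_{\pi_k,\sigma_k})_{k=0}^K$ invariant under the natural (left-multiplication) action of $\GG$ on itself: since $\GG$ is a group, relabelling the permutations by composition with any fixed $(\pi_r,\sigma_r)$ is a bijection of $\{0,\dots,K\}$, so the tuple $(\ee_{\pi_k\circ\pi_r,\, \sigma_k\circ\sigma_r})_{k}$ is a reindexing of $(\ee_{\pi_k,\sigma_k})_k$, and Assumption~\ref{asmp:double_ex} gives it the same conditional distribution as $(\ee_{\pi_k,\sigma_k})_k$. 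This is the randomization-hypothesis backbone that makes the idealized $p$-value exact.

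Next I would make the algebraic substitutions explicit. Under $H_0$, $\yy=\XX\gamma+\ee$, so $V_k^\top\yy = V_k^\top\ee$ because $V_k^\top\XX=0$; likewise, applying the permutation $(\pi_k,\sigma_k)$ to model~\eqref{eq:model_matrix} gives $\yy_{\pi_k,\sigma_k} = \XX_{\pi_k,\sigma_k}\gamma + \ee_{\pi_k,\sigma_k}$, and the second orthogonality condition $V_k^\top\XX_{\pi_k,\sigma_k}=0$ yields $V_k^\top\yy_{\pi_k,\sigma_k} = V_k^\top\ee_{\pi_k,\sigma_k}$. Hence $a_k = \|\DD^\top V_k V_k^\top\ee\| =: f(V_k^\top\ee)$ and $b_k = \|\DD^\top V_k V_k^\top\ee_{\pi_k,\sigma_k}\| = f(V_k^\top\ee_{\pi_k,\sigma_k})$, with the same function $f$ appearing in both (this uses that the $V_k$ are fixed given the covariates). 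Define the minorized quantity $a^\ast := \min_{1\le j\le K} a_j$; note $a^\ast \le a_0$ is exactly the feasible test statistic, while the ``infeasible'' reference values would be $b_k^\ast := \min_j f(V_j^\top\ee_{\pi_j,\sigma_j})$ applied in each permuted world --- I would set up the bookkeeping so that the feasible $b_k$ dominates the infeasible minorized version, i.e. $b_k \ge \min_{1\le j\le K} a_j$ has the wrong direction, so instead the inequality to exploit is $\mathbbm{1}\{a^\ast \le b_k\} \le \mathbbm{1}\{a^\ast \le b_k\}$ trivially and the gain comes at the level of the infeasible test: $a^\ast$ compared against $\{a^\ast \text{ evaluated in permuted data}\}$. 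Concretely, the infeasible $p$-value replaces each $b_k$ by the statistic $s^\ast$ recomputed on the $k$-th permuted dataset, which by minorization is $\le b_k$; therefore the feasible count $\sum_k \mathbbm{1}\{a^\ast\le b_k\}$ is at least the infeasible count $\sum_k \mathbbm{1}\{a^\ast\le s^\ast_{(k)}\}$, making the feasible $p$-value $\ge$ the infeasible one, which is the middle-to-bottom step of \eqref{eq:proof_concept}.

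Finally I would show the infeasible $p$-value is conditionally super-uniform by the standard randomization argument: by the $\GG$-invariance established in the first step, the vector $\bigl(s^\ast_{(0)}, s^\ast_{(1)},\dots,s^\ast_{(K)}\bigr)$ of the minorized statistic evaluated across the $K+1$ relabellings is exchangeable given $(\XX,\DD)$ (each coordinate is the same measurable function of a tuple of errors that are jointly exchangeable under relabelling), so $\mathbb{P}\bigl(\tfrac{1}{K+1}(1+\sum_{k=1}^K\mathbbm{1}\{s^\ast_{(0)}\le s^\ast_{(k)}\}) \le \alpha \mid \XX,\DD\bigr)\le\alpha$ by the usual rank computation for exchangeable random variables (ties handled by the $+1$ in numerator and denominator). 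Combining this with the domination from the minorization step gives $\mathbb{P}(\mathrm{pval}\le\alpha\mid\XX,\DD)\le\mathbb{P}(\mathrm{pval}_{\mathrm{infeasible}}\le\alpha\mid\XX,\DD)\le\alpha$, and then averaging over $(\XX,\DD)$ gives the marginal statement. The main obstacle I anticipate is bookkeeping in the middle step: carefully defining the ``infeasible'' reference distribution so that (a) it genuinely equals the minorized statistic re-evaluated on permuted data, (b) its exchangeability follows cleanly from the group property rather than requiring the full symmetric group, and (c) the direction of the minorization inequality lines up to make the feasible $p$-value conservative rather than anti-conservative --- this is where the argument in \eqref{eq:proof_concept} is terse and needs to be spelled out with the $V_k$ held fixed and the group action tracked explicitly.
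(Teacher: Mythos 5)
Your proposal is correct and follows essentially the same route as the paper's proof: partial out under the null so that $a_k=f_k(\ee)$ and $b_k=f_k(\ee_{\pi_k,\sigma_k})$, lower-bound the feasible $p$-value by the infeasible minorized one via $\min_j f_j(\ee_{\pi_k,\sigma_k})\le b_k$, and invoke the standard group-invariance randomization argument (the paper cites Lemma 3 of \citet{wen2025residual}) for conditional super-uniformity. Aside from one garbled sentence and the index slip in your definition of $b_k^\ast$ (the $k$-th infeasible reference value should be $\min_j f_j(\ee_{\pi_k,\sigma_k})$, not $\min_j f_j(V_j^\top\ee_{\pi_j,\sigma_j})$), your argument matches the paper's.
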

To our best knowledge, Procedure \ref{proc1} delivers the first general finite-sample valid test of significance in dyadic regression. Under Assumption \ref{asmp:double_ex}, it controls the type I error at the nominal level for any error and covariate distribution, allowing heavy-tailed distributions and strong cluster-level dependence. 
We demonstrate these points through simulations in Section \ref{sec:simu} and the empirical application in Section \ref{sec:trade}.

\begin{remark}\label{rmk:random}
We note that Procedure \ref{proc1} requires a permutation group $\GG$ that is random, as described in the following section. Other tests proposed in this paper also involve certain random components. As a result, running our procedures with different random seeds may yield slightly different $p$-values. A practical remedy is to compute multiple $p$-values and take their median, as suggested by \citet{diciccio2020exact, guo2025rank}. We will adopt this strategy for the test applied in Section \ref{sec:trust} of the Appendix.
\end{remark}

\subsection{Construction of Permutation Subgroup}
In this section, we propose an algorithm as a concrete way to construct $\GG$ that has the appropriate group structure required for Procedure~\ref{proc1}. The concrete algorithm is adapted from Algorithm 1 of \citet{wen2025residual} to the dyadic regression setting, and is presented in Algorithm~\ref{alg:construct} that follows.
\begin{algorithm}
	\DontPrintSemicolon
	\KwData{The index set $I$ to permute and the number of permutations $K$.}
	\Begin{
        $n \gets |I|$\;
        Generate a random one-to-one mapping $\pi: I \to [n]$\;
        \For{$k=1, \dots, K$}{
        Construct a permutation function $\psi_k:=\pi^{-1} \circ \tilde{\psi}_k \circ \pi$, where $\tilde{\psi}_k$ is a permutation function such that
        \begin{enumerate}
            \item If $i > (K+1) \lfloor\frac{n}{K+1}\rfloor$, set $\tilde{\psi}_k(i) = i$, 
            \item If $i \le (K+1) \lfloor\frac{n}{K+1}\rfloor$, set
        \end{enumerate}
        $$
        \tilde{\psi}_k(i):= \begin{cases}i+k & ~\text{if}~i \bmod (K+1) \le K+1-k \\ i-(K+1-k) & ~\text{if}~ i \bmod (K+1) > K+1-k\end{cases}\;.
        $$
        }
    }
    \KwResult{Set of permutations $\psi_k$, $k = 0, \dots, K$, where $\psi_0 = \mathrm{Id}$ is the identity mapping on $I$.}
	\caption{Permutation set construction \label{alg:construct}}
\end{algorithm}

To understand the algorithm, consider $I = [n]$, $\pi = \mathrm{Id}$, and suppose $n$ can be divided by $K+1$. Then, our algorithm divides the $n$ indices into $m\coloneqq n /(K+1)$ ordered lists of indices and performs cyclic permutations on each sub-list. Specifically, we first denote $S_1, \ldots, S_m$ as $m$ ordered lists of indices such that
$$
(1, \ldots, n)\coloneqq (\underbrace{1, \ldots, K+1}_{S_1}, \underbrace{K+2, \ldots, 2(K+1)}_{S_2}, \underbrace{(m-1)(K+1)+1, \ldots, m(K+1)}_{S_m})\;,
$$
Then we define $\tilde{\psi}_k$ as the permutation for which
$$
\left(\tilde{\psi}_k(1), \ldots, \tilde{\psi}_k(n)\right):=\left(S_1^k, \ldots, S_m^k\right)\;,
$$
where each $S_i^k$ is created via shifting all the elements in $S_i$ by $k$ places. Taking $S_1^k$ for example, we have $S_1^k\coloneqq (K+2-k, \ldots, K+1,1,2, \ldots, K+1-k)$. One can easily verify that the resulting set of permutations $\psi_0, \psi_1, \dots, \psi_K$ formalizes a group, since it is constructed by cyclic permutations. We give a formal proof in Section \ref{sec:proof_IPT}. For dyadic regression, it suffices to consider $I = [n]$, while in the next few sections, we will apply Algorithm \ref{alg:construct} with other choices of index sets.

For dyadic regression and Procedure~\ref{proc1}, we just need to apply Algorithm \ref{alg:construct} twice and store the outputs as $\{\pi_{k}\}_{k=0}^K$ and $\{\sigma_{k}\}_{k=0}^K$, which represent the row and column permutations, respectively. Then, one can verify that $\GG$ defined in \eqref{eq:pi_K} is an algebraic group, as its row and column permutations are both algebraic groups. 
Importantly, Algorithm \ref{alg:construct} provides meaningful permutations that lead to a provably powerful test as we show in the next section.

\section{Power Analysis}\label{sec:power}
In this section, we focus on the $d=1$ setting (testing a single coefficient) and study the statistical power of Procedure \ref{proc1} combined with the construction from Algorithm \ref{alg:construct}. To obtain meaningful results, we impose some necessary yet mild moment conditions on dyadic errors. First, rewrite the covariate vector $\DD$ as a linear model on $\XX$:
\begin{equation*}
\DD = \XX \beta^D + e\;,
\end{equation*}
where $e \in \R^{N}$ is an error vector associated with $\DD$. Note that this representation is without loss of generality, as one can always set $e\coloneqq \DD - \XX\beta^D$ for a given vector $\beta^D$. 
Next, we impose the following assumption on $e$ and $\ee$.
\begin{assumption}\label{asmp:model}
The error vectors $\ee$ and $e$ are from a random effects model:
\begin{align*}
    \eps_{ij} &= \eta_i+\xi_j+\zeta_{i j}\;,\quad e_{ij}= \eta^e_i+\xi^e_j+\zeta^e_{i j}\;.
\end{align*}
Here, $\eta_i\stackrel{iid}{\sim}\P_{\eta}$, $\xi_j\stackrel{iid}{\sim}\P_{\xi}$, $\zeta_{ij}\stackrel{iid}{\sim}\P_{\zeta}$ and $\eta^e_i\stackrel{iid}{\sim}\P_{\eta}^e,\xi^e_j\stackrel{iid}{\sim}\P_{\xi}^e, \zeta^e_{i j}\stackrel{iid}{\sim}\P_{\zeta}^e$, all with mean zero.
\end{assumption}

Assumption \ref{asmp:model} requires that the errors associated with $\DD$ and $\yy$ are both generated from random features, implying that the dyadic errors $\ee$ are doubly exchangeable. Additionally, we impose certain moment conditions on the distribution of random features.
\begin{assumption}\label{asmp:err}
The random features of $\{\eps_{ij}\}_{i, j\in[n]}$ in Assumption \ref{asmp:model} satisfy
\begin{align*}
    0&<\E|\eta_{1}|^{1+t} < \infty, \quad 0<\E|\xi_{1}|^{1+t} < \infty, \quad 0<\E|\zeta_{11}|^{1+t} < \infty\;, 
\end{align*}
for some constant $t\in[0, 1]$. For some $\kappa >0$, the random features of $\{e_{ij}\}_{i,j\in[n]}$ satisfy
\begin{gather*}
  0<\E|\eta^e_{1}|^2 < \infty,\quad 0<\E|\xi^e_{1}|^2 < \infty,\quad 0<\E|\zeta^e_{11}|^2 < \infty\\
  \E (\zeta^e_{11})^2 \ge \kappa (\E (\eta_1^e)^2 + \E (\xi_1^e)^2)\;.
\end{gather*}
\end{assumption}
Assumption \ref{asmp:err} requires that the random features in the error vector $e$ have finite second moments, and the idiosyncratic variance dominates the cluster-level variances by a factor $\kappa$. 
Crucially, our power analysis accommodates heavy-tailed components in dyadic errors $\ee$, and our result depends explicitly on the moment parameter $t$. Similar assumptions have been employed for power analysis of randomization tests in standard linear models \citep{wen2025residual}; here, we extend their assumption to doubly exchangeable errors and perform a refined power analysis. Concretely, the following theorem shows that the type II error of Procedure~\ref{proc1} vanishes asymptotically. Throughout the paper, $a_n = \Omega(b_n)$ means that $a_n \ge c b_n$ for some constant $c$ and all sufficiently large $n$, whereas $a_n = \omega(b_n)$ indicates $a_n / b_n \to \infty$.
\begin{theorem}\label{thm:power}
Suppose that $K$ is fixed and $n$ can be divided by $K+1$. Suppose that Assumptions \ref{asmp:model}, \ref{asmp:err} hold. Let $\mathrm{pval}$ be the randomization $p$-value from Procedure \ref{proc1} with the permutation set generated from Algorithm \ref{alg:construct}. In the asymptotic regime where $\beta$ and $p$ vary with $n$ in a way such that $n >(\max\{2/\kappa, 2\}+m)p$ for some constant $m>0$ and
$$
|\beta|=\Omega\left(n^{-\frac{t}{1+t}}\right)~\text {if}~t<1 \quad~\text {or}~ \quad|\beta|=\omega\left(n^{-\frac{1}{2}}\right)~\text{if}~t=1\;,
$$
we have $\lim_{n \to \infty} \P(\mathrm{pval}>\frac{1}{K+1}\mid \XX, \DD)=0$.
\end{theorem}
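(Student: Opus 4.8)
The plan is to show that, with probability tending to one (conditionally on $(\XX,\DD)$ and on the random permutation set $\GG$ produced by Algorithm~\ref{alg:construct}, which is harmless since $K$ is fixed), one has $\min_{1\le j\le K} a_j > \max_{1\le k\le K} b_k$. Inspecting the definition of the $p$-value, this event forces $\mathrm{pval}=\tfrac{1}{K+1}$, which is exactly the claimed conclusion. The first step is a signal/noise decomposition: using $V_k^\top\XX=V_k^\top\XX_{\pi_k,\sigma_k}=0$, the representation $\DD=\XX\beta^D+e$, and the model $\yy=\XX\gamma+\DD\beta+\ee$, we get $V_k^\top\DD=V_k^\top e$ and
\[
a_k=\bigl|\beta\,\|V_k^\top e\|^2+\langle V_k^\top e,\,V_k^\top\ee\rangle\bigr|,\qquad
b_k=\bigl|\beta\,\langle V_k^\top e,\,V_k^\top e_{\pi_k,\sigma_k}\rangle+\langle V_k^\top e,\,V_k^\top\ee_{\pi_k,\sigma_k}\rangle\bigr|.
\]
It therefore suffices to establish, uniformly over $k\le K$ and with probability $\to1$: (i) a self-energy lower bound $\|V_k^\top e\|^2\ge c\,n^2$ for a constant $c>0$; (ii) a cross-energy upper bound $|\langle V_k^\top e,V_k^\top e_{\pi_k,\sigma_k}\rangle|=o(n^2)$; and (iii) noise bounds $|\langle V_k^\top e,V_k^\top\ee\rangle|,\ |\langle V_k^\top e,V_k^\top\ee_{\pi_k,\sigma_k}\rangle|=o(|\beta|\,n^2)$. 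Given these, $\min_j a_j\ge |\beta|\bigl(\min_j\|V_j^\top e\|^2-\max_k|\langle V_k^\top e,V_k^\top e_{\pi_k,\sigma_k}\rangle|\bigr)-\max_j|\langle V_j^\top e,V_j^\top\ee\rangle|>\max_k b_k$.

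For (i), write $P_k=I-V_kV_k^\top$, an orthogonal projection of rank at most $2p$ onto $\mathrm{span}(\XX)+\mathrm{span}(\XX_{\pi_k,\sigma_k})$, so that $\|V_k^\top e\|^2=\|e\|^2-e^\top P_k e$. Since the entries of $e$ have finite variance under Assumption~\ref{asmp:err}, a law-of-large-numbers argument gives $\|e\|^2/n^2\to \E(\eta^e_1)^2+\E(\xi^e_1)^2+\E(\zeta^e_{11})^2>0$. The delicate part is to bound $e^\top P_k e$ away from $\|e\|^2$ by a constant fraction; this is where the dimension condition $n>(\max\{2/\kappa,2\}+m)p$ and the idiosyncratic-variance domination $\E(\zeta^e_{11})^2\ge\kappa(\E(\eta^e_1)^2+\E(\xi^e_1)^2)$ enter, controlling how much of the energy of $e$ can be absorbed by the rank-$\le 2p$ partialling-out projection. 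I expect this quantitative low-rank-removal estimate — adapting the analogous lemma of \citet{wen2025residual} to the doubly exchangeable design and carrying the $\kappa$-factor through — to be the main technical obstacle.

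For (ii), expand $\langle V_k^\top e,V_k^\top e_{\pi_k,\sigma_k}\rangle=\langle e,e_{\pi_k,\sigma_k}\rangle-e^\top P_k e_{\pi_k,\sigma_k}$ and use the structure of Algorithm~\ref{alg:construct}: because $n$ is divisible by $K+1$ and $\tilde\psi_k$ shifts each block of size $K+1$ by $k\in\{1,\dots,K\}$, both $\pi_k$ and $\sigma_k$ are fixed-point-free, so in the expansion of $\langle e,e_{\pi_k,\sigma_k}\rangle$ using $e_{ij}=\eta^e_i+\xi^e_j+\zeta^e_{ij}$ every term pairs independent mean-zero random features; a moment computation then gives $\langle e,e_{\pi_k,\sigma_k}\rangle=O_p(n^{3/2})=o(n^2)$, while $|e^\top P_k e_{\pi_k,\sigma_k}|$ is controlled by Cauchy–Schwarz together with $\|P_k\|\le1$ and $\|e\|^2=O(n^2)$. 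For (iii), condition on $(\XX,\DD,\GG)$: since $\GG$ is drawn independently of the data and $\E[\ee\mid\XX,\DD]=0$ after the WLOG centering, $\langle V_k^\top e,V_k^\top\ee\rangle$ is a conditionally mean-zero linear functional $\ell^\top\ee$ with $\ell=V_kV_k^\top e$, $\|\ell\|=\Theta(n)$. Decomposing $\ell^\top\ee=\sum_i\eta_i\ell_{i\cdot}+\sum_j\xi_j\ell_{\cdot j}+\sum_{ij}\zeta_{ij}\ell_{ij}$ and applying a von Bahr–Esseen / Marcinkiewicz–Zygmund moment inequality at exponent $1+t$ (valid by the $(1+t)$-th moment bounds on $\eta_1,\xi_1,\zeta_{11}$ in Assumption~\ref{asmp:err}), together with $\|\cdot\|_{\ell^{1+t}}\le\|\cdot\|_{\ell^2}$-type bounds on the row/column sums $\ell_{i\cdot},\ell_{\cdot j}$, yields $|\langle V_k^\top e,V_k^\top\ee\rangle|=O_p\!\bigl(n^{(2+t)/(1+t)}\bigr)$, with the row/column components of $\ee$ dominating; the same bound holds for the permuted version since $\ee_{\pi_k,\sigma_k}$ has the same conditional law properties.

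Finally, combine the pieces: $|\beta|\,n^2=\Omega(n^{(2+t)/(1+t)})$ under $|\beta|=\Omega(n^{-t/(1+t)})$ for $t<1$ (respectively $|\beta|\,n^2=\omega(n^{3/2})$ under $|\beta|=\omega(n^{-1/2})$ for $t=1$, where the finite-variance noise is of exact order $n^{3/2}$ so the strict $\omega$ is needed), so the signal gap $|\beta|\bigl(\min_j\|V_j^\top e\|^2-\max_k|\langle V_k^\top e,V_k^\top e_{\pi_k,\sigma_k}\rangle|\bigr)\gtrsim |\beta|\,n^2$ dominates the total noise with probability tending to one after a union bound over the fixed number $K$ of permutations. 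Hence $\min_j a_j>\max_k b_k$ with probability $\to1$, and therefore $\P(\mathrm{pval}>\tfrac{1}{K+1}\mid\XX,\DD)\to0$.
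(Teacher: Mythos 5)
Your overall architecture—reduce to $\min_j a_j>\max_k b_k$, split $a_k,b_k$ into a signal part driven by $e$ and a noise part driven by $\ee$, bound the noise at rate $n^{(2+t)/(1+t)}$ (resp.\ $n^{3/2}$ for $t=1$), and use the fixed-point-free structure of Algorithm~\ref{alg:construct} so that $\tr(P_k)=0$—parallels the paper's proof, and your step (iii) is essentially Lemma~\ref{lem:upp_bound} (deterministic-direction bound plus conditioning on $e$, using $e\indep\ee$ and $P_k\ee\stackrel{d}{=}\ee$). The genuine gap is in your steps (i)--(ii), i.e.\ the signal lower bound. You claim $|\langle V_k^\top e,V_k^\top e_{\pi_k,\sigma_k}\rangle|=o(n^2)$ and justify the projection correction $|e^\top Q_k e_{\pi_k,\sigma_k}|$ (with $Q_k=I-V_kV_k^\top$ of rank $\le 2p$) "by Cauchy--Schwarz with $\|Q_k\|\le1$ and $\|e\|^2=O(n^2)$". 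That argument only delivers $O(n^2)$, not $o(n^2)$, and the claim itself fails in the regime the theorem actually covers: $p$ is allowed to grow proportionally to $n$ (only $n>(\max\{2/\kappa,2\}+m)p$ is assumed), and then $\E\,e^\top Q_kP_ke$ contains a term of size $\E(\eta_1^e)^2\,\tr(Q_kP_kA_1A_1^\top)$, which can be of order $pn\asymp n^2$. For instance, if the columns of $\XX$ are indicators of the row-blocks of size $K+1$ used by Algorithm~\ref{alg:construct} (so $p=n/(K+1)$ and $\mathrm{span}(\XX)=\mathrm{span}(\XX_{\pi_k,\sigma_k})$ is $P_k$-invariant), the cross-energy term is genuinely of order $n^2$. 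So the decoupled strategy "self-energy $\ge c\,n^2$ minus cross-energy $o(n^2)$" cannot close the proof in the stated generality.

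What the paper does instead (Lemma~\ref{lem:low_bound}) is to lower-bound the \emph{combined} quantities $n^{-2}\bigl(e^\top V_jV_j^\top e \mp e^\top V_kV_k^\top P_k e\bigr)$ directly, keeping explicit constants throughout: it expands $e=A_1\eta^e+A_2\xi^e+\zeta^e$, shows the mixed terms are $o_p(1)$ via second-moment bounds on quadratic forms (Lemma~\ref{lem:quad_form}), and controls the diagonal/trace contributions of the cluster components by trace inequalities for products of projections and permutation matrices (Lemmas~\ref{lem:proj} and~\ref{lem:MP-ineq}), yielding cluster-level contributions bounded below by $-\tfrac{2p}{n}(\E(\eta_1^e)^2+\E(\xi_1^e)^2)$ and an idiosyncratic contribution $\tfrac{n^2-4p}{n^2}\E(\zeta_{11}^e)^2$. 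Only then do the domination condition $\E(\zeta_{11}^e)^2\ge\kappa(\E(\eta_1^e)^2+\E(\xi_1^e)^2)$ and the dimension condition $n>(\max\{2/\kappa,2\}+m)p$ combine to give the positive constant $\tfrac{\kappa^2 m}{2(2+\kappa m)}$. You correctly flagged the low-rank energy-removal estimate as the main obstacle for (i), but the same bookkeeping is needed for the permuted cross term, and the two must be handled jointly with explicit constants; as written, your (ii) is both unproven and, for $p\asymp n$, false. Your proposal would be correct if you additionally assumed $p=o(n)$, but not under the theorem's hypotheses.
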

Intuitively, Theorem \ref{thm:power} characterizes the minimal signal strength $|\beta|$ to guarantee a diminishing type II error, and a smaller threshold on $|\beta|$ indicates higher statistical power. Notably, the threshold increases as the moment parameter $t$ decreases, reflecting that heavy-tailed dyadic errors (smaller $t$) make the testing problem more difficult. Note that Assumptions \ref{asmp:model} and \ref{asmp:err} are used only in power analysis, and are not required for validity.

Interestingly, the minimal signal strength in Theorem \ref{thm:power} is identical to that for linear regression with $n$ observations under simple error exchangeability~\citep[Theorem 3]{wen2025residual}. 
This implies that testing in the dyadic regression model with $n$ clusters is ---in terms of detectability--- as hard as testing the linear model with $n$ observations. This result may be viewed as an analogue of asymptotic results for cluster-robust statistics, where the convergence rate depends on the number of clusters rather than the total number of observations \citep{davezies2018asymptotic, mackinnon2021wild}.
In our analysis, this phenomenon is a result of Assumption \ref{asmp:model}, which introduces cluster-level error components $(\eta,\xi)$ and thus reduces the effective sample size from $n^2$ to $n$. Under a stronger assumption that $\eta = \xi = 0$, the dyadic structure can be better exploited to obtain strictly better signal strengths as shown below.
\begin{theorem}\label{thm:power_iid}
Suppose the assumptions in Theorem \ref{thm:power} hold and that $\eta = \xi = 0$ with probability one. In the asymptotic regime where $\beta$ and $p$ vary with $n$ in a way such that $n>(\max\{2/\kappa, 2\}+m)p$ for some constant $m>0$ and
$$
|\beta|=\Omega\left(n^{-\frac{2t}{1+t}}\right)~\text {if}~t<1 \quad~\text {or}~ \quad|\beta|=\omega\left(n^{-1}\right)~\text{if}~t=1\;,
$$
we have $\lim_{n \to \infty} \P(\mathrm{pval}>\frac{1}{K+1}\mid \XX, \DD)=0$.
\end{theorem}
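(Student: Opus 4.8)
The plan is to re-run the proof of Theorem~\ref{thm:power} and re-derive only the single estimate that changes once the cluster components $\eta,\xi$ are set to zero. First, as in that proof, I would rewrite the rejection event $\{\mathrm{pval}\le\tfrac{1}{K+1}\}$ as $\{\min_{1\le k\le K}a_k>\max_{1\le k\le K}b_k\}$ and expand the statistics through the model $\yy=\XX\gamma+\DD\beta+\ee$, the auxiliary decomposition $\DD=\XX\beta^D+e$, and the orthogonality relations $V_k^\top\XX=V_k^\top\XX_{\pi_k,\sigma_k}=0$, which give $V_k^\top\DD=V_k^\top e$. For $d=1$ this yields
\[
a_k=\bigl|\beta\,\|V_k^\top e\|^2+\langle V_k^\top e,\,V_k^\top\ee\rangle\bigr|,\qquad b_k=\bigl|\beta\,\langle V_k^\top e,\,V_k^\top e_{\pi_k,\sigma_k}\rangle+\langle V_k^\top e,\,V_k^\top\ee_{\pi_k,\sigma_k}\rangle\bigr|,
\]
and a reverse triangle inequality shows that rejection is implied by
\[
|\beta|\Bigl(\min_k\|V_k^\top e\|^2-\max_k\bigl|\langle V_k^\top e,\,V_k^\top e_{\pi_k,\sigma_k}\rangle\bigr|\Bigr)>\max_k\bigl|\langle V_k^\top e,\,V_k^\top\ee\rangle\bigr|+\max_k\bigl|\langle V_k^\top e,\,V_k^\top\ee_{\pi_k,\sigma_k}\rangle\bigr|,
\]
with all maxima and minima over $k\in\{1,\dots,K\}$.

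The two quantities on the left depend only on $\XX$, on $e$, and on the fixed-point-free cyclic permutations produced by Algorithm~\ref{alg:construct}; the error vector $\ee$ does not enter. Since $e$ and its law are unchanged (Assumption~\ref{asmp:model} still governs $e$), the corresponding estimates from the proof of Theorem~\ref{thm:power} apply verbatim: conditionally on $\XX,\DD$, with probability tending to one the left-hand bracket is $\Omega(n^2)$, the permutation cross-term being kept below a small multiple of $\min_k\|V_k^\top e\|^2$ precisely by the sample-size condition $n>(\max\{2/\kappa,2\}+m)p$ together with the variance-dominance clause in Assumption~\ref{asmp:err}.

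The only genuinely new step is the bound on the right-hand ``noise'' terms, and this is where the improvement over Theorem~\ref{thm:power} occurs. With $\eta_i=\xi_i=0$ we have $\ee=(\zeta_{ij})$ with i.i.d.\ entries, so, writing $\ee_{\pi_k,\sigma_k}=\Pi_k\ee$ for the associated permutation matrix $\Pi_k$, each noise term is a linear functional $\sum_{i,j}w^{(k)}_{ij}\zeta_{ij}$ of i.i.d.\ errors whose weight vector is $V_kV_k^\top e$ (respectively $\Pi_k^\top V_kV_k^\top e$), and whose Euclidean norm equals $\|V_k^\top e\|=O_P(n)$ because $V_kV_k^\top$---hence also its product with an orthogonal matrix---is a contraction. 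A von Bahr--Esseen / Marcinkiewicz--Zygmund moment inequality, followed by the $\ell_{1+t}$-versus-$\ell_2$ H\"older bound over the $N=n^2$ coordinates, gives a conditional $(1+t)$-th moment of order $\|V_k^\top e\|^{1+t}N^{(1-t)/2}=O(n^2)$; since $K$ is fixed, the same order holds for the maximum over the $2K$ noise terms. Hence the right-hand side is of order $n^{2/(1+t)}$ when $t<1$---with the non-strict rate treated exactly as in Theorem~\ref{thm:power}---and of order $n$ when $t=1$ by the central limit theorem, which is why the $t=1$ case requires the strict rate $\omega(n^{-1})$. In Theorem~\ref{thm:power}, by contrast, the cluster components contribute a term of the larger order $n^{(2+t)/(1+t)}$, so removing them is exactly what replaces the exponent $t/(1+t)$ by $2t/(1+t)$.

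Combining the two displays, rejection holds with probability tending to one whenever $|\beta|\,n^2$ eventually dominates this noise scale, i.e.\ whenever $|\beta|=\Omega(n^{-2t/(1+t)})$ for $t<1$ or $|\beta|=\omega(n^{-1})$ for $t=1$; passing from this conditional statement to $\P(\mathrm{pval}>\tfrac{1}{K+1}\mid\XX,\DD)\to0$ is identical to the end of the proof of Theorem~\ref{thm:power}. I expect the main obstacle to be the noise estimate above: controlling a weighted sum of only $(1+t)$-integrable variables with data-dependent weights. The key enabling fact is the projection identity $\|V_kV_k^\top e\|=\|V_k^\top e\|$, which pins the $\ell_2$-norm of the weights to a quantity already bounded above; once that is in place the moment inequality is routine, and the rest is inherited from the proof of Theorem~\ref{thm:power}.
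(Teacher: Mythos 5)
Your proposal is correct and follows essentially the same route as the paper: the paper proves this theorem by re-running the proof of Theorem~\ref{thm:power} with Lemma~\ref{lem:upp_bound} replaced by Lemma~\ref{lem:upp_bound1}, which is exactly your improved noise estimate ($|e^\top V_k V_k^\top \ee|=o_p(n^{2/(1+t)})$ for $t<1$ and $O_p(n)$ for $t=1$, coming from $\ee$ reducing to the $n^2$ i.i.d.\ entries $\zeta_{ij}$), while the lower bound of Lemma~\ref{lem:low_bound} on the $e$-quadratic forms is unchanged since $\eta=\xi=0$ concerns only $\ee$. The one caveat is that your von Bahr--Esseen/H\"older/Markov calculation as written yields an $O_p(n^{2/(1+t)})$ bound, whereas the non-strict threshold $|\beta|=\Omega(n^{-2t/(1+t)})$ for $t<1$ needs the $o_p$ strengthening; the paper obtains this (as in Theorem~\ref{thm:power}) from the cited results of \citet{wen2025residual}, which is precisely the step you defer to being ``treated exactly as in Theorem~\ref{thm:power}.''
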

Direct calculation shows that the minimal signal strength in Theorem \ref{thm:power_iid} equals the square of the corresponding threshold in Theorem \ref{thm:power}; this is because the effective sample size increases from $n$ to $n^2$. See Section \ref{sec:proof_IPT_power} for detailed proofs.

\section{Dyadic Regression with Missing Data}\label{sec:missing}
In this section, we turn our attention to the problem of missing data. We 
remain focused on testing $H_0$ under the dyadic regression model \eqref{eq:dyadic}, but now only a subset of cells are observed:
\begin{equation}\label{eq:model_missing}
    (y_{i j}, x_{ij}, d_{ij}) ~\text{is observed if and only if}~ (i,j)~\text{satisfies that}~M_{ij} = 1~.
\end{equation}

In the expression above, $M \in \{0, 1\}^{n\times n}$ is a mask matrix, where $M_{ij} = 1$ indicates an observed cell $(i, j)$ and $M_{ij} = 0$ indicates a missing cell. In general, it is not possible to conduct valid inference if $M$ has an arbitrary distribution with respect to the observed data~\citep{little2019statistical}. To make progress, we make the following assumption on the missingness mechanism.
\begin{assumption}\label{A:missing}
The missing mask matrix $M$ satisfies 
$$
M \indep (\eps_{ij})_{i,j\in[n]} \mid\XX,\DD\;.
$$
\end{assumption}
In plain language, under Assumption~\ref{A:missing}, the missing mechanism is conditionally independent of dyadic errors, such that the double exchangeability is preserved conditional on the observed data. 
Note that this assumption allows the missing data mechanism to depend on $\XX,\DD$ in an arbitrary way. However, the assumption will be violated, in general, if 
$M$ depends on the observed outcomes $\yy$.

To test $H_0:\beta = 0$ under missing data, we first study why Procedure \ref{proc1} can fail. By way of illustration, in Figure \ref{fig:missing_ex} we visualize how row and column permutations in Procedure \ref{proc1} work under the missing mask. In the toy example with four clusters depicted in the figure, we observe that some permutations (e.g., permutation 1) become infeasible, as we permute observed entries with missing entries. However, other permutations (e.g., permutation 2) remain feasible because they only permute data between fully observed entries. 

\begin{figure}[ht]
  \centering
  \begin{tikzpicture}[
      every matrix/.style={
        matrix of nodes,
        nodes in empty cells,
        nodes={minimum size=9mm, draw, anchor=center},
        column sep=-\pgflinewidth,
        row sep=-\pgflinewidth
      },
      >=stealth,
      scale=0.95
    ]

    \matrix (M) at (-6.0,0) {
      1 & 0 & 0 & 0\\
      0 & 1 & 1 & 1\\
      0 & 1 & 1 & 1\\
      0 & 1 & 1 & 1\\
    };
    \node[above=2mm of M] {\bfseries Missing Mask};

    \matrix (Y) at (-2.0,0) {
      $y_{11}$ & $\mathrm{NA}$ & $\mathrm{NA}$ & $\mathrm{NA}$ \\
      $\mathrm{NA}$ & $y_{22}$ & $y_{23}$ & $y_{24}$\\
      $\mathrm{NA}$ & $y_{32}$ & $y_{33}$ & $y_{34}$\\
      $\mathrm{NA}$ & $y_{42}$ & $y_{43}$ & $y_{44}$ \\
    };
    \node[above=2mm of Y] {\bfseries Outcome Matrix};

    \draw[->] (0.8,0.5) -- (2.6,2.8) node[midway,above,sloped]{\small permutation 1};

    \draw[->] (0.8,-0.5) -- (2.6,-2.8) node[midway,below,sloped]{\small permutation 2};

    \matrix (G) at (6.0,2.5) {
      $y_{33}$ & $y_{32}$ & $\mathrm{NA}$ & $y_{34}$ \\
      $y_{23}$ & $y_{22}$ & $\mathrm{NA}$ & $y_{24}$\\
      $\mathrm{NA}$ & $\mathrm{NA}$ & $y_{11}$ & $\mathrm{NA}$\\
      $y_{43}$ & $y_{42}$ & $\mathrm{NA}$ & $y_{44}$ \\
    };
    \node[above=2mm of G] {\bfseries General Permutation (invalid)};
    \begin{pgfonlayer}{background}
      \foreach \i/\j in {2/1,1/2,3/2,2/3,1/4,4/1,3/4,4/3} {%
        \fill[pattern=north east lines, pattern color=black!70]
          (G-\i-\j.north west) rectangle (G-\i-\j.south east);
      }
    \end{pgfonlayer}
    \matrix (R) at (6.0,-2.5) {
      $y_{11}$ & $\mathrm{NA}$ & $\mathrm{NA}$ & $\mathrm{NA}$ \\
      $\mathrm{NA}$ & $y_{33}$ & $y_{32}$ & $y_{34}$\\
      $\mathrm{NA}$ & $y_{23}$ & $y_{22}$ & $y_{24}$\\
      $\mathrm{NA}$ & $y_{43}$ & $y_{42}$ & $y_{44}$ \\
    };
    \node[above=2mm of R] {\bfseries Restricted Permutation (valid)};
  \end{tikzpicture}
  \vspace{-1em}
  \caption{{\em Left:} The missing mask matrix $M$ induces outcomes with missing values~(marked ``$\mathrm{NA}$"). {\em Right:} Permutation 1 swaps columns 1 and 3 and rows 1 and 3, thus permuting observed values with missing values as indicated by the shaded cells. In contrast, permutation 2 swaps columns 2 and 3 and rows 2 and 3. This only permutes data from fully observed entries. Procedure~\ref{proc1} is infeasible under permutation 1, but feasible under permutation 2.}
  \label{fig:missing_ex}
\end{figure}
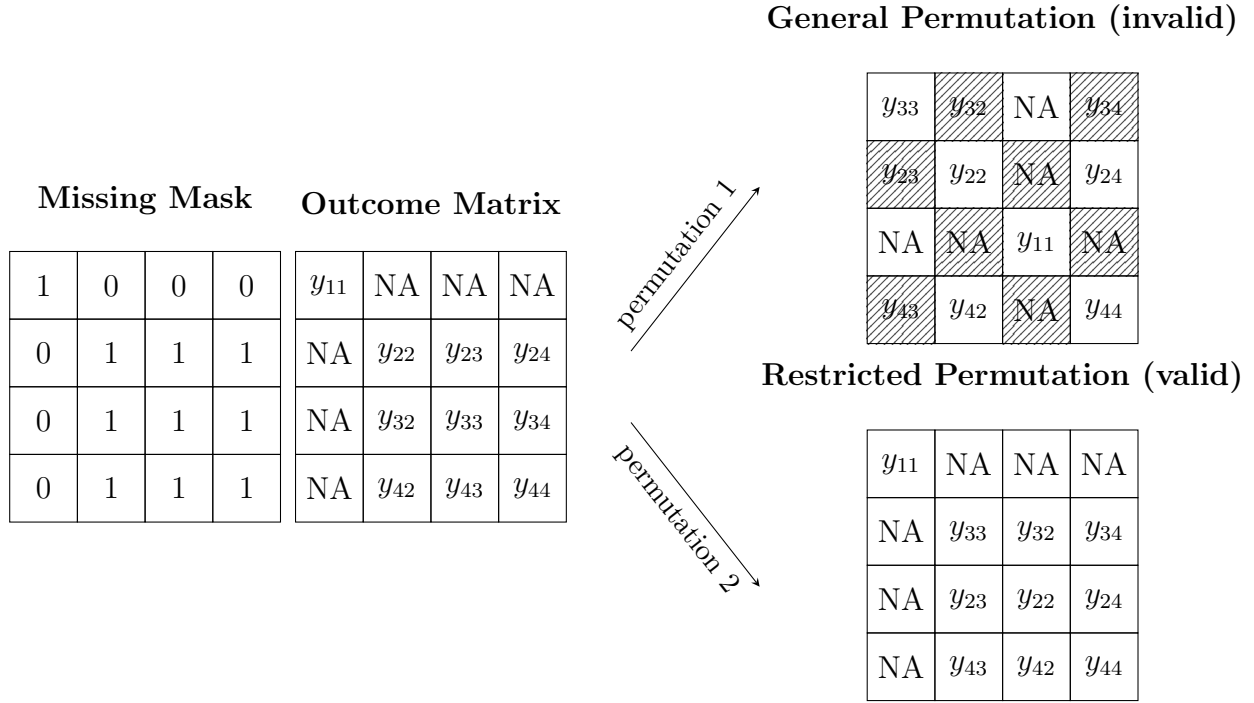

The following definition aims to formalize the intuition conveyed in Figure~\ref{fig:missing_ex}.
\begin{definition}\label{def:perm_block}
Let $M$ be a given mask matrix. For $I, J \subseteq [n]$, the pair of index subsets $I\times J$ is a fully observed block under $M$ if $M_{ij} = 1$~for all~$(i,j)\in I\times J$. A set of fully observed blocks $\FOB_M = \{I_q \times J_q\}_{q=1}^Q$ is disjoint if
\[
  I_{q}\,\cap\,I_{q'} = J_{q}\,\cap\,J_{q'}
  \;=\;\emptyset
  \quad\text{for all}~q\neq q'\;.
\]
\end{definition}

Given a disjoint set of fully observed blocks $\FOB_M = \{I_q \times J_q\}_{q=1}^Q$, we may decompose the original data into sub-arrays---each determined by 
rows in $I_q$ and columns in $J_q$---and then run Procedure~\ref{proc1} separately within each sub-array. This is possible because there are no missing data within each sub-array, and errors remain doubly exchangeable under Assumption~\ref{A:missing}. 
This idea is implemented in Procedure~\ref{proc:CIPT} below.

\begin{testprocedure}[Conditional Permutation Test with Missing Data]\label{proc:CIPT}
\begin{enumerate}
    \setlength{\itemsep}{-0.5em}
    \item Construct a set of fully observed blocks, $\FOB_M = \{I_q\times J_q\}_{q=1}^Q$, under mask $M$. 

    \item For each $q=1, \ldots, Q$:
    \begin{enumerate}
        \item Subset the data by rows $I_q$ and columns $J_q$. Denote the data sub-array as $(\XX_q, \DD_q, \yy_q)$.

        \item Apply Algorithm~\ref{alg:construct} on index sets $I_q, J_q$ to generate the permutation set $\GG_q = \{(\pi_{0,q}, \sigma_{0,q}), \ldots, (\pi_{K,q}, \sigma_{K,q})\}$ for data sub-array $q$.

    \end{enumerate}

    \item Construct $\GG =\{(\pi_0, \sigma_0), \ldots, (\pi_{K}, \sigma_{K})\}$ 
    by concatenating the sub-array permutations,
    $\pi_k = (\pi_{k,1}:\ldots:\pi_{k, Q})$,  for $k=0, \ldots, K$.\footnote{In this notation,  $\pi_k$ is a permutation of $[n]$, such that elements in $I_q$ are permuted according to $\pi_{k,q}$, for each $q=1, \ldots, Q$, while all other elements of $[n]$ remain fixed.} 
    Similarly, define $\sigma_k = (\sigma_{k,1}:\ldots:\sigma_{k, Q})$.
    \item Apply Procedure~\ref{proc1} to the stacked data $\{(\XX_q, \DD_q, \yy_q)\}_{q=1}^Q$ and the
    permutation set $\GG$ from step 3.
\end{enumerate}
\end{testprocedure}

 The following theorem establishes the finite-sample validity of Procedure~\ref{proc:CIPT}. 
 See Section \ref{sec:proof_CIPT} for a proof.
\begin{theorem}\label{thm:cond_valid}
Suppose that Assumptions \ref{asmp:double_ex} and~\ref{A:missing} hold for the dyadic regression model~\eqref{eq:dyadic}. Suppose also that $p<\widetilde N/2$, where 
$\widetilde N = \sum_{q=1}^Q |I_q|\cdot |J_q| := |\FOB_M|$. Then, under $H_0: \beta = 0$, the $p$-value obtained from Procedure \ref{proc:CIPT} satisfies:
\begin{equation*}
    \mathbb{P}(\mathrm{pval} \leq \alpha \mid M, \XX,\DD) \le \alpha~\text{for all}~\alpha \in[0,1],~n > 0\;.
\end{equation*}
\end{theorem}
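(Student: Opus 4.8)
\textbf{Proof plan for Theorem~\ref{thm:cond_valid}.}

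The plan is to reduce Procedure~\ref{proc:CIPT} to an instance of Procedure~\ref{proc1} applied to a single ``stacked'' dyadic-like data set, so that Theorem~\ref{thm:valid} can be invoked almost verbatim, with the only extra work being (i) to verify that the concatenated permutation set $\GG$ is still an algebraic group, and (ii) to check that the relevant exchangeability assumption holds conditional on $M$ (and on $\XX,\DD$). First I would set up notation: stack the $Q$ sub-arrays $(\XX_q,\DD_q,\yy_q)$ into matrices $(\widetilde\XX,\widetilde\DD,\widetilde\yy)$ of $\widetilde N = \sum_q |I_q||J_q|$ rows, and observe that the linear model~\eqref{eq:dyadic} restricted to observed cells still reads $\widetilde\yy = \widetilde\XX\gamma + \widetilde\DD\beta + \widetilde\ee$ with the same $\gamma,\beta$, since each observed cell obeys~\eqref{eq:dyadic}. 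The permuted covariates $\widetilde\XX_{\pi_k,\sigma_k}$ are then block-diagonal in the obvious sense: within sub-array $q$ they are permuted by $(\pi_{k,q},\sigma_{k,q})$, and the action of $\pi_k=(\pi_{k,1}:\cdots:\pi_{k,Q})$ on the stacked object is well-defined precisely because the $I_q$ (resp. $J_q$) are disjoint by Definition~\ref{def:perm_block}. I would note that $p < \widetilde N/2$ is exactly the dimensional condition needed for the orthonormal $V_k \in \R^{\widetilde N \times (\widetilde N - 2p)}$ with $V_k^\top\widetilde\XX = 0$ and $V_k^\top\widetilde\XX_{\pi_k,\sigma_k}=0$ to exist, which is what Step~1 of Procedure~\ref{proc1} requires.

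Next I would establish the group property of $\GG$. Since Algorithm~\ref{alg:construct} applied to each index set $I_q$ yields a cyclic-type permutation set $\{\pi_{k,q}\}_{k=0}^K$ that forms a group (this is asserted and proved in Section~\ref{sec:proof_IPT}), and since the $I_q$ are pairwise disjoint, the concatenated maps $\pi_k = (\pi_{k,1}:\cdots:\pi_{k,Q})$ act on disjoint coordinate blocks and fix everything else; hence composition is coordinatewise, $\pi_r\circ\pi_s = (\pi_{r,1}\circ\pi_{s,1}:\cdots:\pi_{r,Q}\circ\pi_{s,Q})$, and for each $q$ there is an index $k(q)$ with $\pi_{r,q}\circ\pi_{s,q}=\pi_{k(q),q}$. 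The one subtlety is that all sub-arrays use the same loop index $k$ in Algorithm~\ref{alg:construct}, so $k(q)=k(q')$ for all $q,q'$ whenever the algorithm is run with a common $K$ on each block — i.e. the shift amounts are synchronized across blocks — which gives a single $k$ with $\pi_r\circ\pi_s = \pi_k$ and likewise for $\sigma$. This synchronization is the key structural point, and I expect verifying it carefully to be the main obstacle: one must confirm that the cyclic group generated on each block is ``the same'' group $\mathbb{Z}/(K+1)$ acting by a common shift, so that concatenation lands back inside $\GG$ rather than in a larger product group.

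With these two facts in place, the validity argument is a direct transcription of the proof of Theorem~\ref{thm:valid}. Under Assumptions~\ref{asmp:double_ex} and~\ref{A:missing}, conditional on $(M,\XX,\DD)$ the errors $(\eps_{ij})$ are still doubly exchangeable: Assumption~\ref{A:missing} gives $M \indep (\eps_{ij}) \mid \XX,\DD$, so conditioning additionally on $M$ does not disturb the conditional law of the errors given $\XX,\DD$, and in particular the stacked error vector $\widetilde\ee$ is invariant in distribution under the within-block row/column permutations $(\pi_k,\sigma_k)\in\GG$ (each such permutation is a product of a row permutation and a column permutation of $[n]$, restricted to act within blocks, and Assumption~\ref{asmp:double_ex} is closed under these). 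Then, exactly as in~\eqref{eq:proof_concept}: under $H_0$, $\widetilde V_k^\top\widetilde\yy_{\pi_k,\sigma_k}=\widetilde V_k^\top\widetilde\ee_{\pi_k,\sigma_k}$ and $\widetilde V_k^\top\widetilde\yy = \widetilde V_k^\top\widetilde\ee$, so each $b_k$ is a fixed function $f(V_k\widetilde\ee_{\pi_k,\sigma_k})$ and $\min_j a_j = f^*(\widetilde\ee)$; the minorization inequality shows the feasible $p$-value dominates the infeasible one built from $\{f^*(\widetilde\ee_{\pi_k,\sigma_k})\}$, which is a genuine randomization $p$-value over the group $\GG$ and hence satisfies $\P(\mathrm{pval}\le\alpha\mid M,\XX,\DD)\le\alpha$ by the standard exchangeability-of-ranks argument. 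I would close by remarking that the conclusion holds for every $n>0$ since no asymptotics entered, and that $\widetilde N$ replacing $N$ is the only quantitative change.
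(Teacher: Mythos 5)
Your proposal is correct and follows essentially the same route as the paper: the paper's own proof simply observes that all permutations in Procedure~\ref{proc:CIPT} are confined to the fully observed sub-arrays, so the validity argument for Theorem~\ref{thm:valid} applies verbatim to the stacked data, with Assumption~\ref{A:missing} ensuring double exchangeability of the errors is preserved conditional on $M$. Your additional checks (the synchronized cyclic shifts making the concatenated set a group, and $p<\widetilde N/2$ guaranteeing the existence of $V_k$) are details the paper leaves implicit but are consistent with its argument.
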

To maintain finite-sample validity, Procedure~\ref{proc:CIPT} uses a subset of the data contained in the fully observed blocks $\FOB_M$, but this may result in efficiency loss. We will study such power loss in Section~\ref{sec:power_CRP} under  assumptions on the missingness mechanism.

\begin{remark}[Construction of $\FOB_M$]\label{rmk:fob}
Procedure~\ref{proc:CIPT} requires a set of fully observed blocks $\FOB_M$.
In Section \ref{sec:fob} of the Appendix, we propose a graph-theoretic construction of $\FOB_M$ and provide a concrete algorithm.
At a high level, we view the mask matrix $M$ as an adjacency matrix that defines edges between rows and columns; i.e., row $i$ is connected with column $j$ if and only if $M_{ij}=1$.
Under this representation, a fully observed block corresponds to a complete bipartite subgraph (a biclique), and we find a biclique decomposition of the graph by iteratively solving the maximum biclique problem. Although this problem is generally NP-hard~\citep{peeters2003maximum, zhang2014finding}, Procedure~\ref{proc:CIPT} remains valid even when the biclique maximization is approximate, e.g., using software packages such as the {\tt Bimax} method \citep{prelic2006systematic}.

Finally, we note an interesting connection to randomization tests that rely on biclique decompositions under interference~\citep{puelz2021graph}. In that setting, a biclique represents a subset of the data for which a weak null hypothesis reduces to a sharp one, which is analogous to how a biclique in our setting represents a subset of the data with no missing information. 
\end{remark}

\subsection{Power Analysis of Procedure~\ref{proc:CIPT}}\label{sec:power_CRP}
Similar to Section \ref{sec:power}, we focus on the $d = 1$ setting of testing a single coefficient. The power properties of Procedure~\ref{proc:CIPT} necessarily depend on the missing-data mechanism. Because this mechanism can be arbitrary, we impose additional assumptions to enable theoretical analysis. Specifically, we assume that the missing cells are generated independently with missingness probability \(1 - \rho\), corresponding to the {\em missing completely at random} (MCAR) scenario of~\citet{little2019statistical}.  
We therefore proceed with the following assumption.

\begin{assumption}\label{A:random-graph}
The design graph $G$ is generated from mask $M$, where $\{M_{ij}\}_{i,j\in[n]}$ are independent Bernoulli variables with probability $\P(M_{ij}=1) = \rho$.
\end{assumption}
The type of random bipartite graphs related to Assumption~\ref{A:random-graph} has been widely studied in random graph theory \citep{frieze2015introduction}, and can be viewed as a natural generalization of Erdős–R\'{e}nyi-type graphs~\citep{erdos1960evolution}. Under this random graph model, we establish the following result on the size of the largest biclique, which is a determinant factor for the test power.
\begin{proposition}\label{thm:biclique}
Under Assumption \ref{A:random-graph}, with probability converging to one, the largest biclique in $G$ has size $s\times s$ with 
\begin{equation*}
    s = \Omega\Bigl( \min\Bigl\{\frac{\rho}{1 - \rho}\log n, \frac{n^{1/3}}{\log n}\Bigr\}\Bigr)\;.
\end{equation*}
\end{proposition}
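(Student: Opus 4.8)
The plan is to analyze the largest biclique in the random bipartite graph $G$ of Assumption~\ref{A:random-graph} by a first-moment (union bound) argument for the upper structure and a second-moment / greedy argument for existence, mirroring the classical analysis of the clique number of $\ER$ graphs but adapted to the bipartite, possibly sparse-to-dense regime governed by $\rho$. Fix a target size $s$ and let $N(s)$ denote the number of $s\times s$ bicliques, i.e. pairs $(I,J)$ with $|I|=|J|=s$ and $M_{ij}=1$ for all $(i,j)\in I\times J$. Then $\E N(s) = \binom{n}{s}^2 \rho^{s^2}$. The first step is to identify the threshold $s^\star$ at which $\E N(s)$ transitions from large to small: taking logarithms, $\log \E N(s) \approx 2s\log n - s^2 \log(1/\rho)$, which is positive precisely when $s \lesssim \frac{2\log n}{\log(1/\rho)}$. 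Since $\log(1/\rho) = -\log(1-(1-\rho)) \asymp \frac{1-\rho}{\rho}$ when $\rho$ is bounded away from $0$ (and more carefully for $\rho$ near $1$), this yields the first branch $s = \Omega\!\big(\frac{\rho}{1-\rho}\log n\big)$ of the claimed bound.

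For the existence direction (the lower bound we actually want), a bare second-moment bound on $N(s)$ is delicate because of the strong positive correlations among overlapping biclique indicators, so instead I would use a greedy/iterative construction: repeatedly extract a common neighborhood. Concretely, pick a set $I_1$ of rows of size $k$; the set of columns adjacent to all of $I_1$ has size stochastically $\mathrm{Binomial}(n,\rho^k)$, concentrated around $n\rho^k$; then among those columns pass to rows adjacent to all of them, etc. Choosing $k \asymp \frac{\rho}{1-\rho}\log n$ keeps $n\rho^k = n^{1-o(1)}$ (or at least polynomially large), and a careful two- or three-round version of this argument, combined with Chernoff bounds at each step and a union bound over the $\binom{n}{k}$ choices at the first step, produces a biclique of size $\Omega\!\big(\frac{\rho}{1-\rho}\log n\big)$ with probability $\to 1$. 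The second term $\frac{n^{1/3}}{\log n}$ in the minimum handles the regime where $\rho$ is very close to $1$: there $\frac{\rho}{1-\rho}\log n$ could exceed what the finite graph can support, and one must cap $s$; I expect this cap comes out of requiring $\E N(s) \to \infty$ together with a variance/concentration bound that only controls correlations up to $s \lesssim n^{1/3}/\log n$ — essentially the point where the number of ``near-overlapping'' pairs of candidate bicliques stops dominating.

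The main obstacle I anticipate is the lower-bound existence proof in the dense regime ($\rho$ bounded away from $0$, or approaching $1$): the second-moment method on $N(s)$ fails near the threshold because $\E N(s)^2 \gg (\E N(s))^2$ due to bicliques sharing many rows or columns, so one cannot simply invoke Paley–Zygmund. The fix is to either (i) restrict to a carefully chosen sub-collection of ``well-separated'' candidate bicliques on which the second moment is tamed, or (ii) run the greedy common-neighborhood argument above, which sidesteps correlations entirely at the cost of losing a constant factor in $s$ (which is harmless for an $\Omega(\cdot)$ statement). I would go with the greedy argument. A secondary technical point is tracking the $\log(1/\rho)$ versus $\frac{1-\rho}{\rho}$ comparison uniformly as $\rho$ ranges over $(0,1)$ possibly depending on $n$; this is a routine but necessary case split (bounded $\rho$, $\rho\to 1$, $\rho\to 0$) using $-\log(1-x)\ge x$ and $-\log(1-x)\le x/(1-x)$. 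The upper bound (ruling out bicliques larger than the stated $s$, which justifies ``largest biclique has size $s\times s$'') follows cleanly from Markov's inequality applied to $N(s')$ for $s'$ slightly above the threshold, so no real difficulty there.
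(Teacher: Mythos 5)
Your proposal is correct, but it takes a genuinely different route from the paper. The paper proves the lower bound by applying Janson's inequality to the count $X_s$ of $s\times s$ bicliques (Lemma~\ref{lem:graph_low}): it computes $\E X_s=\binom{n}{s}^2\rho^{s^2}$ and the overlap term $\bar\Delta$ over pairs of intersecting bicliques, shows $\bar\Delta/(\E X_s)^2=O(s^6/n^2)\to0$, and this overlap computation is precisely what forces the cap $s=o(n^{1/3})$; the statement then follows by taking $s=\min\{\tfrac13\rodds\log n,\;n^{1/3}/\log n\}$. Your greedy common-neighborhood idea sidesteps the correlation problem you worry about and is in fact simpler than you anticipate: fix a \emph{single deterministic} set $I$ of $s=\lceil\min\{\tfrac12\rodds\log n,\;n^{1/3}/\log n\}\rceil$ rows (if this minimum is below $1$ there is nothing to prove); the number of columns adjacent to all of $I$ is $\mathrm{Binomial}(n,\rho^{s})$, and since $-\log\rho\le(1-\rho)/\rho=1/\rodds$ for every $\rho\in(0,1)$ we get $\rho^{s}\ge n^{-1/2}$, hence mean at least $n^{1/2}\gg n^{1/3}\ge s$; one Chernoff bound then gives at least $s$ common neighbors with probability tending to one, and any $s$ of them together with $I$ form the required $s\times s$ biclique. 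So a single round suffices, and the union bound over the $\binom{n}{k}$ first-step choices is superfluous (an existence claim needs only one witness; the multi-round iteration is likewise unnecessary). Two further points of comparison: since the proposition asserts only an $\Omega(\cdot)$ lower bound, the Markov-type upper bound you sketch is not needed; and in your approach the $n^{1/3}/\log n$ term is not intrinsic---the common-neighborhood construction yields even larger bicliques when $\rho\to1$ rapidly---whereas in the paper it is an artifact of the $\bar\Delta$ computation, so matching the stated minimum is all that is required. In short, the paper's route stays within the classical clique-number machinery and explains its own $n^{1/3}$ cap, while your argument is more elementary, uniform in $\rho$, and strictly stronger in the dense regime.
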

The rate of the largest biclique depends on $\rho / (1 - \rho)$, the probability odds of observing any data cell---let $r_{\mathrm{odds}} \coloneqq \rho/ (1-\rho)$. Proposition \ref{thm:biclique} is motivated by the classical literature on the maximum clique size in random graphs, and its proof relies on concentration bounds on clique counts. More concretely, the first component ($\log n$) is consistent with the classical result that the maximum clique in an \ER~graph grows at the order of $\log n$ \citep{matula1976largest}; additionally, we quantify how the clique size grows with $\rodds$. The second component, to our best knowledge, is a new result characterizing the limiting behavior of the maximum clique, with the rate $n^{1/3}$ stemming from the technical proof.

Next, we apply Proposition~\ref{thm:biclique} to characterize the statistical power of the conditional permutation test of Procedure~\ref{proc:CIPT}.
\begin{theorem}\label{thm:power_miss}
Suppose that $K$ is fixed and that Assumptions \ref{asmp:double_ex}-\ref{asmp:err}, \ref{A:random-graph} hold. Let $\mathrm{pval}$ be the randomization $p$-value from Procedure \ref{proc:CIPT} on a single permutation block from the largest biclique of the design graph. 
In the asymptotic regime where $\beta$ and $p$ vary with $n$ in a way such that $s>(\max\{2/\kappa, 2\}+C)p$ for some constant $C>0$, $s$ is divisible by $K+1$, and
$$
|\beta|=\Omega\left(s^{-\frac{t}{1+t}}\right)~\text {if}~t<1 \quad~\text {or}~ \quad|\beta|=\omega\left(s^{-\frac{1}{2}}\right)~\text{if}~t=1\;,
$$
we have $\lim_{n \to \infty} \P(\mathrm{pval} > \frac{1}{K+1} \mid \XX,\DD )=0$.
\end{theorem}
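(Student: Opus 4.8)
\textbf{Proof plan for Theorem~\ref{thm:power_miss}.}
The plan is to combine Proposition~\ref{thm:biclique}, which controls the size of the largest biclique under the \ER-type missingness of Assumption~\ref{A:random-graph}, with the power guarantee of Theorem~\ref{thm:power} applied to the sub-array indexed by that biclique. First I would condition on the design graph $G$ (equivalently, on $M$) and work on the high-probability event $\mathcal{E}_n$ of Proposition~\ref{thm:biclique}, on which the largest biclique has dimensions $s\times s$ with $s = \Omega(\min\{\rodds\log n,\, n^{1/3}/\log n\})$. On $\mathcal{E}_n$, the sub-array $(\XX_q,\DD_q,\yy_q)$ extracted from this biclique is a fully observed dyadic regression problem with $s$ row clusters, $s$ column clusters, and one observation per cell, so Procedure~\ref{proc:CIPT} restricted to this single block is exactly Procedure~\ref{proc1} run on an $s\times s$ dyadic array.

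The second step is to check that the hypotheses of Theorem~\ref{thm:power} transfer to this sub-array with $n$ replaced by $s$. Assumption~\ref{asmp:model} (random-effects structure for $\ee$ and $e$) is closed under taking sub-arrays: restricting the doubly-indexed random features $\eta_i,\xi_j,\zeta_{ij}$ and $\eta^e_i,\xi^e_j,\zeta^e_{ij}$ to rows in $I_q$ and columns in $J_q$ preserves the i.i.d.\ structure and the mean-zero property, and likewise the moment conditions of Assumption~\ref{asmp:err} (including the $\kappa$-domination of the idiosyncratic variance) are inherited verbatim since they are statements about the marginal laws $\P_\eta,\P_\xi,\P_\zeta$, etc. Here Assumption~\ref{A:missing} is what makes this legitimate: because $M \indep (\eps_{ij}) \mid \XX,\DD$, conditioning on the selected biclique does not distort the conditional law of the errors, so double exchangeability—and in fact the full random-effects representation—is retained on the sub-array. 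The hypothesis $s>(\max\{2/\kappa,2\}+C)p$ and divisibility of $s$ by $K+1$ are imposed directly in the statement, matching the requirement $n>(\max\{2/\kappa,2\}+m)p$ of Theorem~\ref{thm:power}, and the signal condition $|\beta| = \Omega(s^{-t/(1+t)})$ (resp.\ $\omega(s^{-1/2})$) is exactly the Theorem~\ref{thm:power} threshold at sample size $s$.

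The third step is to push the limit through. Theorem~\ref{thm:power} gives $\P(\mathrm{pval} > \tfrac{1}{K+1}\mid \XX_q,\DD_q) \to 0$ as $s\to\infty$; since $s\to\infty$ on $\mathcal{E}_n$ whenever $\rho$ is bounded away from $0$ (so $\rodds\log n\to\infty$) or more generally whenever $\min\{\rodds\log n, n^{1/3}/\log n\}\to\infty$, and since $\P(\mathcal{E}_n)\to 1$, a union bound gives
\begin{equation*}
\P\Bigl(\mathrm{pval} > \tfrac{1}{K+1}\,\Big|\, \XX,\DD\Bigr) \;\le\; \P(\mathcal{E}_n^c \mid \XX,\DD) + \E\bigl[\mathbbm{1}_{\mathcal{E}_n}\,\P(\mathrm{pval} > \tfrac{1}{K+1}\mid \XX,\DD, M)\,\big|\,\XX,\DD\bigr] \;\longrightarrow\; 0\;,
\end{equation*}
where the inner probability vanishes on $\mathcal{E}_n$ by Theorem~\ref{thm:power} and bounded convergence handles the expectation. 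I expect the main obstacle to be a bookkeeping subtlety rather than a deep one: Proposition~\ref{thm:biclique} is stated unconditionally on the graph, whereas Theorem~\ref{thm:power} conditions on $(\XX,\DD)$, so one must be careful that the biclique chosen is a function of $M$ only (guaranteed by Assumption~\ref{A:missing}) and that the "conditional-on-$(\XX,\DD)$" power statement can be integrated against the law of $M$ without circularity—in particular that the selected index sets $I_q,J_q$ do not depend on $\ee$. A secondary point worth spelling out is that $s$ is itself random (it depends on $M$), so the application of Theorem~\ref{thm:power} should be read along the random subsequence $s = s(n) \to \infty$, which is fine since the convergence in Theorem~\ref{thm:power} is monotone-free and holds for any sequence of sample sizes tending to infinity.
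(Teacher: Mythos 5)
Your proposal is correct and follows essentially the same route as the paper's proof: condition on the high-probability event from Proposition~\ref{thm:biclique} that the largest biclique has size at least $s$, note that this event (and the selected block) depends only on $M$ and hence, via Assumption~\ref{A:missing}, does not distort the error law, and then apply Theorem~\ref{thm:power} to the fully observed $s\times s$ sub-array, absorbing the complement event into an $o(1)$ term. Your additional remarks on the sub-array inheriting Assumptions~\ref{asmp:model}--\ref{asmp:err} and on $s$ being random are details the paper leaves implicit, but they do not change the argument.
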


Theorem~\ref{thm:power_miss} follows directly from Theorem~\ref{thm:power} by replacing the sample size \(n\) with the largest biclique size \(s\) defined in Proposition~\ref{thm:biclique}.
Although Procedure~\ref{proc:CIPT} allows a sequence of disjoint bicliques in practice, we condition on a single largest clique to simplify the analysis. 
Under missing data, the minimum detectable signal strength scales with the missingness factor captured by \(s\).
To better understand these results, consider the following scenarios:

\begin{enumerate}[(i)]
    \item {\bf Constant missingness probability.}  
    When \(\rho \in (0,1)\) is fixed, the largest biclique has size of order \(\log n \). 
    This is a well-established result in random graph theory~\citep{frieze2015introduction}.  
    Consequently, our test requires exponentially larger sample sizes to achieve a vanishing type~II error.  
    That is, loosely speaking, to detect a signal of magnitude \(|\beta| = \epsilon\), we require \(e^{1/\epsilon}\) data points. 
    This illustrates the fundamental impact of MCAR data on efficiency loss.  

    \item {\bf Diminishing missingness probability I.}  
    Suppose that \(1-\rho \to 0\) and, in addition, \(r_{\mathrm{odds}} = O(n^{1/3}/\log^2 n)\); that is, the odds of observation increase slowly at this rate.  
    Then, the biclique size satisfies
    \[
    s \sim r_{\mathrm{odds}}\log n\;.
    \]
    In other words, \(s\) grows proportionally to the odds of observation up to a logarithmic factor.  
    Compared with the first case, \(s\) can grow polynomially in \(n\),  
    and thus the minimum detectable signal can decay at a polynomial rate, though not necessarily at the canonical \(n^{-1/2}\) rate.

    \item {\bf Diminishing missingness probability II.}  
    Suppose that \(1 - \rho \to 0\) and \(r_{\mathrm{odds}} > n^{1/3}/\log^2 n\),  
    corresponding to rapidly increasing odds of observation.  
    Then, the biclique size saturates at order \(n^{1/3}/\log n\).  
    Beyond this threshold, further increases in the odds of observation do not enlarge the biclique size.
\end{enumerate}

We note that our power analysis under missing data is novel and potentially of independent interest.  
Prior work on random bipartite graphs has analyzed biclique sizes primarily under the assumption of a constant edge probability~\(\rho\)~\citep{frieze2015introduction, matula1976largest}.  
Our results extend this setting in two directions.  
First, we establish a conceptual link between missing-data mechanisms and random-graph theory, which proves useful for conducting power analysis under missingness.  
Second, we refine the graph-theoretic analysis by allowing \(\rho\) to vary with \(n\), yielding a more complete characterization of how missingness affects biclique size—namely, the second and third regimes discussed above.  
Additional details and technical proofs are provided in Section~\ref{sec:proof_CIPT_power} of the Appendix.

\section{Generalization to Multi-way Clustering}\label{sec:multicluster}
In this section, we extend our methodology to settings with multi-way clustered errors.  
The extension is straightforward within our framework because the key argument for finite-sample validity, given in Equation~\eqref{eq:proof_concept}, does not rely on the specific structure of double exchangeability.  
Accordingly, our discussion focuses on suitable modifications of the invariance property~\eqref{eq:ex1} for each new domain, while implementation follows the same structure as Procedures~\ref{proc1} and~\ref{proc:CIPT}.

As a starting point, consider the following three-way regression model:
\begin{equation}\label{eq:model_threeway}
    y_{ijl} = x_{ijl}^\top \gamma + d_{ijl}^\top\,\beta + \eps_{ijl}, 
    \quad i \in [m],~j \in [n],~l \in [\ell]\;.
\end{equation}
In the trade context, $i$ and $j$ may denote importer and exporter countries, and $l$ may denote industry, product category, or time in the case of repeated observations.  
Beyond the three-way structure, note that model~\eqref{eq:model_threeway} generalizes the baseline dyadic model~\eqref{eq:dyadic} by allowing for clusters of different sizes.  In the following sections, we discuss several representative scenarios.
\subsection{ Random effects}
Suppose that the errors obey a full random-effects structure:
\[
 \eps_{ijl} = \eta_i + \xi_j + \zeta_l + u_{ijl}\;,
\]
where, following~\eqref{eq:random_eff1}, the random effects $\{\eta_i\}_{i\in[m]}, \{\xi_j\}_{j\in[n]}, \{\zeta_l\}_{l\in[\ell]}$ and the idiosyncratic errors $\{u_{ijl}\}_{i\in[m],j\in[n],l\in[\ell]}$ are i.i.d. within each family and mutually independent. Then, the errors satisfy three-way exchangeability: for any permutations $\pi, \sigma, \psi$ on $[m]$, $[n]$, and $[\ell]$, we have
\begin{equation}\label{eq:INVA}
    (\eps_{ijl})_{i\in[m],j\in[n],l\in[\ell]} \stackrel{d}{=} (\eps_{\pi(i)\sigma(j)\psi(l)})_{i\in[m],j\in[n],l\in[\ell]} \mid \XX,\DD\;,\tag{InvA}
\end{equation}
where $\XX, \DD$ denote the stacked matrix (or vector) for $x_{ijl}$ and $d_{ijl}$, respectively. To test $H_0\!:\beta=0$ under this invariance, we can adapt our main test as follows:
\begin{enumerate}[(i)]
    \item Apply Algorithm~\ref{alg:construct} three times to construct a three-way permutation group $\GG$.
    \item Apply Procedure~\ref{proc1} using the group $\GG$.
\end{enumerate}
The resulting permutation test is finite-sample valid.

\subsection{Panel models}
In panel models, one clustering dimension (e.g., $l$) corresponds to time:
\[
y_{ijt} = x_{ijt}^\top \gamma + d_{ijt}^\top\,\beta + \eps_{ijt}\;.
\]
In the trade model, where $i$ and $j$ denote importer and exporter countries, $t$ may represent the year of observation.  
In such settings, full three-way exchangeability~\eqref{eq:INVA} is implausible because errors are typically not exchangeable over $t$ due to autocorrelation.  
Suppose instead that the errors remain exchangeable across the first two dimensions:
\begin{equation}\label{eq:INVB}
    (\eps_{ijt})_{i\in[m],j\in[n]} \stackrel{d}{=} (\eps_{\pi(i)\sigma(j)t})_{i\in[m],j\in[n]} \mid \XX,\DD\;.\tag{InvB}
\end{equation}
This holds, for instance, under the model 
$\eps_{ijt} = \eta_i + \xi_j + \zeta_t + u_{ijt}$, 
where $\eta_i, \xi_j$ are random effects as before, but $\zeta_t$ represents an arbitrary time trend.  
To test the null hypothesis for $\beta$ under~\eqref{eq:INVB}, we proceed as follows:
\begin{enumerate}[(i)]
    \item Apply Algorithm~\ref{alg:construct} twice—once for each of the index sets $[m]$ and $[n]$—to construct a two-way permutation subgroup $\GG$.
    \item Apply Procedure~\ref{proc1} with respect to $\GG$.
\end{enumerate}
The resulting permutation test is finite-sample valid because it applies the baseline dyadic test within each time period $t$, thereby controlling for the unknown time trend.  
To our knowledge, this is the first instance of a finite-sample valid test for testing $\beta = 0$ under exchangeable errors \eqref{eq:INVB} in panel data models, which we discuss further in Section \ref{sec:conclusion}.

\subsection{Two-way layouts}  
In some settings, the number of observations per cell $(i,j)$ in model~\eqref{eq:model_threeway}, denoted $\ell_{ij}$, may vary across $i$ and $j$.  
In such cases, the above procedure cannot be applied directly, as it requires data blocks of equal size.

An alternative is to perform permutations only with respect to cluster $l$, effectively permuting data independently within each cell $(i,j)$.  
The full procedure is as follows:
\begin{enumerate}[(i)]
    \item For each cell $(i,j)$, apply Algorithm~\ref{alg:construct} on index set $[\ell_{ij}]$.  
    Denote the resulting permutation group as $\GG_{ij} = \{\pi_k^{ij}\}_{k=0}^K$, where $\pi_0^{ij} = \mathrm{Id}$.
    \item Apply Procedure~\ref{proc1} using $\GG$ as the permutation group, where $\GG$ is formed by concatenating $(\GG_{ij})_{i\in[m], j\in[n]}$ as in Procedure \ref{proc:CIPT}.
\end{enumerate}
The resulting permutation test is finite-sample valid, even under a relaxation of~\eqref{eq:INVA} in which exchangeability holds only with respect to cluster $l$.  
For example, the error structure $\eps_{ijl} = \eta_{ij} + \zeta_l + u_{ijl}$ satisfies this condition if $\eta_{ij}$ are arbitrary but $\zeta_l$ are i.i.d.  
This test is thus appropriate in settings where $l$ indexes independent replications, such as two-way layouts in randomized experiments~\citep{montgomery2017design}.

\subsection{Irregular designs} 
In certain two-way layouts, permuting data within each cell $(i,j)$ may not be feasible.  
For instance, if cluster $l$ denotes time, as in the panel case (B), then invariance~\eqref{eq:INVB} is not plausible, and cell-level permutations would lead to an invalid test.  
Moreover, if $d_{ijl}$ is fixed within a given cell $(i,j)$---that is, a dyad-level covariate---then within-cell permutations yield a trivial test with zero power.

A practical solution for such irregular designs is to combine the conditional permutation test for missing data (Section~\ref{sec:missing}) with the panel-data test from Case~(B).  For a fixed threshold $L_0$, define
\[
    M_{ij} = 
    \begin{cases}
        1 & \text{if } \ell_{ij} \ge L_0,\\[4pt]
        0 & \text{otherwise.}
    \end{cases}
\]
We then propose the following procedure:

\begin{enumerate}[(i)]
    \item Apply Algorithm~\ref{alg:biclique} using $M$ as the mask matrix to obtain a set of fully observed blocks, $\FOB_M$. For each cell $(i, j)$ in $\FOB_M$, randomly drop $\ell_{ij} - L_0$ observations to ensure there are exactly $L_0$ observations in each cell.
    \item Apply Procedure~\ref{proc:CIPT} to the remaining data using $\FOB_M$ from step 1 as input. 
\end{enumerate}

The schematic below visualizes the procedure through a $2\times 3$-cell toy example for $L_0=5$. The number in each cell $(i,j)$ represents the number of observations $\ell_{ij}$ in that cell. Step (i) identifies the active cells used to construct the fully observed blocks and removes unnecessary observations, as indicated by the shaded region. Step (ii) then applies Procedure~\ref{proc:CIPT} based on these fully observed blocks.
\begin{figure}[h]
  \centering
  \begin{tikzpicture}[
      every matrix/.style={
        matrix of nodes,
        nodes in empty cells,
        nodes={minimum size=8mm, draw, anchor=center},
        column sep=-\pgflinewidth,
        row sep=-\pgflinewidth
      }
    ]
    \matrix (c1) at (-5cm,0) {
      5 & 15 & 0 \\
      10 & 5 & 1 \\
    };
    \draw[->] (-3.5,0) -- (-2,0) node[midway,above,sloped]{\small Step (i)};
    \matrix (c2) at (-.5,0) {
      5 & 5 & 0 \\
      5 & 5 & 1 \\
    };
    \begin{pgfonlayer}{background}
      \foreach \i/\j in {1/1,1/2,2/1,2/2} {%
        \fill[pattern=north east lines, pattern color=black!70]
          (c2-\i-\j.north west) rectangle (c2-\i-\j.south east);
      }
    \end{pgfonlayer}
    
    \draw[->] (1,0) -- (2.5,0) node[midway,above,sloped]{\small Step (ii)};
    \matrix (c3) at (4, 0) {
      5 & 5 & 0 \\
      5 & 5 & 0 \\
    };
    \begin{pgfonlayer}{background}
      \foreach \i/\j in {1/1,1/2,2/1,2/2} {%
        \fill[pattern=north east lines, pattern color=black!70]
          (c3-\i-\j.north west) rectangle (c3-\i-\j.south east);
      }
    \end{pgfonlayer}
  \end{tikzpicture}
  \label{fig:perm_twoway}
\end{figure}

This procedure yields a finite-sample valid test by the same argument as in Theorem~\ref{thm:cond_valid}.  
However, the choice of $L_0$ introduces a power trade-off.  
Choosing a smaller $L_0$ retains more cells in $M$ but reduces within-cell sample size, leading to efficiency loss.  
Conversely, a larger $L_0$ ensures larger cell sizes but fewer eligible cells, again reducing efficiency.  
We therefore recommend tuning $L_0$ (e.g., via a grid search) to balance these trade-offs and minimize the loss of observations, thereby maximizing power as implied by Theorem~\ref{thm:power_miss}.  
A full power characterization as a function of $L_0$ remains an interesting theoretical question for future work. Additionally, the procedure involves random deletion of samples, so different runs may yield different $p$-values. As noted in Remark~\ref{rmk:random}, one may mitigate this variability by taking the median of multiple $p$-values.

Although our exposition focuses on two-way and three-way clustered errors, our procedures extend naturally to general multi-way clustering by constructing appropriate permutations along each clustering dimension. Missing data scenarios can be handled analogously using the conditional testing strategy described in Procedure \ref{proc:CIPT}. In sum, the proposed approach delivers a unified permutation-based testing framework under multi-way clustering.

\section{Simulation studies}\label{sec:simu}
In this section, we assess the type~I and type~II errors of our permutation tests for multi-way clustering and compare their performance against existing methods.

\subsection{Type I Error Control under Dyadic Regression}
We first evaluate the type~I error in the dyadic regression model \eqref{eq:dyadic} given $\beta = 0$. Additional results for more general settings---such as irregular designs---are reported in Section~\ref{sec:trust} of the Appendix and lead to similar conclusions. To construct doubly exchangeable errors that are not trivially i.i.d., we follow prior work \citep{mackinnon2021wild} and adopt a random feature model of the form
\begin{equation}\label{eq:simu}
    \mathrm{variable}_{ij} \;=\; \sigma_1 v_{1,i} \;+\; \sigma_2 v_{2,j} \;+\; v_{3,ij}\;,
\end{equation}
where $\mathrm{variable}_{ij}$ denotes a generic variable in the dyadic regression, and $v_{1,i}$, $v_{2,j}$, and $v_{3,ij}$ are i.i.d.\ random variables drawn from a specified distribution $\mathbb{P}_v$. We induce cluster correlations $\phi_1$ and $\phi_2$ in the first and second dimensions, respectively, by setting
\[
\sigma_1^2 \;=\; \frac{\phi_1}{1 - \phi_1 - \phi_2}\;, 
\qquad
\sigma_2^2 \;=\; \frac{\phi_2}{1 - \phi_1 - \phi_2}\;.
\]
The simulation setup is as follows:
\begin{itemize}
    \item $p = 3$ and $x_{ij} = (1, z_i, z_j)^\top$ where $\{z_i\}_{i\in[n]} \stackrel{iid}{\sim} \mathrm{Unif}([0, 2])$. We set the nuisance coefficients $\gamma = (0.5, 1, 1)^\top$ and vary $n = 25, 30, \dots, 45$.
    \item Generate $w_{ij}$ from the random feature model \eqref{eq:simu} with $\P_v = \mathcal{N}(0, 1)$ and $\phi_1 = \phi_2 = 0.4$. We consider dyadic covariates following (i) a lognormal distribution $d_{ij} = \exp(0.5 w_{ij})$, and (ii) a normal distribution $d_{ij} = w_{ij}$.
    \item Generate dyadic errors $\eps_{ij}$ from Model \eqref{eq:simu} with $\P_v = \mathcal{N}(0, 1)$. Set $\phi_1 = 0.05$ and vary $\phi_2 = 0.15, 0.9$ to test different degrees of error correlation.
\end{itemize}

In this setting, we assume no missing data and implement the permutation test in Procedure~\ref{proc1}. For brevity, we refer to our testing procedure as the
“invariant permutation test” (IPT), emphasizing that it depends on the invariance assumption in~\eqref{eq:ex1}. For comparison, we implement the methods analyzed in \citet{mackinnon2021wild}, namely $t_2$, $t_3$, and WCB. The $t_2$ and $t_3$ tests are $t$-tests using different variance estimators for dyadic regression, while WCB (Wild Cluster Bootstrap) is a bootstrap procedure built on $t_3$, proposed as a more robust alternative.

Table~\ref{tab:type1} reports the type~I errors under different simulation setups, demonstrating the robustness of our permutation test. Under normal covariates and $\phi_2=0.15$, most type~I errors are close to the nominal level, indicating that all tests maintain correct size control except for $t_3$, which exhibits finite-sample size inflation, consistent with the findings of \citet{mackinnon2021wild}.
Increasing $\phi_2$ from 0.15 to 0.9 causes $t_2$ and WCB to exhibit inflated type~I errors, while our IPT procedure remains valid. Note that IPT has empirical type I errors strictly below 5\%, as the minorization step yields mild conservative $p$-values.
For lognormal covariates in Panel B, we observe similar type~I error inflation in $t_2$, $t_3$, and WCB under different error correlations $\phi_2$.
\renewcommand{\arraystretch}{1.1}
\begin{table}[h]
    \centering
    \begin{tabular}{c| p{2em}p{2em}p{2em}p{2em}p{2em}|p{2em}p{2em}p{2em}p{2em}p{2em}}
    & \multicolumn{5}{c|}{$\phi_2 = 0.15$} 
    &\multicolumn{5}{c}{$\phi_2 = 0.9$} \\
    \hline
    & \multicolumn{5}{c|}{$n$} 
    &\multicolumn{5}{c}{$n$} \\
    \multicolumn{1}{c|}{} & 25 & 30 & 35 & 40 & 45 & 25 & 30 & 35 & 40 & 45\\
    \hline
    \multicolumn{5}{l}{Panel A: Normal Cov.} &  \multicolumn{5}{c}{}  \\
    \hline
    IPT & 1.18 & 1.57 & 1.49 & 1.68 & 1.99 & 1.16 & 1.45 & 1.56 & 1.43 & 2.06 \\ 
    $t_2$ & 3.08 & 3.04 & 3.63 & 3.64 & 3.73 & 6.16 & 6.01 & 6.04 & 5.66 & 6.01 \\ 
    $t_3$ & 7.74 & 6.88 & 7.46 & 6.56 & 6.48 & 7.84 & 7.29 & 7.23 & 6.61 & 6.76 \\ 
    WCB & 6.32 & 5.88 & 6.66 & 6.14 & 6.34 & 9.49 & 8.85 & 8.60 & 8.12 & 8.01 \\ 
    \hline
    \multicolumn{5}{l}{Panel B: Lognormal Cov.} & \multicolumn{5}{c}{}  \\
    \hline
    IPT & 1.73 & 1.70 & 1.68 & 1.83 & 2.50 & 1.16 & 1.37 & 1.52 & 1.35 & 1.90\\ 
    $t_2$ & 5.38 & 5.29 & 5.03 & 5.01 & 4.82 & 7.84 & 7.47 & 8.08 & 7.47 & 7.11\\ 
    $t_3$ & 14.72 & 13.95 & 12.95 & 12.29 & 11.57 & 11.64 & 10.83 & 11.50 & 10.36 & 9.98\\ 
    WCB & 5.12 & 4.82 & 4.81 & 4.97 & 5.09 & 7.23 & 6.76 & 7.16 & 7.22 & 6.98\\
    \end{tabular}
    \bigskip
    \caption{Type~I errors (\%) based on 10,000 simulations at the 5\% significance level.}
    \label{tab:type1}
\end{table}

\subsection{Test Power under Dyadic Regression.} 
Next, we examine the case where the null hypothesis $H_0:\beta = 0$ in model~\eqref{eq:dyadic} is false, and assess power across alternatives $\beta = 0.01, 0.02, \dots, 0.15$. The number of clusters is fixed at $n = 25$, and all other simulation details follow the previous section. To highlight performance under heavy-tailed covariates, we focus on the lognormal case.

As shown in Figure~\ref{fig:power}, the power curves of all methods are broadly similar, with larger values of $\phi_2$ corresponding to more challenging testing problems. While $t_2$ and $t_3$ exhibit slightly higher power across settings, their apparent gains are partly driven by the type~I error inflation observed in Table~\ref{tab:type1}, Panel B. In contrast, our permutation test attains comparable power while maintaining finite-sample validity. Additional simulation results can be found in Section \ref{sec:simu_supp} of the Appendix.

\begin{figure}[t!]
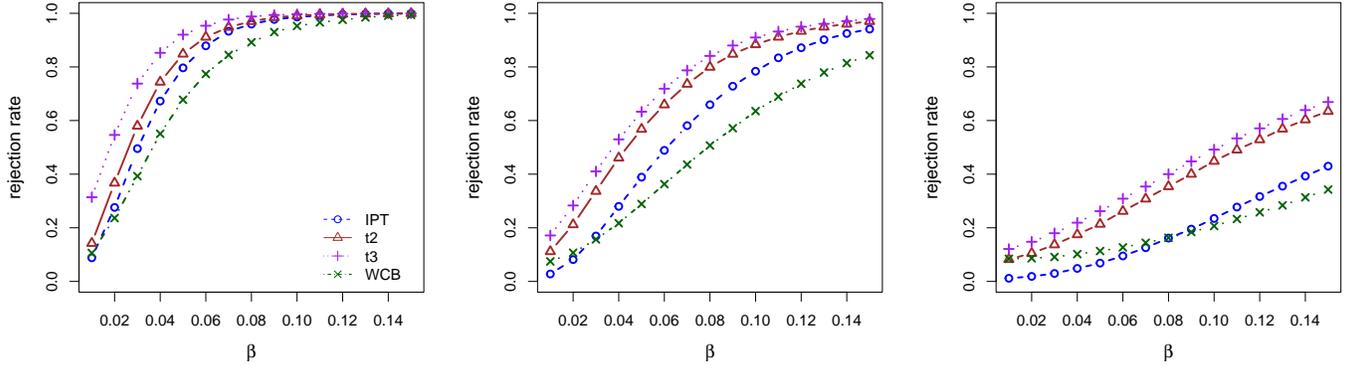

    \vspace{-7mm}
    \centering
    \subfloat[Lognormal Cov., $(\phi_1, \phi_2)=(0,0.1)$]{\includegraphics[width=.4\linewidth]{img/power_normal_0.0_0.1.pdf}}
    \subfloat[Lognormal Cov., $(\phi_1, \phi_2)=(0,0.5)$]{\includegraphics[width=.4\linewidth]{img/power_normal_0.0_0.5.pdf}}
    \subfloat[Lognormal Cov., $(\phi_1, \phi_2)=(0,0.9)$]{\includegraphics[width=.4\linewidth]{img/power_normal_0.0_0.9.pdf}} 
    \caption{Test power based on 10,000 simulations at the 5\% significance level.}
    \label{fig:power}
\end{figure}

\section{Application: Bilateral Trade Flows}\label{sec:trade}
In a seminal paper, \citet{silva2006log} analyzed bilateral trade flows to quantify the determinants of international trade. They started with the log-linearized version of the well-known gravity model \citep{tinbergen1962shaping}:
\begin{equation}\label{eq:trade}
\log \left(1 + \mathrm{trade}_{ij}\right)
= \beta_0
+ \beta_1 \log(\mathrm{GDP}_i)
+ \beta_2 \log(\mathrm{GDP}_j)
+ \beta_3 \log(\mathrm{dist}_{ij})
+ \dots
+ \eps_{ij}\;,
\end{equation}
where $\mathrm{trade}_{ij}$ denotes exports from country $i$ to country $j$. Regressors include the logarithms of exporter and importer GDP, bilateral distance, and dyadic characteristics such as shared borders, trade openness, and trade agreements.  
\citet{silva2006log} highlighted that dyadic errors in \eqref{eq:trade} may exhibit complex dependence structures, leading to potentially severe biases in OLS estimates. To address this, \citet{silva2006log, graham2020dyadic} applied Poisson pseudo–maximum likelihood (PPML) and nonlinear least squares (NLS) estimators, which directly fit the gravity model using nonlinear methods, though they rely on asymptotic justification.

Here, we apply our IPT procedure, which is finite-sample valid under double exchangeability (Assumption~\ref{asmp:double_ex}). In the gravity model~\eqref{eq:trade}, this assumption holds, for example, if the shocks $\eps_{ij}$ are sampled i.i.d.\ across countries. We compare our results with NLS and PPML later in the section. 
We begin by constructing confidence intervals for the coefficients in \eqref{eq:trade} using four methods: $t_2$, $t_3$, WCB, and our permutation test (IPT). The regressors and confidence intervals are displayed in Figure~\ref{fig:trade}. The intervals for $t_2$ and $t_3$ rely on asymptotic normality, whereas those for WCB and IPT are obtained via test inversion (Procedure~\ref{proc1}). 

Figure~\ref{fig:trade} shows that our permutation test and WCB produce wider confidence intervals than $t_2$ and $t_3$. This pattern aligns with our simulations, where $t_2$ and $t_3$ may achieve higher power but at the cost of inflated type~I errors. Notably, compared with asymptotic methods, our permutation-based inference yields wider intervals for the remoteness variables. This occurs because the remoteness covariates have the smallest effective variance, which naturally leads to wider confidence intervals. We illustrate this relationship using a plot of CI width versus effective variance in Section \ref{sec:trade_supp} of the Appendix.

In Table~\ref{tab:trade_variables} of Section \ref{sec:trade_supp}, we also summarize the variables identified as significant by each method. The last two columns report results from the NLS and PPML estimates in \citet{silva2006log}. Compared with $t$-tests, our permutation test identifies more variables as non-significant --- such as importer’s per-capita GDP and openness dummy --- consistent with NLS and PPML. Our test also classifies exporter’s per-capita GDP as non-significant, whereas NLS and PPML find it significant. Further comparisons and discussion can be found in Section~\ref{sec:trade_supp}.

\begin{figure}[h]
    \centering
    \includegraphics[width=0.9\linewidth]{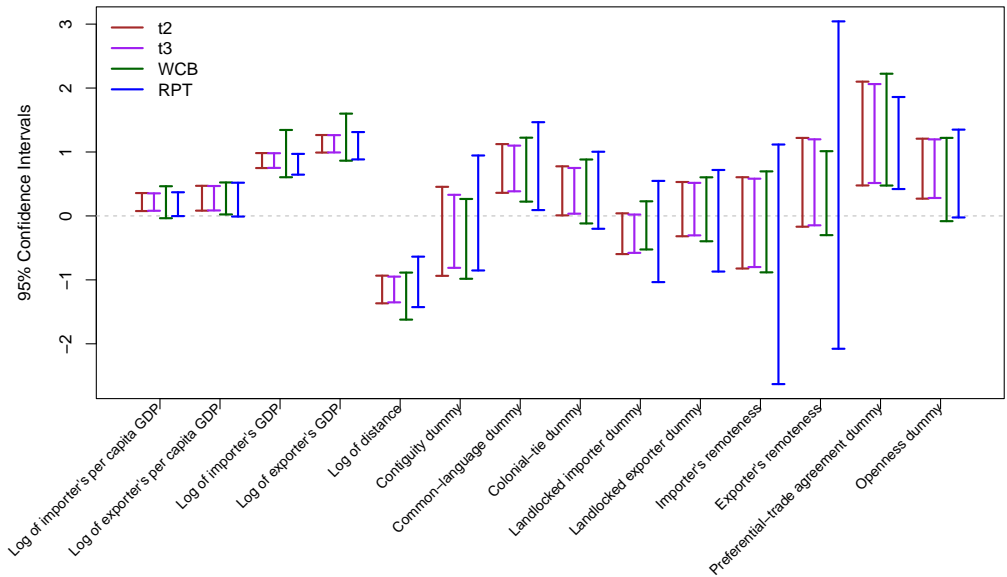}
    \caption{Confidence intervals for different variables in the regression model. We refer readers to \citet{silva2006log} for detailed variable descriptions.}
    \label{fig:trade}
\end{figure}

\section{Concluding remarks}\label{sec:conclusion}
In this paper, we propose permutation tests of significance that are finite-sample valid under a suitable double invariance assumption on the errors. This framework of invariance-based inference provides an attractive alternative to standard asymptotic methods, particularly in settings with complex dependence structures. As we have shown, however, such complexity inevitably entails a trade-off between finite-sample validity and statistical efficiency. For example, our procedure for irregular designs—see scenario (D) in Section~\ref{sec:multicluster}—relies on subsampling the original data, which can result in efficiency loss. Similarly, our conditional permutation tests for missing data require a pre-specified mask matrix, whose construction involves balancing the number of cells against their sizes. Developing practical algorithms to optimally design this mask matrix is an important next step. More broadly, providing concrete guidance on how to navigate the trade-off between validity and efficiency is a valuable direction for future research.

An important extension is to analyze the performance of our test under the specific setting of panel data. On one hand, applying the invariance-based procedure to real panel data would provide practical insights beyond the simulations and applications considered in the current paper. On the other hand, a formal characterization of the test’s power under various panel-specific error structures remains open. It remains a compelling direction for future work to develop such theoretical guarantees and to understand how temporal dependence in panel data shapes the test’s efficiency.

\bibliographystyle{apalike}
\bibliography{ref}

\newpage
\appendix

\section{A Graph-based Construction of Fully Observed Blocks}\label{sec:fob}
Here, we explain the graph-theoretic construction of $\FOB_M$ in Remark \ref{rmk:fob}. For any given mask matrix, $M$, we may view $M$ as an adjacency matrix that defines edges between rows and columns; i.e., row $i$ is connected with column $j$ if and only if $M_{ij}=1$. In this representation, the node set of the graph consists of row and column nodes, defined by $U = [n]$ and $V = [n]$.
Then, $G = (U, V, M)$ is a graph representation of the missing data problem where $M$ characterizes the edge set. Note that $G$ is a bipartite graph as all edges run strictly between $U$ and $V$.

In this context, consider a set of fully observed blocks, $\FOB_M = (I_q \times J_q )_{q=1}^Q$. Then, for any $q=1, \ldots, Q$, the nodes $(I_q, J_q)$ form a {\em clique} in graph $G$---more precisely, it is a biclique since we define 
edges only between a row and a column. This biclique property follows directly from the definition of $\FOB_M$, since the data are observed ($M_{ij}=1$ for $i\in I_q, j\in J_q$) in all row-column pairs defined by $I_q, J_q$.

As a consequence, the problem of constructing $\FOB_M$ is equivalent to finding a {\em biclique decomposition} of $G$.
Motivated by this, we propose Algorithm \ref{alg:biclique} that sequentially searches for large bicliques on the design graph $G$, in an effort to maximize power.
\begin{algorithm}
\DontPrintSemicolon
\KwData{Design graph $G = (U, V, M)$.}
\textbf{Begin:} \\
$\FOB_M \gets \{\}$\;
\While{$M$ has nonzero entries}{
Solve the ``largest biclique problem'':
\begin{equation}\label{eq:largeclique}
    (I_*, J_*) = \underset{(I, J) ~\text{forms a biclique of $G$}}{\arg\max} |I| |J|\;.
    \vspace{-1.5em}
\end{equation}\;
Remove all edges connected to this biclique: $M_{ij} \gets 0$ if $i \in I_*$ or $j \in J_*$\;
Update the biclique set: $\FOB_M \gets \FOB_M \cup \{I_* \times J_*\}$\;
}
\KwResult{A biclique decomposition $\FOB_M$.}
\caption{Biclique search algorithm to construct $\FOB_M$.}\label{alg:biclique}
\end{algorithm}

The main computational challenge of Algorithm \ref{alg:biclique} is in Equation \eqref{eq:largeclique}, where we calculate bicliques with the largest possible number of edges in the remaining design graph. This is known as the {\em maximum edge biclique problem}, and is generally NP-hard~\citep{peeters2003maximum}; see also \citet{zhang2014finding} for a review. Fortunately, our testing procedure remains valid even when the solution to \eqref{eq:largeclique} is approximate, so it is not necessary to solve the optimization problem exactly. In the above algorithm, we propose the {\tt Bimax} method \citep{prelic2006systematic} for such approximation. 
%

\begin{remark}
The biclique decomposition problem also arises in~\citet{puelz2021graph} in the context of randomization tests under interference. In that paper, a biclique represents a subset of the data for which a weak null hypothesis reduces to a sharp one that can be tested via a Fisher randomization test. This interpretation is conceptually distinct from our setting, where a biclique corresponds to a fully observed subset of the data.
Nevertheless, in both contexts, achieving finite-sample validity requires conditioning on a biclique, so the task of identifying bicliques is common to both approaches.
\end{remark}

\section{Application: Trust Level Study}\label{sec:trust}

\citet{nunn2011slave} examined trust levels among different ethnic groups in Africa and found that intergroup mistrust is significantly associated with historical slave exports. Using the same data, \citet{mackinnon2021wild} applied bootstrap methods to account for clustering in the errors and, in contrast to the original findings, obtained a non-significant effect of slave exports on trust levels. Here, we apply our invariant permutation test (IPT) to provide finite-sample valid results for the trust level study.

Specifically, we focus on the following regression model from \citet{nunn2011slave, mackinnon2021wild}:
\begin{equation}\label{eq:trust}
    \text{trust}_{ijl} = \alpha_{j} + \mathrm{exports}_i \beta + x_{ijl}^{\top} \gamma + \eps_{ijl}.
\end{equation}
Here, $i$ and $j$ correspond to the $i$th ethnic cluster and the $j$th country, and $l$ indexes the $l$th observation within the $(i, j)$ cell. The variable $\mathrm{trust}_{ijl}$ is an integer-valued measure of the trust an individual has toward their neighbors. The right-hand side includes a country fixed effect $\alpha_{j}$, the volume of historical slave exports, and additional covariates $x_{ijl}$ such as age, gender, and education level. 

The dataset includes $i = 1, \dots, 186$ ethnic groups and $j = 1, \dots, 16$ countries. The number of observations $\ell_{ij}$ across cells $(i,j)$ is highly unbalanced: 92.4\% of cells are empty, the average cell size is 6.76, and the maximum size is 852. Hence, the model and data in this example fall under the irregular design (D) category discussed in Section~\ref{sec:multicluster}.

To test the null hypothesis $H_0: \beta = 0$, we apply the modified permutation procedure proposed for irregular designs of this type in Section~\ref{sec:multicluster}. In the implementation, we set $L_0 = 10$ and repeat the full permutation test 100 times to reduce randomness from subsampling, an idea explained in Remark \ref{rmk:random}. We report the median $p$-value as the final result. Table~\ref{tab:trust} reports the finite-sample $p$-value and the confidence interval from test inversion. We find a non-significant $p$-value for the coefficient of $\mathrm{exports}$ on $\mathrm{trust}$ in model~\eqref{eq:trust}, and this conclusion is robust to different choices of $L_0$. This finding is consistent with the results of \citet{mackinnon2021wild}, who also report relatively weak evidence against the null hypothesis $\beta = 0$. The key difference is that their inference is only asymptotically valid, whereas our $p$-value is valid in finite samples.
\renewcommand{\arraystretch}{1.2}
\begin{table}[htbp!]
    \centering
    \begin{tabular}{c|c}
    \hline
    $p$-value for $H_0$ & Confidence interval \\
    \hline
    0.76 & [-1.19, 0.19] \\
    \hline
    \end{tabular}
    \bigskip
    \caption{Finite-sample valid $p$-value and confidence interval at the 95\% level.}
    \label{tab:trust}
\end{table}

Next, we follow the real missing pattern above and further evaluate the type I error of different methods. Specifically, we retain the observed irregular design and generate new covariates and errors as follows:
\begin{itemize}
    \item Keep the $\mathrm{exports}_i$ from the observed data and set $\beta = 0$ as we consider the test performance under the null.
    \item Generate nuisance covariates $x_{ijl}\in\R^{10}$ from the random-effects model~\eqref{eq:simu}, independently across coordinates, with correlation $\phi_1 = \phi_2 = 0.4$ along indices $i$ and $j$, respectively. Generate the corresponding $\gamma$ by $\gamma_j \stackrel{iid}{\sim} \mathrm{Binom}(3,0.3)+\cN(0,0.5^2),$ for $j = 1, \dots, 10$.
    \item We consider different error types: (I) $\eps_{ijl} = \exp(v_i u_{ijl})$, where $v_i\stackrel{iid}{\sim} \cN(0, 1)$ and $u_{ijl}\stackrel{iid}{\sim}\cN(0, 1)$, (II) $\eps_{ijl}$ are generated from the random effect model \eqref{eq:simu} with $\phi_1 = \phi_2 =0.1$ on index $i$ and index $j$, (III) $\eps_{ijl}$ are similarly generated as (II) with $\phi_1 = 0.9, \phi_2 = 0$.
\end{itemize}
These semi-synthetic data sets allow us to assess robustness of type I error control across distinct forms of error dependence. As shown in Table~\ref{tab:type1_NW}, IPT maintains correct type~I error in all scenarios, whereas all the alternative methods suffer size distortions under certain error distributions. Notably, although WCB is considered robust to data distributions \citep{mackinnon2021wild}, it is not finite-sample valid and can exhibit substantially inflated type I errors in these designs.
\renewcommand{\arraystretch}{1.1}
\begin{table}[h]
    \centering
    \begin{tabular}{l| p{2em}p{2em}p{2em}p{2em}p{2em}}
    \hline
    Error Type & (I) & (II) & (III) \\
    \hline
    IPT & 1.2 & 0.9 & 1.0\\ 
    $t_2$ & 2.3 & 5.8 & 3.4\\ 
    $t_3$ & 10.0 & 10.0 & 14.8\\ 
    WCB & 8.5 & 8.5 & 6.9\\ 
    \hline
    \end{tabular}
    \bigskip
    \caption{Type~I errors (\%) based on 1,000 simulations at the 5\% significance level. The tests are evaluated on semi-synthetic data from the trust level study.}
    \label{tab:type1_NW}
\end{table}

\section{Additional Numerical Results}
\subsection{Additional Simulation Results in Section \ref{sec:simu}}\label{sec:simu_supp}
Figure \ref{fig:type1} below visualizes the type I errors presented in Table \ref{tab:type1}. We refer readers to Section \ref{sec:trust} for type I errors under more general settings, i.e., irregular designs.
\begin{figure}[h]
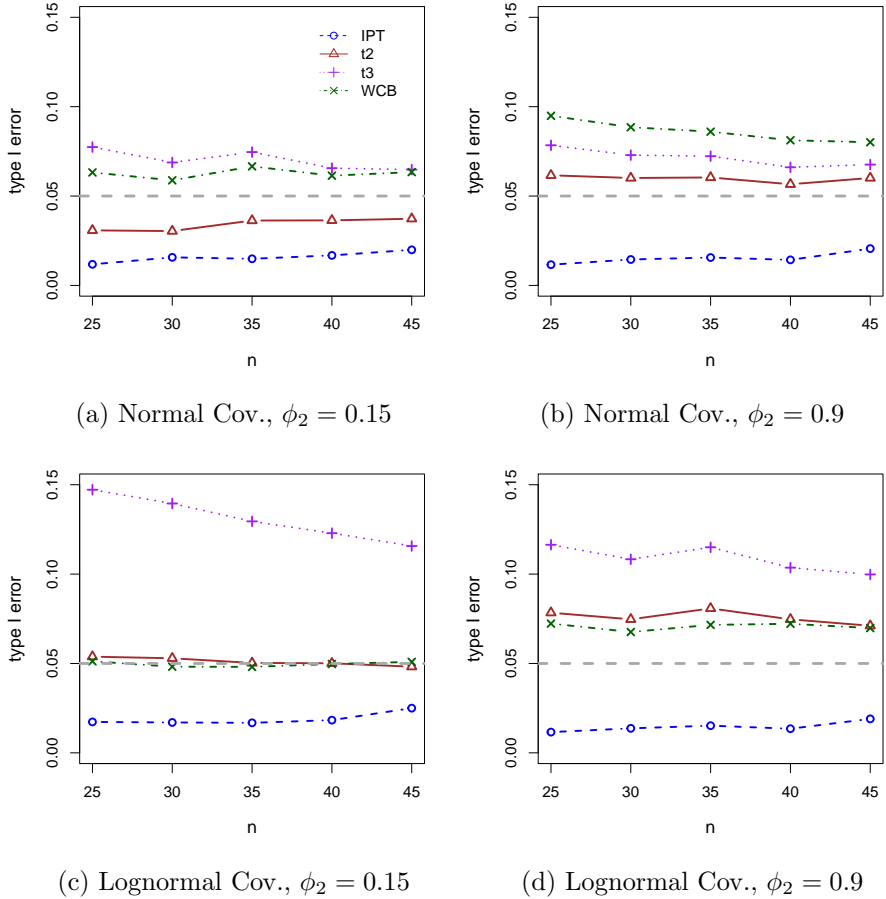

    \vspace{-7mm}
    \centering
    \subfloat[Normal Cov., $\phi_2=0.15$]{\includegraphics[width=.4\linewidth]{img/type1_normal_0.1_0.1.pdf}}
    \subfloat[Normal Cov., $\phi_2=0.9$]{\includegraphics[width=.4\linewidth]{img/type1_normal_0.1_0.9.pdf}} \\
    \vspace{-2em}
    \subfloat[Lognormal Cov., $\phi_2=0.15$]{\includegraphics[width=.4\linewidth]{img/type1_logn_0.1_0.1.pdf}}
    \subfloat[Lognormal Cov., $\phi_2=0.9$]{\includegraphics[width=.4\linewidth]{img/type1_logn_0.1_0.9.pdf}}
    \caption{Type~I errors based on 10,000 simulations at the 5\% significance level.}
    \label{fig:type1}
\end{figure}

Table \ref{tab:power} reports the exact rejection rates ($\%$) in Figure \ref{fig:power}.
\renewcommand{\arraystretch}{1.1}
\begin{table}[h!]
    \centering
    \begin{tabular}{c| p{1.5em}p{1.5em}p{1.5em}p{1.5em}p{1.5em}p{1.5em}p{1.5em}p{1.5em}p{1.5em}p{1.5em}p{1.5em}p{1.5em}p{1.5em}p{1.5em}p{1.5em}}
    \multicolumn{1}{c|}{$\beta$} & 0.01 & 0.02 & 0.03 & 0.04 & 0.05 & 0.06 & 0.07 & 0.08 & 0.09 & 0.10 & 0.11 & 0.12 & 0.13 & 0.14 & 0.15\\
    \hline
    \multicolumn{5}{l}{Panel A: $\phi_2 = 0.1$.} &  \multicolumn{5}{c}{}  \\
    \hline
    IPT & 8.8 & 27.6 & 49.5 & 67.3 & 79.7 & 87.9 & 93.3 & 96.0 & 97.7 & 98.6 & 99.2 & 99.6 & 99.8 & 99.9 & 99.9\\ 
    $t2$ & 14.2 & 36.7 & 58.0 & 74.4 & 84.8 & 91.1 & 94.8 & 97.0 & 98.5 & 99.1 & 99.5 & 99.7 & 99.8 & 99.9 & 99.9\\ 
    $t3$ & 31.4 & 54.6 & 73.8 & 85.3 & 92.0 & 95.4 & 97.7 & 98.8 & 99.4 & 99.7 & 99.8 & 99.8 & 99.9 & 100.0 & 100.0\\ 
    WCB & 10.6 & 23.6 & 39.3 & 55.1 & 67.8 & 77.4 & 84.5 & 89.2 & 93.0 & 95.3 & 96.6 & 97.6 & 98.5 & 99.1 & 99.4\\ 
    \hline
    \multicolumn{5}{l}{Panel B: $\phi_2 = 0.5$.} & \multicolumn{5}{c}{}  \\
    \hline
    IPT & 2.8 & 8.1 & 16.9 & 28.0 & 38.9 & 48.9 & 58.1 & 65.9 & 72.8 & 78.4 & 83.4 & 87.2 & 90.1 & 92.5 & 94.1\\ 
    $t2$ & 1.2 & 21.2 & 33.7 & 46.1 & 56.8 & 65.9 & 73.6 & 79.9 & 84.8 & 88.3 & 91.2 & 93.3 & 94.9 & 96.0 & 97.0\\ 
    $t3$ & 17.1 & 28.3 & 41.0 & 53.0 & 63.3 & 71.9 & 78.7 & 84.1 & 88.0 & 91.0 & 93.2 & 95.0 & 96.1 & 97.1 & 97.9\\ 
    WCB & 7.4 & 10.7 & 15.7 & 21.7 & 28.9 & 36.3 & 43.6 & 50.7 & 57.1 & 63.5 & 68.9 & 73.7 & 77.9 & 81.4 & 84.4\\ 
    \hline
    \multicolumn{5}{l}{Panel C: $\phi_2 = 0.9$.} & \multicolumn{5}{c}{}  \\
    \hline
    IPT & 1.2 & 1.9 & 3.0 & 4.9 & 6.8 & 9.5 & 12.6 & 16.2 & 19.5 & 23.5 & 27.7 & 31.7 & 35.5 & 39.3 & 43.0\\ 
    $t2$ & 8.2 & 10.5 & 13.7 & 17.5 & 21.4 & 26.2 & 30.8 & 35.4 & 40.0 & 44.9 & 49.1 & 52.9 & 56.9 & 60.3 & 63.5\\ 
    $t3$ & 12.1 & 14.8 & 18.0 & 21.9 & 26.2 & 30.9 & 35.4 & 40.0 & 44.8 & 49.2 & 53.3 & 57.1 & 60.6 & 63.9 & 66.9\\ 
    WCB & 8.4 & 8.6 & 9.1 & 10.2 & 11.3 & 12.7 & 14.4 & 16.3 & 18.4 & 20.7 & 23.3 & 25.8 & 28.4 & 31.4 & 34.3\\ 
    \end{tabular}
    \bigskip
    \caption{Test power based on 10,000 simulations at the 5\% significance level.}
    \label{tab:power}
\end{table}

\subsection{Additional Numerical Results in Section \ref{sec:trade}}\label{sec:trade_supp}
We revisit the empirical study of bilateral trade flows in Section \ref{sec:trade} and examine how the variance of each covariate influences the width of the confidence intervals. In standard regression settings, the sample variances of the covariates informs the widths. However, the dyadic regression in Section \ref{sec:trade} includes both dyad-level and node-level covariates, and their effective sample sizes differ substantially. To make meaningful comparisons, we define the effective variance of covariates
\[
\widehat{\text{EffVar}}_i = 
\begin{cases}
    n^2 \widehat{\mathrm{Var}}(X_i) &\text{if $X_i$ is a dyad-level covariate vector}\\
    n \widehat{\mathrm{Var}}(X_i) &\text{if $X_i$ is a node-level covariate vector}\;.
\end{cases}
\]
This scaling reflects the fact that dyad-level covariates contribute information at the order of $n^2$, whereas cluster-level covariates contribute information at the order of $n$.

Figure \ref{fig:ci_vs_var} plots the IPT confidence-interval width against the inverse of the effective variance, which serves as a proxy for Fisher information. We observe that covariates with smaller effective variance yield wider confidence intervals. The remoteness variables have the smallest effective variance and therefore produce the widest intervals.

We next summarize the variables identified as significant by each method in Table~\ref{tab:trade_variables}. The last two columns report results from the NLS and PPML estimates in \citet{silva2006log}. Compared with $t$-tests, our permutation test identifies more variables as non-significant --- such as importer’s per-capita GDP and openness dummy --- consistent with NLS and PPML. Our test also classifies exporter’s per-capita GDP as non-significant, whereas NLS and PPML find it significant. This discrepancy is plausible given the correlation ($\approx 0.6$) between exporter GDP and per-capita GDP, implying limited incremental explanatory power.
\begin{figure}[h!]
    \centering
    \includegraphics[width=1.0\linewidth]{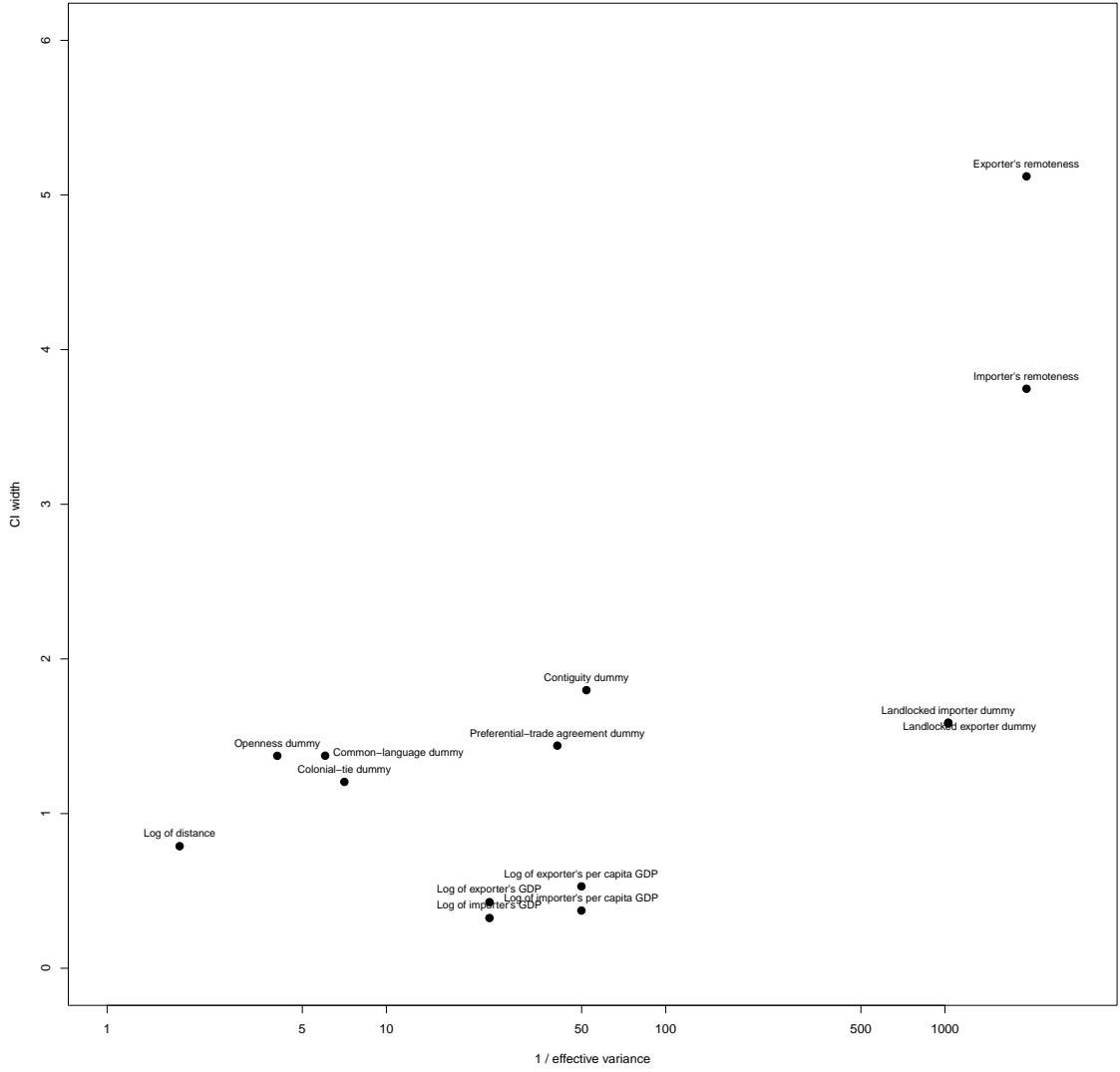}
    \caption{Widths of IPT confidence intervals over the inverse of effective variances.}
    \label{fig:ci_vs_var}
\end{figure}

\renewcommand{\arraystretch}{0.95}
\begin{table}[t!]
    \centering
    \begin{tabular}{l|cccccc}
    \hline
    Variables & $t_2$ & $t_3$ & WCB & IPT (ours) & NLS & PPML\\
    \hline
    \text{Log of importer's per capita GDP}  &$*$ &$*$ &- &- & -& $*$\\
    \text{Log of exporter's per capita GDP}  &$*$ &$*$ &$*$ &- & $*$ & $*$\\
    \text{Log of importer's GDP} &$*$ &$*$ &$*$ &$*$ & $*$ & $*$\\
    \text{Log of exporter's GDP} &$*$ &$*$ &$*$ &$*$ & $*$ & $*$\\
    \text{Log of distance} &$*$ &$*$ &$*$ &$*$ & $*$ & $*$\\
    \text{Contiguity dummy} &- &- &- &- & - & - \\
    \text{Common-language dummy} &$*$ &$*$ &$*$ &$*$ & $*$ & $*$\\
    \text{Colonial-tie dummy} &$*$ &$*$ &- &- & - & -\\
    \text{Landlocked exporter dummy} &- &- &- &- & $*$& $*$\\
    \text{Landlocked importer dummy} &- &- &- &- & $*$& $*$\\
    \text{Importer's remoteness} &- &- &- &-& $*$& $*$\\
    \text{Exporter's remoteness} &- &- &- &- & $*$& $*$\\
    \text{Preferential-trade agreement dummy} &$*$ &$*$ &$*$ &$*$ & $*$& -\\
    \text{Openness dummy} &$*$ &$*$ &- &- & $*$& -\\
    \hline
    \end{tabular}
    \bigskip
    \caption{Variable significance detected by different tests. $*$ and “–” indicate significant and non-significant variables, respectively, at the 95\% level.}
    \label{tab:trade_variables}
\end{table}

\section{Finite-sample Validity of Procedure \ref{proc1}}\label{sec:proof_IPT}
First, we prove the validity of Procedure \ref{proc1} via the three steps in Section \ref{sec:overview}.

\begin{proof}[Proof of Theorem \ref{thm:valid}]
In Procedure \ref{proc1}, since $2p < N$, the orthonormal matrix $V_k$ is well-defined. Without loss of generality, we consider $d = 1$ such that $\DD$ is a length-$N$ vector. Under the null hypothesis $H_0: \beta=0$ and $V_k^\top \XX = 0$, we have $V_k^\top \yy = V_k^\top (\XX \gamma + \ee) = V_k^\top \ee$. Similarly, we have $V_k^\top \yy_{\pi_{k}, \sigma_{k}} = V_k^\top (\XX_{\pi_{k}, \sigma_{k}} \gamma + \ee_{\pi_{k}, \sigma_{k}}) = V_k^\top \ee_{\pi_k, \sigma_k}$. Therefore, by defining $f_k(v) = |\langle V_k^\top \mathbf{D}, V_k^\top v \rangle|$, we have
\[
a_k = |\langle V_k^\top \mathbf{D}, V_k^\top \ee \rangle| \eqqcolon f_k(\ee)\;, \quad b_k = |\langle V_k^\top \mathbf{D}, V_k^\top \ee_{\pi_k, \sigma_k} \rangle| \eqqcolon f_k(\ee_{\pi_k, \sigma_k})\;.
\]
For the randomization $p$-value, we have
\begin{align}
\mathrm{pval}
&\ge \frac{1}{K+1}\Bigl(1+\sum_{k=1}^K \mathbbm{1}\bigl\{\min _{1 \leq j \leq K} f_j(\ee) \le \min _{1 \leq j \leq K} f_j(\ee_{\pi_k, \sigma_k})\bigr\}\Bigr)\nonumber\\
&= \frac{1}{K+1}\Bigl(1+\sum_{k=1}^K \mathbbm{1}\bigl\{f^*(\ee) \le f^*(\ee_{\pi_k, \sigma_k})\bigr\}\Bigr)\;, \label{eq:pval1}
\end{align}
with $f^*(v) \coloneqq \min _{1 \leq j \leq K} f_j(v)$. We apply Lemma 3 of \citet{wen2025residual} to obtain
\begin{equation}\label{eq:pval2}
\P\Bigl(\frac{1}{K+1}\Bigl(1+\sum_{k=1}^K \mathbbm{1}\bigl\{f^*(\ee) \le f^*(\ee_{\pi_k, \sigma_k})\bigr\}\Bigr) \le \alpha\mid \XX, \DD \Bigr) \le \alpha\;.
\end{equation}
By combining Equations \eqref{eq:pval1} and \eqref{eq:pval2}, we have $\P(\mathrm{pval} \le \alpha\mid \XX, \DD) \le \alpha$.
\end{proof}

Next, we show that Algorithm \ref{alg:construct} returns an algebraic group for Procedure \ref{proc1}.
\begin{proposition}\label{prop:group}
Let $\{\psi_k\}_{k = 0}^K$ be the outputs from Algorithm \ref{alg:construct}. For any $r,s \in\{0,1, \ldots, K\}$, there exists $k\in\{0, \dots, K\}$ such that $\psi_{k}=\psi_{r} \circ \psi_{s}$, where $\circ$ denotes function composition.
\end{proposition}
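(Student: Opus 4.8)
\textbf{Proof proposal for Proposition~\ref{prop:group}.}

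The plan is to reduce the statement to a purely combinatorial fact about cyclic shifts on a single block of size $K+1$, and then lift it back through the conjugation by the random relabelling $\pi$. First, observe that $\psi_r\circ\psi_s = \pi^{-1}\circ\tilde\psi_r\circ\pi\circ\pi^{-1}\circ\tilde\psi_s\circ\pi = \pi^{-1}\circ(\tilde\psi_r\circ\tilde\psi_s)\circ\pi$, so it suffices to show that the family $\{\tilde\psi_k\}_{k=0}^K$ is closed under composition, i.e.\ $\tilde\psi_r\circ\tilde\psi_s = \tilde\psi_k$ for an appropriate $k\in\{0,\dots,K\}$; then the same $k$ works for the $\psi$'s. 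The indices in $[n]$ split into the ``active'' range $A := \{1,\dots,(K+1)\lfloor n/(K+1)\rfloor\}$, on which $\tilde\psi_k$ acts, and the fixed tail $[n]\setminus A$, on which every $\tilde\psi_k$ is the identity; composition clearly preserves this decomposition, so I only need to analyze the action on $A$.

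Next I would make the block structure explicit. Partition $A$ into the consecutive blocks $S_1,\dots,S_m$ of size $K+1$ (with $m=\lfloor n/(K+1)\rfloor$), as in the displayed description of the algorithm. The key claim is that $\tilde\psi_k$ restricted to each block $S_i$ is exactly the cyclic shift by $k$ positions: using the residue $i \bmod (K+1)$ to locate the position within a block, the two cases in the definition of $\tilde\psi_k$ (``$i+k$'' when the shift stays inside the block, ``$i-(K+1-k)$'' when it wraps around) together realize the map ``add $k$ modulo $K+1$ within the block.'' Once this identification is made, the composition $\tilde\psi_r\circ\tilde\psi_s$ acts on each block as shift-by-$r$ followed by shift-by-$s$, which is shift-by-$(r+s)$, hence shift-by-$((r+s)\bmod(K+1))$. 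Since $r,s\in\{0,\dots,K\}$ we have $(r+s)\bmod(K+1)\in\{0,\dots,K\}$, so setting $k := (r+s)\bmod(K+1)$ gives $\tilde\psi_r\circ\tilde\psi_s=\tilde\psi_k$, and therefore $\psi_r\circ\psi_s=\psi_k$ with the same $k$. In particular $\psi_0=\pi^{-1}\circ\mathrm{Id}\circ\pi=\mathrm{Id}$ is the identity (take $r$ or $s$ equal to $0$), and each $\psi_k$ is invertible with $\psi_k^{-1}=\psi_{(K+1-k)\bmod(K+1)}$, so $\{\psi_k\}_{k=0}^K$ is indeed a cyclic group of order $K+1$.

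The only genuinely fiddly step is verifying the claim that the piecewise formula for $\tilde\psi_k$ coincides with ``cyclic shift by $k$ on each block,'' because the algorithm indexes elements globally (by $i\in[n]$) while the shift is naturally described locally (within $S_{\lceil i/(K+1)\rceil}$). I would handle this by writing $i = (K+1)q + \ell$ with $q = \lfloor (i-1)/(K+1)\rfloor$ and $\ell\in\{1,\dots,K+1\}$ the within-block index, checking that $i\bmod(K+1)$ indeed tracks $\ell$ (modulo the usual off-by-one from residues), and then confirming that the first case of the definition applies exactly when $\ell+k\le K+1$ (no wraparound, output $i+k$) and the second when $\ell+k>K+1$ (wraparound, output $i-(K+1-k) = i+k-(K+1)$, which lands in the same block). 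This is the main obstacle only in the bookkeeping sense; conceptually it is just the statement that a cyclic shift by $k$ on $\{1,\dots,K+1\}$, written out with an explicit ``if it overflows, subtract the period'' branch, is the formula in the algorithm. Everything else—closure of the active/tail decomposition and of the block decomposition under composition, and commutativity of cyclic shifts of the same block—is immediate, and conjugation by $\pi$ is an isomorphism so it transports the group structure verbatim.
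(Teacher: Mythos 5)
Your proposal is correct and follows essentially the same route as the paper's proof: reduce via the conjugation identity $\psi_r\circ\psi_s=\pi^{-1}\circ(\tilde\psi_r\circ\tilde\psi_s)\circ\pi$, take $k=(r+s)\bmod(K+1)$, and verify $\tilde\psi_r\circ\tilde\psi_s=\tilde\psi_k$ using the within-block cyclic-shift structure. You simply spell out the bookkeeping (including the residue convention at block boundaries) that the paper compresses into ``one can verify.''
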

\begin{proof}
For any $r, s$, we have $\psi_r \circ \psi_s = \pi^{-1} \circ \tilde{\psi}_r \circ \tilde{\psi}_s \circ \pi$. Let $k$ be the remainder after dividing $r+s$ by $K+1$. Then, one can verify $\tilde{\psi}_k = \tilde{\psi}_r \circ \tilde{\psi}_s$, therefore $\psi_{k}=\psi_{r} \circ \psi_{s}$.
\end{proof}

\section{Power Analysis of Procedure \ref{proc1}}\label{sec:proof_IPT_power}
We prove the power result in Section \ref{sec:power}, focusing on the $d = 1$ setting. Under Assumption \ref{asmp:err}, without loss of generality, we assume that 
\begin{equation}\label{eq:e_moments}
    1\le \E|\xi^e_{1}|^2 \le 2\;,\quad 1 \le \E|\eta^e_{1}|^2 < 2\;,\quad 1\le \E|\zeta^e_{11}|^2 \le 2\;.
\end{equation}
Let $\|\cdot\|$ denote the Euclidean norm for vectors and the (spectral) operator norm for matrices. Let $\|\cdot\|_F$ denote the Frobenius norm for matrices.

\subsection{Proof of Theorem \ref{thm:power}}
Lemmas \ref{lem:upp_bound} and \ref{lem:low_bound} establish concentration bounds for several key components. The probability statements are conditional on the covariates $\XX, \DD$, and we use $N = n^2$. Some supporting lemmas are provided in Section \ref{sec:support} of the Appendix.
\begin{lemma}\label{lem:upp_bound}
Suppose Assumptions \ref{asmp:model} and \ref{asmp:err} hold with $t\in[0, 1]$. For any $j$, we have $|{e^\top V_j V_j^\top \ee}| = o_p(n^{\frac{2+t}{1+t}})$ for $t\in[0, 1)$, and  $|{e^\top V_j V_j^\top \ee}| = O_p(n^{\frac{3}{2}})$ for $t = 1$.
\end{lemma}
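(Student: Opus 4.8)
The plan is to decompose the quadratic form $e^\top V_j V_j^\top \ee$ using the random-effects structure from Assumption~\ref{asmp:model} and bound each resulting piece separately. Write $\ee = \boldsymbol\eta + \boldsymbol\xi + \boldsymbol\zeta$, where $\boldsymbol\eta$ stacks $\eta_i$ (constant across $j$ within each row-block), $\boldsymbol\xi$ stacks $\xi_j$ (constant across $i$ within each column pattern), and $\boldsymbol\zeta$ stacks the idiosyncratic $\zeta_{ij}$; decompose $e$ analogously as $e = \boldsymbol{\eta}^e + \boldsymbol{\xi}^e + \boldsymbol{\zeta}^e$. Since $V_j$ is a fixed (conditional on covariates) orthonormal matrix with $\|V_j V_j^\top\|\le 1$, the cross terms $e^\top V_j V_j^\top \ee$ expand into nine bilinear pieces of the form $(\text{component of }e)^\top V_j V_j^\top (\text{component of }\ee)$. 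For each such piece I would use Cauchy--Schwarz to reduce to controlling the norms $\|V_j^\top \boldsymbol\eta\|$, $\|V_j^\top\boldsymbol\xi\|$, $\|V_j^\top\boldsymbol\zeta\|$ and their counterparts for $e$, possibly with a sharper bound when one factor has a second moment (the $e$-side) and the other only a $(1+t)$-moment (the $\ee$-side).

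The key quantitative step is bounding the norms of these projected vectors. For the $e$-components, Assumption~\ref{asmp:err} gives finite second moments, so $\E\|\boldsymbol{\eta}^e\|^2$, $\E\|\boldsymbol{\xi}^e\|^2$ are of order $n \cdot n = n^2$ (each of the $n$ distinct values is repeated $n$ times) and $\E\|\boldsymbol{\zeta}^e\|^2 = O(n^2)$; since $\|V_j^\top v\|\le\|v\|$, Markov's inequality gives $\|V_j^\top e\text{-component}\| = O_p(n)$. For the $\ee$-components, which may be heavy-tailed with only a $(1+t)$-moment, I would invoke the supporting lemmas in Section~\ref{sec:support} (analogues of the truncation/Marcinkiewicz--Zygmund-type bounds used in \citet{wen2025residual}) to get $\|\boldsymbol\eta\| = o_p(n^{1/2}\cdot n^{1/(1+t)}) = o_p(n^{(2+t)/(2+2t)})$ when $t<1$, reflecting that a sum of $n$ i.i.d.\ terms with $(1+t)$-moment grows like $n^{1/(1+t)}$, each repeated $n$ times; the idiosyncratic vector $\boldsymbol\zeta$ has $n^2$ entries and contributes $o_p(n^{2/(1+t)})$ in squared norm, hence $o_p(n^{1/(1+t)})$ in norm — wait, $o_p(n^{1/(1+t)})$ would be $o_p(n^{(2+t)/(2+2t)})$ too after checking exponents. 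Multiplying an $O_p(n)$ factor from $e$ against an $o_p(n^{(2+t)/(2+2t)})$-type factor... let me recompute: I actually expect the dominant piece to pair the $\boldsymbol\zeta^e$ (size $O_p(n)$ in projected norm after accounting for $\|V_j^\top\cdot\|\le\|\cdot\|$, but really $\|\boldsymbol\zeta^e\| = O_p(n)$) with $\boldsymbol\zeta$, giving $O_p(n)\cdot o_p(n^{?})$, and I would track exponents carefully so the product lands at $o_p(n^{(2+t)/(1+t)})$ for $t<1$ and $O_p(n^{3/2})$ at $t=1$. The cluster-level pieces should be no larger because there are only $n$ distinct values.

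The main obstacle I anticipate is obtaining the sharp heavy-tailed bound for the idiosyncratic cross term $(\boldsymbol\zeta^e)^\top V_j V_j^\top \boldsymbol\zeta$: a naive Cauchy--Schwarz through $\|V_j^\top\boldsymbol\zeta\|\le\|\boldsymbol\zeta\|$ may be too lossy because $V_j^\top\boldsymbol\zeta$ is a projection onto an $(N-2p)$-dimensional subspace, and one wants to exploit that the quadratic form averages against the $e$-side which has good moments. I would handle this by conditioning on $\boldsymbol\zeta$ (and covariates) and treating $(\boldsymbol\zeta^e)^\top (V_j V_j^\top \boldsymbol\zeta)$ as a weighted sum of the independent mean-zero entries of $\boldsymbol\zeta^e$ with deterministic weights given by $V_j V_j^\top\boldsymbol\zeta$, whose total $\ell_2$ mass is $\|V_j^\top\boldsymbol\zeta\|\le\|\boldsymbol\zeta\|$; then a conditional variance bound plus the heavy-tailed control on $\|\boldsymbol\zeta\|$ from the supporting lemmas yields the stated rate. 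The boundary case $t=1$ is where all components have finite second moments and a clean $O_p$ (rather than $o_p$) statement is the best available, matching the $n^{3/2}$ claim; I would verify the exponents $\tfrac{2+t}{1+t}$ and $\tfrac{3}{2}$ agree in the limit $t\to 1$ as a consistency check.
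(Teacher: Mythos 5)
Your starting point matches the paper's: decompose both $e$ and $\ee$ into their random-effects components and exploit the independence between the two sides (the paper writes $\ee = A_1\eta + A_2\xi + \zeta$, first bounds $w^\top V_jV_j^\top \ee$ for a \emph{deterministic} $w$ using heavy-tailed weighted-sum concentration from \citet{wen2025residual}, and then substitutes the random $e$-components by conditioning on a high-probability event controlling $\|\eta^e\|,\|\xi^e\|,\|\zeta^e\|$). However, as written your argument has a genuine gap exactly at the dominant terms. Your main tool---Cauchy--Schwarz reducing each of the nine bilinear pieces to a product of norms---is too lossy for the cluster-level pairings such as $(A_1\eta^e)^\top V_jV_j^\top A_1\eta$: there $\|A_1\eta^e\| = O_p(n)$ while $\|A_1\eta\| = \sqrt{n}\,\|\eta\| = o_p\bigl(n^{1/2+1/(1+t)}\bigr)$, so Cauchy--Schwarz only yields $o_p\bigl(n^{3/2+1/(1+t)}\bigr)$ for $t<1$ and $O_p(n^2)$ at $t=1$, overshooting the claimed rates $n^{(2+t)/(1+t)} = n^{1+1/(1+t)}$ and $n^{3/2}$ by a factor $n^{1/2}$. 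Your remark that ``the cluster-level pieces should be no larger because there are only $n$ distinct values'' is therefore not a valid justification; these are precisely the terms that set the rate. The exponent bookkeeping you display is also incorrect (e.g.\ $n^{1/2}\cdot n^{1/(1+t)} = n^{(3+t)/(2(1+t))}$, not $n^{(2+t)/(2+2t)}$), and the decisive computation is deferred (``I would track exponents carefully\dots'').

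The rescue you propose for the idiosyncratic cross term---condition on the heavy-tailed $\ee$-component, view the expression as a weighted sum of the independent mean-zero $e$-entries with weight vector $V_jV_j^\top\zeta$ (or $A_1^\top V_jV_j^\top A_1\eta$, etc.), bound the conditional variance by the squared norm of the weights, and control that norm via a Marcinkiewicz--Zygmund-type bound $\|\eta\|^2 = o_p(n^{2/(1+t)})$, $\|\zeta\|^2 = o_p(n^{4/(1+t)})$---does deliver the correct rates, and if applied uniformly to all nine pieces it would constitute a valid alternative to the paper's proof, which conditions in the opposite direction (fixing the $e$-side on a norm event and invoking the heavy-tailed concentration of \citet{wen2025residual} for the $\ee$-side). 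But in your proposal this device is earmarked only for $(\zeta^e)^\top V_jV_j^\top\zeta$, which after proper treatment is subdominant ($o_p(n^{2/(1+t)})$, resp.\ $O_p(n)$ at $t=1$), while the binding cluster-level terms are left to the insufficient Cauchy--Schwarz bound. Note also that the norm bounds for the heavy-tailed components are not among the supporting lemmas of Section~\ref{sec:support} and would need to be supplied (they are standard for i.i.d.\ variables with finite $(1+t)$-th moment). So the proof is incomplete as it stands, though it could be repaired by applying your conditioning argument to every term rather than only the idiosyncratic one.
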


\begin{proof}

\noindent\underline{Step 1. A deterministic vector $w$.}
First, we show that for any $w\in\R^{N}$,
\begin{align}
    |{w^\top V_j V_j^\top \ee}| &= o_p\left(\|w\| n^{\frac{1}{1+t}}\right) ~\text{for}~t\in[0, 1)\;, \label{eq:weps_bound}\\
    |{w^\top V_j V_j^\top \ee}| &= O_p \left(\|w\| n^{\frac{1}{2}}\right)~\text{for}~t=1\;.\label{eq:weps_bound2}
\end{align}
To simplify the notation, we connect $\ee$ to $\eta, \xi\in\R^n$ and $\zeta\in\R^{n^2}$ in the matrix form: 
\begin{align*}
    \ee &= A_1 \eta + A_2 \xi + \zeta\;, \quad
    A_1 = \begin{bmatrix}
        1_n & 0 &\dots & 0\\
        0 & 1_n & \dots & 0 \\
        \vdots & \vdots & \vdots & \vdots\\
        0 & \dots & 0 & 1_n 
    \end{bmatrix} \in \R^{n^2\times n}\;,\quad 
    A_2 = \begin{bmatrix}
        I_n \\
        I_n \\
        \vdots\\
        I_n 
    \end{bmatrix}\in\R^{n^2\times n}\;,
\end{align*}
where $1_n$ is an all-one vector of length $n$. Then, we can write 
\begin{align*}
    |{w^\top V_j V_j^\top \ee}| &= |w^\top V_j V_j^\top (A_1 \eta + A_2 \xi + \zeta)| \le \underbrace{|w^\top V_j V_j^\top A_1 \eta|}_\text{(I)} + \underbrace{|w^\top V_j V_j^\top  A_2 \xi|}_\text{(II)} + \underbrace{|w^\top V_j V_j^\top\zeta|}_\text{(III)}\;.
\end{align*}
\noindent\underline{Case 1: $t=1$. }
When $t = 1$, we directly analyze the second moment of (I) to obtain
\begin{align*}
    \E \text{(I)}^2 &= \E (w^\top V_j V_j^\top A_1 \eta)^2 \stackrel{\text{(i)}}{\le} 2 \tr(A_1^\top V_j V_j^\top w w^\top V_j V_j^\top A_1) \le 2 \|w\|^2 \|V_j V_j^\top\|^2 \|A_1\|^2\;,
\end{align*}
where (i) is due to \eqref{eq:e_moments}. Since $\|V_j V_j^\top\| \le 1$ and $\|A_1\|^2 = n$, by Chebyshev's inequality,
\begin{align*}
    \P(\text{(I)} > \|w\|n^{1/2}\delta) \le \frac{ \E \text{(I)}^2}{n \|w\|^2\delta^2} = O\left(\frac{1}{\delta^2}\right) \Rightarrow \text{(I)} = O_p(\|w\|n^{\frac{1}{2}})\;.
\end{align*}
Similarly, we can show that (II) is $O_p(\|w\|n^{\frac{1}{2}})$. Lastly, 
\begin{align*}
    \E \text{(III)}^2 &= \E (w^\top V_j V_j^\top \zeta)^2 \stackrel{\text{(i)}}{\le} 2 \tr(V_j V_j^\top w w^\top V_j V_j^\top) \le 2 \|w\|^2 \|V_j V_j^\top\|^2 \le 2 \|w\|^2\;,
\end{align*}
where inequality (i) is due to \eqref{eq:e_moments}. By Chebyshev's inequality, (III) is $O_p(\|w\|)$. Combining our analysis for (I) - (III), we prove Equation \eqref{eq:weps_bound2}.

\noindent\underline{Case 2: $t\in[0, 1)$. }
First we apply Corollary 8 and Lemma A8 of \citet{wen2025residual} to obtain
\begin{equation*}
    \text{(I)} = o_p\left(\|w^\top V_j V_j^\top A_1\| n^{\frac{1-t}{2(1+t)}}\right)\;.
\end{equation*}
Since $\|V_j V_j^\top\| \le 1$ and $\|A_1\| = \sqrt{n}$, we further obtain $\|w^\top V_j V_j^\top A_1\| \le \sqrt{n} \|w\|$ and hence $\text{(I)} = o_p\left(\|w\| n^{\frac{1}{2} + \frac{1-t}{2(1+t)}}\right)$. Using the same argument, we obtain the same bound for (II) and 
\begin{equation*}
    \text{(III)} = o_p\left(\|w^\top V_j V_j^\top\| (n^2)^{\frac{1-t}{2(1+t)}}\right) = o_p\left(\|w^\top V_j V_j^\top\| n^{\frac{1-t}{1+t}}\right)\;,
\end{equation*}
where we use the fact that $\zeta$ is an error vector of length $n^2$. Since $\|V_j V_j^\top\| \le 1$, we have $\text{(III)} = o_p\left(\|w\| n^{\frac{1-t}{1+t}}\right)$. Combining the analysis above gives us \eqref{eq:weps_bound}, i.e.,
\begin{gather}
    \P\left( |{w^\top V_j V_j^\top \ee}| \ge \delta \|w\| n^{\frac{1}{1+t}} \right) \to 0 ~\text{for any}~\delta > 0\;.\label{eq:w_opbound}
\end{gather}

\noindent\underline{Step 2. The random vector $e$.}
We replace $w$ by the random vector $e$ via a conditional argument. We focus on $t\in[0, 1)$, as the $t=1$ case follows a similar argument. Recall that Assumption \ref{asmp:model} implies $e = A_1 \eta^e + A_2 \xi^e + \zeta^e$. Therefore, 
\begin{equation*}
    |e^\top V_j V_j^\top \ee| \le |{\eta^e}^\top A_1^\top V_j V_j^\top \ee| + |{\xi^e}^\top A_2^\top V_j V_j^\top \ee| + |{\zeta^e}^\top V_j V_j^\top \ee|\;. 
\end{equation*}
Define the event
\begin{equation*}
    \cE \coloneqq \left\{ \frac{1}{2}n \le \|\eta^e\|^2 \le \frac{5}{2} n\right\} \cap \left\{ \frac{1}{2}n \le \|\xi^e\|^2 \le \frac{5}{2} n\right\} \cap \left\{ \frac{1}{2}n^2 \le \|\zeta^e\|^2 \le \frac{5}{2} n^2\right\} \;.
\end{equation*}
By the law of large numbers and Equation \eqref{eq:e_moments}, we have $\lim_{n\to\infty} \P(\cE) = 1$.
Hence, for the first term $|{\eta^e}^\top A_1^\top V_j V_j^\top \ee|$ above, we have
\begin{equation}\label{eq:sum_prob}
    \P(|{\eta^e}^\top A_1^\top V_j V_j^\top \ee| \ge \delta n^{\frac{2+t}{1+t}}) \le \P(|{\eta^e}^\top A_1^\top V_j V_j^\top \ee| \ge \delta  n^{\frac{2+t}{1+t}} \mid \cE) + 1 - \P(\cE)\;.
\end{equation}
Conditional on the event $\cE$, for $w = A_1 \eta^e$, we have
\begin{align}
    \|w\| &= \|A_1 \eta^e\| \le \sqrt{n} \|\eta^e\| \le \sqrt{\frac{5}{2}} n\;,\nonumber\\
    \P(|w^\top V_j V_j^\top \ee| \ge \delta n^{\frac{2+t}{1+t}} \mid \cE) &\le \P(|w^\top V_j V_j^\top \ee| \ge \delta \sqrt{\frac{2}{5}}\|w\|\times n^{\frac{1}{1+t}} \mid \cE) = o(1)\;.\label{eq:tail_prob}
\end{align}
The last line follows from \eqref{eq:w_opbound} and the independence between $e$ and $\ee$. Combining \eqref{eq:sum_prob} and \eqref{eq:tail_prob}, we obtain
\begin{align*}
\P(|{\eta^e}^\top A_1^\top V_j V_j^\top \ee| \ge \delta n^{\frac{2+t}{1+t}}) = o(1)\Rightarrow |{\eta^e}^\top A_1^\top V_j V_j^\top \ee| = o_p(n^{\frac{2+t}{1+t}})\;.
\end{align*}
Using a similar argument, we can show that $|{\xi^e}^\top A_2^\top V_j V_j^\top \ee|$ and $|{\zeta^e}^\top V_j V_j^\top \ee|$ are $o_p(n^{\frac{2+t}{1+t}})$. Therefore, we obtain $|e^\top V_j V_j^\top \ee| = o_p(n^{\frac{2+t}{1+t}})$.
\end{proof}

For any $k\in[K]$, let $P_k\in\R^{N\times N}$ be the permutation matrix such that  $\yy_{\pi_k, \sigma_k} = P_k \yy$.
\begin{lemma}\label{lem:low_bound}
Suppose that Assumptions \ref{asmp:model},\ref{asmp:err} hold and we have $n > (\max\{2/\kappa, 2\} +m)p$ and $\tr(P_k) = 0$ for any $k$. With probability converging to 1, for all $j, k\in[K]$, we have
\begin{align*}
\frac{e^\top V_j V_j^\top e - e^\top V_k V_k^\top P_k e}{n^2} & \geq \frac{\kappa^2 m}{2(2+\kappa m)} (\E(\xi_1^e)^2+\E(\eta_1^e)^2)\;, \\
\frac{e^\top V_j V_j^\top e + e^\top V_k V_k^\top P_k e}{n^2} & \geq \frac{\kappa^2 m}{2(2+\kappa m)} (\E(\xi_1^e)^2+\E(\eta_1^e)^2)\;.
\end{align*}
\end{lemma}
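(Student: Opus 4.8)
The plan is to bound both displayed quantities from below by $e^\top V_jV_j^\top e - |e^\top V_kV_k^\top P_ke|$ (since $A\pm B \ge A - |B|$), and then to show the latter is at least the stated constant times $n^2$ with probability tending to one. Abbreviate $\Lambda := \E(\eta^e_1)^2 + \E(\xi^e_1)^2$ and $\mu := \E(\zeta^e_{11})^2$, so $\mu \ge \kappa\Lambda$ by Assumption~\ref{asmp:err}. Write $V_jV_j^\top = I_N - H_j$ with $H_j$ the orthogonal projection onto $S_j := \mathrm{span}([\XX,\XX_{\pi_j,\sigma_j}])$, so $\mathrm{rank}(H_j)\le 2p$, and use the decomposition $e = A_1\eta^e + A_2\xi^e + \zeta^e$ from the proof of Lemma~\ref{lem:upp_bound}, recalling $A_l^\top A_l = nI_n$, $\tr(A_lA_l^\top) = n^2$, $\|A_l\| = \sqrt n$ ($l=1,2$), and the normalization~\eqref{eq:e_moments}. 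Since $K$ is fixed it suffices to prove each bound for one pair $(j,k)$ and union-bound; all probabilities are over $e$, with $\XX$ (hence $V_j,H_j,P_j$) held fixed.

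\emph{Step 1: a lower bound for $e^\top V_jV_j^\top e$.} Expand $e^\top(I_N-H_j)e$ into its $\eta^e$-, $\xi^e$-, $\zeta^e$-quadratic forms plus bilinear cross terms. For the $\eta^e$ part, $A_1^\top(I_N-H_j)A_1 = nI_n - A_1^\top H_jA_1 \succeq 0$, and since $A_1^\top H_jA_1$ has rank $\le 2p$ and operator norm $\le\|A_1\|^2 = n$, it follows that $\eta^{e\top}(nI_n - A_1^\top H_jA_1)\eta^e \ge n\big(\|\eta^e\|^2 - \|\Pi_T\eta^e\|^2\big)$ for some $T\subseteq\R^n$ with $\dim T\le 2p$ ($\Pi_T$ the orthogonal projection onto $T$), with $\E\|\Pi_T\eta^e\|^2 \le 2p\,\E(\eta^e_1)^2$; the $\xi^e$ part is identical. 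For the $\zeta^e$ part, $\zeta^{e\top}(I_N-H_j)\zeta^e = \|\zeta^e\|^2 - \zeta^{e\top}H_j\zeta^e$ with $\E[\zeta^{e\top}H_j\zeta^e]\le 2p\mu$. Using the law of large numbers ($\|\eta^e\|^2 = n\E(\eta^e_1)^2(1+o_p(1))$, $\|\xi^e\|^2 = n\E(\xi^e_1)^2(1+o_p(1))$, $\|\zeta^e\|^2 = n^2\mu(1+o_p(1))$), a truncation argument in the spirit of \citet{wen2025residual} to show the rank-$2p$ ``loss'' forms $\|\Pi_T\eta^e\|^2$, $\zeta^{e\top}H_j\zeta^e$ stay at their means up to lower-order terms, and the fact that the cross terms are mean-zero with variances $O(pn^2)$ and hence $o_p(n^2)$, I obtain, uniformly over $j$,
\[
 e^\top V_jV_j^\top e \;\ge\; n^2(\Lambda+\mu) - 2pn\Lambda - o_p(n^2).
\]

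\emph{Step 2: an upper bound for $|e^\top V_kV_k^\top P_ke|$.} Write $e^\top V_kV_k^\top P_ke = e^\top P_ke - e^\top H_kP_ke$. Since $P_k$ acts on $A_1\eta^e$ and $A_2\xi^e$ as the induced $n\times n$ permutations of $\eta^e$ and $\xi^e$, and permutes $\zeta^e$, every term in the expansion of $e^\top P_ke$ is $O_p(n^{3/2})$ except the two self-matching terms $n\sum_{i\in\mathrm{Fix}(\pi_k)}(\eta^e_i)^2$ and $n\sum_{j\in\mathrm{Fix}(\sigma_k)}(\xi^e_j)^2$; but $\tr(P_k) = |\mathrm{Fix}(\pi_k)|\cdot|\mathrm{Fix}(\sigma_k)| = 0$ forces at least one of these fixed-point sets to be empty, and for the permutations of Algorithm~\ref{alg:construct} with $K+1\mid n$ both are, so $e^\top P_ke = o_p(n^2)$. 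For $e^\top H_kP_ke$, Cauchy--Schwarz gives $|e^\top H_kP_ke| \le (e^\top H_ke)^{1/2}\big((P_ke)^\top H_k(P_ke)\big)^{1/2}$; since $P_ke \stackrel{d}{=} e$ by Assumption~\ref{asmp:model} and $\E[e^\top H_ke] = \E[(P_ke)^\top H_k(P_ke)] \le 2pn\Lambda + 2p\mu$, the same concentration yields $|e^\top H_kP_ke| \le 2pn\Lambda + o_p(n^2)$. Hence $|e^\top V_kV_k^\top P_ke| \le 2pn\Lambda + o_p(n^2)$.

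\emph{Step 3: combining, and the main obstacle.} Adding Steps~1--2, with probability $\to 1$ and simultaneously over all $j,k$,
\[
 e^\top V_jV_j^\top e - |e^\top V_kV_k^\top P_ke| \;\ge\; n^2(\Lambda+\mu) - 4pn\Lambda - o_p(n^2)\;\ge\; n^2\Lambda(1+\kappa) - 4pn\Lambda - o_p(n^2),
\]
and substituting $p/n < 1/(\max\{2/\kappa,2\}+m)$ a direct computation shows the right-hand side is $\ge \tfrac{\kappa^2 m}{2(2+\kappa m)}\Lambda\,n^2(1-o_p(1))$, which gives both claimed inequalities after dividing by $n^2$. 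The delicate part is Steps~1--2: because Assumption~\ref{asmp:err} only provides second moments for $e$, naive Markov bounds on the rank-$2p$ quadratic forms $e^\top H_je$ and $e^\top H_kP_ke$ lose a diverging factor, and one must instead truncate (as in the cited work) to pin them at order $pn$ with the correct constant — this is what lets the idiosyncratic term $n^2\mu$ dominate once $p$ is a small enough fraction of $n$; and one must invoke $\tr(P_k)=0$ precisely so that the potentially order-$n^2$ self-matching contributions in $e^\top P_ke$ drop out.
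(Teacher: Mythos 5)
Your route is genuinely different from the paper's. The paper never decouples the two quadratic forms: it expands $e^\top(V_jV_j^\top - V_kV_k^\top P_k)e$ into nine blocks using $e=A_1\eta^e+A_2\xi^e+\zeta^e$, handles each block by an off-diagonal Chebyshev bound plus a weak law of large numbers for the diagonal, and lower-bounds the paired traces directly, e.g.\ $\tr(A_1^\top V_jV_j^\top A_1)\ge n(n-2p)$ versus $|\tr(A_1^\top V_kV_k^\top P_kA_1)|\le n^2$, so the potentially order-$n^2$ cluster contributions of the unpermuted and permuted forms cancel up to $O(pn)$; the hypothesis $\tr(P_k)=0$ is used only in the $\zeta^e$ block, via $|\tr(V_kV_k^\top P_k)|\le 2p$. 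Your decomposition $A\pm B\ge A-|B|$, with a sharp lower bound on $A=e^\top V_jV_j^\top e$ of order $n^2(\Lambda+\mu)$, buys a simpler bookkeeping and, for the Algorithm~\ref{alg:construct} permutations with $(K+1)\mid n$, does yield the constant (your ``direct computation'' in Step~3 checks out in both cases $\kappa\le 1$ and $\kappa>1$). The concentration steps you defer (pinning $e^\top H_je$, $e^\top H_kP_ke$, $\|\Pi_T\eta^e\|^2$ at their means up to $o_p(n^2)$ with only second moments) are real work but are exactly the diagonal/off-diagonal-plus-weak-LLN machinery the paper uses, so flagging them as ``in the spirit of the cited work'' is fair.

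The genuine gap is in Step~2 relative to the lemma as stated. The hypothesis is only $\tr(P_k)=0$, i.e.\ $|\mathrm{Fix}(\pi_k)|\cdot|\mathrm{Fix}(\sigma_k)|=0$, which forces just \emph{one} of the two fixed-point sets to be empty; your claim $e^\top P_ke=o_p(n^2)$ relies on \emph{both} being empty and you obtain that only by invoking Algorithm~\ref{alg:construct}, which is not part of the lemma's hypotheses. Take $\pi_k=\mathrm{Id}$ and $\sigma_k$ a fixed-point-free cycle: then $\tr(P_k)=0$ but $(A_1\eta^e)^\top P_kA_1\eta^e=n\|\eta^e\|^2\asymp n^2\E(\eta_1^e)^2$, so $|e^\top V_kV_k^\top P_ke|$ is of exact order $n^2$ and your Step~2 bound $2pn\Lambda+o_p(n^2)$ fails. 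Worse, this cannot be patched within your decoupling: replacing the Step~2 bound by the correct worst case $|B|\lesssim n^2\E(\eta_1^e)^2+2pn\Lambda$ leaves $A-|B|\gtrsim n^2(\E(\xi_1^e)^2+\mu)-4pn\Lambda$, whose coefficient can drop below $\tfrac{\kappa^2m}{2(2+\kappa m)}\Lambda$ (and even below zero when $\kappa m<2$), precisely because $A-|B|$ throws away the cancellation between the matching $\eta^e$-blocks of the two forms that the paper's paired trace comparison retains. So your argument proves the lemma for the permutations actually used downstream, but not under the stated hypothesis $\tr(P_k)=0$; to cover it you would need to keep the difference $V_jV_j^\top-V_kV_k^\top P_k$ together (as the paper does) or otherwise exploit that the fixed-point contribution appears with the same sign in both $e^\top V_jV_j^\top e$ and $e^\top V_kV_k^\top P_ke$.
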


\begin{proof}
We focus on proving the first inequality, since the argument for the second inequality is identical. Since $K$ is fixed, it suffices to show for any fixed $j, k$, with probability converging to 1, the first statement holds. By the definition of $e$, we have
\begin{align*}
    \frac{e^\top V_j V_j^\top e - e^\top V_k V_k^\top P_k e}{n^2} &= \frac{(A_1 \eta^e + A_2 \xi^e + \zeta^e)^\top (V_j V_j^\top - V_k V_k^\top P_k) (A_1 \eta^e + A_2 \xi^e + \zeta^e)}{n^2}\;,
\end{align*}
where the matrices $A_1, A_2$ are defined in the proof of Lemma \ref{lem:upp_bound}. By expanding and rearranging the quadratic form above, we obtain
\begin{equation}\label{eq:quad_diff}
\begin{aligned}
&~~\frac{e^\top V_j V_j^\top e - e^\top V_k V_k^\top P_k e}{n^2} \\
&= \frac{(A_1 \eta^e)^\top (V_j V_j^\top - V_k V_k^\top P_k) (A_1 \eta^e)}{n^2} + \frac{(A_2 \xi^e)^\top (V_j V_j^\top - V_k V_k^\top P_k) (A_2 \xi^e)}{n^2} \\
&+ \frac{(\zeta^e)^\top (V_j V_j^\top - V_k V_k^\top P_k) (\zeta^e)}{n^2} \\
&+ \frac{(A_1 \eta^e)^\top (V_j V_j^\top - V_k V_k^\top P_k) (A_2 \xi^e)}{n^2} 
+ \frac{(A_2 \xi^e)^\top (V_j V_j^\top - V_k V_k^\top P_k) (A_1 \eta^e)}{n^2} \\
&+ \frac{(A_1 \eta^e)^\top (V_j V_j^\top - V_k V_k^\top P_k) (\zeta^e)}{n^2}
+ \frac{(\zeta^e)^\top (V_j V_j^\top - V_k V_k^\top P_k) (A_1 \eta^e)}{n^2} \\
&+ \frac{(A_2 \xi^e)^\top (V_j V_j^\top - V_k V_k^\top P_k) (\zeta^e)}{n^2}
+ \frac{(\zeta^e)^\top (V_j V_j^\top - V_k V_k^\top P_k) (A_2 \xi^e)}{n^2}\;.
\end{aligned}
\end{equation}
Next, we analyze the terms above one by one. For any square matrix $A$, $\mathrm{diag}(A)$ denotes the diagonal matrix whose diagonal entries equal those of $A$.

\noindent\underline{Term 1 in \eqref{eq:quad_diff}.}
For the first term in \eqref{eq:quad_diff}, we have
\begin{align*}
    \frac{(A_1 \eta^e)^\top (V_j V_j^\top - V_k V_k^\top P_k) (A_1 \eta^e)}{n^2}
    &= \frac{{\eta^e}^\top (M - \mathrm{diag}(M))\eta^e - {\eta^e}^\top (N - \mathrm{diag}(N)) \eta^e }{n^2} \\
    &+ \frac{{\eta^e}^\top (\mathrm{diag}(M) - \mathrm{diag}(N)) \eta^e}{n^2} \eqqcolon \text{I} +  \text{II}\;,
\end{align*}
where $M = A_1^\top V_j V_j^\top A_1$ and $N = A_1^\top V_k V_k^\top P_k A_1$. For I, observe that
\begin{align*}
    \|M - \diag(M)\|_F^2 &\le \|M\|_F^2 = \| A_1^\top V_j V_j^\top A_1 \|_F^2 \le \|V_j\|^4 \|A_1^\top A_1\|_F^2 = n^3\;,\\
    \|N - \diag(N)\|_F^2 &\le \|N\|_F^2 = \| A_1^\top V_j V_j^\top A_1 P_k \|_F^2 \le \|V_j\|^4 \|P_k\|^2 \|A_1^\top A_1\|_F^2 = n^3\;.
\end{align*}
In the derivation, we use the fact that $\|V_j\| \le 1$, $\|P_k\|\le 1$, $\|A_1^\top A_1\|_F^2 = n^3$. We apply \eqref{eq:sym_quad_form} in Lemma \ref{lem:quad_form} of the supplement to obtain that for any constant $\delta > 0$, $\lim_{n\to\infty} \P(|\text{I}| < \delta) = 1$.

For II, given any fixed $P_k, P_j$, we define $a_{n,i} = (M_{ii}-N_{ii})/n^2$ and write $V_{n,i}\coloneqq n a_{n,i} (\eta^e_i)^2$. Then we can rewrite II as $\text{II} = \frac{1}{n}\sum_{i=1}^n V_{n,i}$. Notice that for each $n$, it holds that
\begin{align*}
|M_{ii}| &= A_{1,i}^\top V_j V_j^\top A_{1,i} \le \|A_{1,i}\|^2 \| V_j V_j^\top\| = n\;,\\
|N_{ii}| &= A_{1,i}^\top V_k V_k^\top P_k A_{1,i} \le \|A_{1,i}\|^2 \| V_k V_k^\top P_k\| = n \| V_k V_k^\top\| \le n\;,\\
\Rightarrow~ |a_{n, i}| &\le \frac{1}{n^2} (|M_{ii}| + |N_{ii}|) \le \frac{2}{n}\;.
\end{align*}
From this, we have $\E|V_{n, i}| \le 2 \E (\eta_1^e)^2$ uniformly for all $i, n$, and that for any $a>0$, 
\begin{equation*}
\sup _{n \geq 1} \frac{1}{n} \sum_{i=1}^n \E\left[|V_{n,i}| \mathbb{I}\{|V_{n, i}|>a\}\right] \le \mathbb{E}\left(2 (\eta_1^e)^2 \mathbb{I}\{2 (\eta_1^e)^2>a\}\right)\;.
\end{equation*}
Using the dominated convergence theorem and that $\E(\eta_1^e)^2<\infty$, we have that $\mathbb{E}(2 (\eta_1^e)^2 \mathbb{I}\{2 (\eta_1^e)^2>a\}) \to 0$ as $a \to \infty$; then we apply Lemma A7 of \citet{wen2025residual} to show that II converges to $\E[\mathrm{II}]$ in probability. Then it remains to control $\E \text{II} = \E (\eta_1^e)^2 \sum_i a_{n,i}$. Observe that $\sum_i a_{n,i} = \frac{1}{n^2} \tr(M - N)$. For matrix $M$, we have
\begin{align*}
    \tr(M) = \tr(A_1^\top V_j V_j^\top A_1) = n \tr(\frac{1}{n}A_1 A_1^\top V_j V_j^\top)\;.
\end{align*}
One can verify by definition that $\frac{1}{n} A_1 A_1^\top$ and $V_j V_j^\top$ are two $n^2\times n^2$ projection matrices with rank $n$ and $n^2 - 2p$, hence we apply Equation \eqref{eq:proj1} of Lemma \ref{lem:proj} (in the supplement) to obtain
\begin{align*}
\tr(\frac{1}{n}A_1 A_1^\top V_j V_j^\top) \ge \max\{0, n + n^2 - 2p - n^2\} = \max\{0, n - 2p\} \Rightarrow \tr(M) \ge n \max\{0, n - 2p\} \;.
\end{align*}
For the matrix $N$, we have $|\tr(N)| = n \Bigl|\tr(\frac{1}{n} A_1 A_1^\top V_k V_k^\top P_k)\Bigr|$.
Similarly, we apply \eqref{eq:proj2} of Lemma \ref{lem:proj} to obtain $|\tr(N)| \le n \min\{n, n^2 - 2p\}$. Therefore, 
\[
\sum_{i=1}^n a_{n,i} = \frac{1}{n^2} \tr(M - N) \ge \frac{\max\{0, n - 2p\} - \min\{n, n^2 - 2p\}}{n}\;.
\]
Under the assumption $n > (\max\{2/\kappa, 2\} + m) p$, we have $n > 2p$ and $\sum_{i=1}^n a_{n,i} \ge \frac{ n - 2p - n}{n} = -\frac{2p}{n}$. For any constant $c>0$, with probability converging to 1, we have
\begin{equation*}
    \frac{(A_1 \eta^e)^\top (V_j V_j^\top - V_k V_k^\top P_k) (A_1 \eta^e)}{n^2} \ge -\frac{2p}{n} \E(\eta_1^e)^2 - c\;.
\end{equation*}

\noindent\underline{Term 2 in \eqref{eq:quad_diff}.}
By a similar analysis as above, with probability converging to 1, the term 2 is lower bounded by $-\frac{2p}{n} \E(\xi_1^e)^2 -c$.

\noindent\underline{Term 3 in \eqref{eq:quad_diff}.}
We apply a similar analysis to Term 3 and show that it concentrates around 
\begin{equation*}
    \frac{1}{n^2} \E (\zeta^e_{11})^2 \tr(V_j V_j^\top - V_k V_k^\top P_k) \ge \frac{1}{n^2} \tr(V_j V_j^\top - V_k V_k^\top P_k)\;,
\end{equation*}
where the last inequality follows from \eqref{eq:e_moments}. By the construction of $V_j$, we have $\tr (V_j V_j^\top) = n^2 - 2p$. Additionally, since $\tr(P_k) = 0$,
\begin{equation*}
    \Bigl|\tr(V_k V_k^\top P_k)\Bigr| = \Bigl|\tr((I_{n^2} - V_k V_k^\top) P_k)\Bigr| \le \tr(I_{n^2} - V_k V_k^\top) = 2p\;,
\end{equation*}
where the inequality follows from Lemma \ref{lem:MP-ineq} in the supplement. Hence, $\tr(V_j V_j^\top - V_k V_k^\top P_k) \ge n^2- 4p$. Therefore, for any constant $c > 0$, with probability converging to 1, we have
\begin{equation*}
    \frac{(\zeta^e)^\top (V_j V_j^\top - V_k V_k^\top P_k) (\zeta^e)}{n^2} \ge \frac{n^2 - 4p}{n^2} \E (\zeta_{11}^e)^2 - c\;.
\end{equation*}

\noindent\underline{Remaining Terms in \eqref{eq:quad_diff}.}
We show that the rest of terms of Equation \eqref{eq:quad_diff} are $o_p(1)$. First, for 
\begin{equation*}
    \frac{(A_1 \eta^e)^\top (V_j V_j^\top - V_k V_k^\top P_k) (A_2 \xi^e)}{n^2} = \frac{(\eta^e)^\top (A_1^\top V_j V_j^\top A_2) (\xi^e)}{n^2} - \frac{(\eta^e)^\top (A_1^\top V_k V_k^\top P_k A_2) (\xi^e)}{n^2}\;,
\end{equation*}
we have
\begin{align*}
    \|A_1^\top V_j V_j^\top A_2\|_F^2 &\le \|A_1\|^2 \|V_j\|^4 \|A_2\|_F^2 \le n \|A_2\|_F^2 = n^3\;,\\
    \|A_1^\top V_k V_k^\top P_k A_2\|_F^2 &\le \|A_1\|^2 \|V_k\|^4 \|P_k\| \|A_2\|_F^2 \le n \| A_2\|_F^2 = n^3\;,
\end{align*}
where we use the fact that $\|A_2\|_F^2 = n^2$ from direct calculation. Therefore, we can apply Equation \eqref{eq:asym_quad_form} in Lemma \ref{lem:quad_form} of the supplement to show that this term is $o_p(1)$. Using a similar argument, we can show the rest of terms are $o_p(1)$ as well.

Combining our analysis for terms 1-9 on the right hand side of \eqref{eq:quad_diff}, we have shown that with probability converging to 1, 
\begin{equation*}
    \frac{e^\top V_j V_j^\top e - e^\top V_k V_k^\top P_k e}{n^2} \ge  \frac{n^2 - 4p}{n^2} \E (\zeta_{11}^e)^2  -\frac{2p}{n} (\E(\xi_1^e)^2+\E(\eta_1^e)^2) -3 c
\end{equation*}
for any constant $c>0$. Under the condition that $\E (\zeta_{11}^e)^2 \ge \kappa (\E(\eta_1^e)^2 + \E(\xi_1^e)^2)$ and $n > (2 / \kappa + m) p$, we have
\begin{align*}
    & \frac{e^\top V_j V_j^\top e - e^\top V_k V_k^\top P_k e}{n^2} \ge  \Bigl(\frac{n^2 - 4p}{n^2} \kappa -\frac{2p}{n} \Bigr) (\E(\xi_1^e)^2+\E(\eta_1^e)^2) -3 c\\
    &\ge \Bigl( \kappa - \frac{4p\kappa}{n^2} - \frac{2p}{(2/\kappa + m)p}\Bigr) (\E(\xi_1^e)^2+\E(\eta_1^e)^2) -3 c = \Bigl( \frac{\kappa^2 m}{2 + \kappa m} - \frac{4p \kappa}{n^2} \Bigr)(\E(\xi_1^e)^2+\E(\eta_1^e)^2) -3 c\;.
\end{align*}
Note that $c$ can be taken arbitrarily small and the term $4p\kappa / n^2$ converges to zero as $n > 2p$. Therefore, with probability converging to one, we have
\[
\frac{e^\top V_j V_j^\top e - e^\top V_k V_k^\top P_k e}{n^2} \ge \frac{\kappa^2 m}{2 (2 + \kappa m)}(\E(\xi_1^e)^2+\E(\eta_1^e)^2)\;.
\]
\end{proof}

Next, we prove Theorem \ref{thm:power} based on Lemmas \ref{lem:upp_bound} and \ref{lem:low_bound}.
\begin{proof}
By the proof of \citet[Theorem 3]{wen2025residual}, it suffices to show for all $\delta>0$,
\begin{equation}\label{eq:upp_bound}
\begin{gathered}
\P\left(\exists 1 \le k \le K~\text{s.t.}~\frac{\left|{e}^{\top} V_k V_k^\top \ee \right|}{b n^2} \ge \delta \mid \XX, \DD \right) \to 0\;, \\
\P\left(\exists 1 \le k \le K~\text{s.t.}~\frac{\left|{e}^{\top} V_k V_k^\top P_k \ee \right|}{b n^2} \ge \delta\mid \XX, \DD \right) \to 0\;,
\end{gathered}
\end{equation}
and that with probability converging to 1, for all $1 \le j, k, \le K$,
\begin{equation}\label{eq:low_bound}
\begin{aligned}
\frac{e^\top V_j V_j^\top e - e^\top V_k V_k^\top P_k e}{n^2} & \geq \frac{\kappa^2 m}{2(2+\kappa m)} (\E(\xi_1^e)^2+\E(\eta_1^e)^2)\;, \\
\frac{e^\top V_j V_j^\top e + e^\top V_k V_k^\top P_k e}{n^2} & \geq \frac{\kappa^2 m}{2(2+\kappa m)} (\E(\xi_1^e)^2+\E(\eta_1^e)^2)\;.
\end{aligned}
\end{equation}

Since $n$ can be divided by $K+1$, by Algorithm \ref{alg:construct}, one can verify that for any $i, j \in [n]$ and any $k = 1, \dots, K$, $\pi_k(i) \neq i$, $\sigma_k(j) \neq j$. This implies that the corresponding permutation matrix $P_k$ satisfies $\tr(P_k) = 0$. By Assumptions \ref{asmp:model} and \ref{asmp:err}, we apply Lemma \ref{lem:low_bound} to obtain \eqref{eq:low_bound}.

Then it suffices to prove the first inequality in \eqref{eq:upp_bound}, as the second one can be proved similarly based on $P_k\ee \stackrel{d}{=} \ee$. By the union bound, we obtain
\begin{equation*}
\P\left(\exists 1 \le k \le K~\text{s.t.}~\frac{\left|{e}^{\top} V_k V_k^\top \ee \right|}{b n^2} \ge \delta \mid \XX, \DD\right) \le \sum_{k=1}^K \P\left(\frac{\left|{e}^{\top} V_k V_k^\top \ee \right|}{b n^2} \ge \delta \mid \XX, \DD \right)\;.
\end{equation*}
When $t = 1$, $bn^2 = \omega(n^{3/2})$. By Lemma \ref{lem:upp_bound}, $|{e}^{\top} V_k V_k^\top \ee| = O_p(n^{3/2})$ for any $k$, and hence $|{e}^{\top} V_k V_k^\top \ee|/bn^2 = o_p(1)$. Since $K$ is fixed, the sum of probabilities is $o(1)$.

For $t\in[0, 1)$, we have $bn^2 = \Omega(n^{\frac{2+t}{1+t}})$. By Lemma \ref{lem:upp_bound}, $|{e}^{\top} V_k V_k^\top \ee| = o_p(n^{\frac{2+t}{1+t}})$, and hence $|{e}^{\top} V_k V_k^\top \ee|/bn^2 = o_p(1)$. The sum of probabilities above is also $o(1)$. 
\end{proof}

\subsection{Proof of Theorem \ref{thm:power_iid}}
Under the stronger assumption that $\eta = \xi = 0$, the results in Lemma \ref{lem:upp_bound} can be strengthened in the following sense. We omit the proof as it follows from Lemma \ref{lem:upp_bound} and $\eta = \xi = 0$.
\begin{lemma}\label{lem:upp_bound1}
Suppose Assumptions \ref{asmp:model} and \ref{asmp:err} hold with $t\in[0, 1]$ and $\eta = \xi = 0$. For any $j$, we have
\begin{equation*}
    |{e^\top V_j V_j^\top \ee}| = o_p\left(n^{\frac{2}{1+t}}\right)~\text{for}~t\in[0, 1)\;,\quad 
    |{e^\top V_j V_j^\top \ee}| = O_p\left(n\right)~\text{for}~t = 1\;.
\end{equation*}
\end{lemma}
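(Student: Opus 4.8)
The plan is to rerun the two-step argument behind the proof of Lemma~\ref{lem:upp_bound}, noting that the extra hypothesis $\eta=\xi=0$ annihilates two of the three pieces appearing in its Step~1 and thereby sharpens the rate. Recall the decomposition $\ee = A_1\eta + A_2\xi + \zeta$ used there. Under $\eta=\xi=0$ it collapses to $\ee=\zeta$, a vector of $n^2$ i.i.d.\ mean-zero entries with $\E|\zeta_{11}|^{1+t}\in(0,\infty)$. Hence, for any deterministic $w\in\R^{N}$, the split $|w^\top V_jV_j^\top\ee|\le \text{(I)}+\text{(II)}+\text{(III)}$ reduces to $|w^\top V_jV_j^\top\ee| = \text{(III)} = |w^\top V_jV_j^\top\zeta|$, since terms (I) and (II) vanish identically. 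The bound on (III) established in the proof of Lemma~\ref{lem:upp_bound} is unchanged and uses only the i.i.d.\ moment condition on $\zeta$: for $t\in[0,1)$, Corollary~8 and Lemma~A8 of \citet{wen2025residual} together with $\|V_jV_j^\top\|\le1$ give $\text{(III)} = o_p(\|w\|\,n^{\frac{1-t}{1+t}})$ (the exponent $\frac{1-t}{1+t}$ coming from $\zeta$ having length $n^2$), while for $t=1$ Chebyshev's inequality with $\E|\zeta_{11}|^2\le2$ gives $\text{(III)} = O_p(\|w\|)$. So the Step~1 conclusion improves to $|w^\top V_jV_j^\top\ee| = o_p(\|w\|\,n^{\frac{1-t}{1+t}})$ for $t\in[0,1)$ and $O_p(\|w\|)$ for $t=1$.

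Next I would repeat Step~2 of the proof of Lemma~\ref{lem:upp_bound} almost verbatim. Since $\eta=\xi=0$ restricts only the regression error $\ee$ and not the covariate error $e$, the latter still has its random-effects form $e = A_1\eta^e + A_2\xi^e + \zeta^e$, and $|e^\top V_jV_j^\top\ee| \le |{\eta^e}^\top A_1^\top V_jV_j^\top\ee| + |{\xi^e}^\top A_2^\top V_jV_j^\top\ee| + |{\zeta^e}^\top V_jV_j^\top\ee|$. On the event $\cE$ (with $\P(\cE)\to1$ by the law of large numbers, exactly as before) each of $\|A_1\eta^e\|$, $\|A_2\xi^e\|$, $\|\zeta^e\|$ is at most $\sqrt{5/2}\,n$; conditioning on $e$, using the independence of $e$ and $\ee$, and applying the improved Step~1 bound with $w$ equal to each of these three vectors yields, for $t\in[0,1)$,
\[
|e^\top V_jV_j^\top\ee| = o_p\big(n\cdot n^{\frac{1-t}{1+t}}\big) = o_p\big(n^{\frac{2}{1+t}}\big),
\]
using $1+\frac{1-t}{1+t}=\frac{2}{1+t}$, and for $t=1$, $|e^\top V_jV_j^\top\ee| = O_p(n\cdot 1) = O_p(n)$. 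These are the claimed bounds.

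I do not anticipate a genuine obstacle, as the argument is a direct specialization of the proof of Lemma~\ref{lem:upp_bound}. The only points needing care are the exponent bookkeeping (verifying $1+\frac{1-t}{1+t}=\frac{2}{1+t}$, which makes the gain precisely a squaring of the Theorem~\ref{thm:power} threshold) and confirming that the conditioning/independence step carries over—which it does, since neither the control event $\cE$ nor the independence of $e$ and $\ee$ involves $\eta$ or $\xi$.
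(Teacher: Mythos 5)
Your proposal is correct and is exactly the specialization the paper has in mind: the paper omits this proof, stating only that it follows from Lemma~\ref{lem:upp_bound} together with $\eta=\xi=0$, and your reconstruction (only term (III) survives in Step~1, giving $o_p(\|w\|n^{\frac{1-t}{1+t}})$ resp.\ $O_p(\|w\|)$, then Step~2 carries over unchanged since $e$ keeps its random-effects form and $\|w\|=O(n)$ on $\cE$) fills in precisely those details with the right exponent arithmetic $1+\frac{1-t}{1+t}=\frac{2}{1+t}$.
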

\begin{proof}[Proof of Theorem \ref{thm:power_iid}]
The proof follows from the same line of argument as in Theorem \ref{thm:power}, while we replace Lemma \ref{lem:upp_bound} with Lemma \ref{lem:upp_bound1} to get the improved rate.
\end{proof}

\section{Finite-sample Validity of Procedure \ref{proc:CIPT}}\label{sec:proof_CIPT}
\begin{proof}[Proof of Theorem \ref{thm:cond_valid}]
By construction in Procedure~\ref{proc:CIPT}, the permutations are applied to the collection $\{(\XX_q, \DD_q, \yy_q)\}$ and are restricted to sub-array permutations on $I_q\times J_q$ for each $q = 1, \dots, Q$. Hence, all permutations are confined to observed entries, and the same validity argument as in Section~\ref{sec:proof_IPT} applies.
\end{proof}

\section{Power Analysis of Procedure \ref{proc:CIPT}}\label{sec:proof_CIPT_power}

\subsection{Proof of Proposition \ref{thm:biclique}}
Given a random bipartite graph $G$ in Assumption \ref{A:random-graph}, let $X_s$ denote the number of bicliques of size $s\times s$. Recall that $\rodds = \rho / (1-\rho)$. Lemma \ref{lem:graph_low} shows that $X_s > 0$ with high probability, therefore a biclique of size $s\times s$ exists.

\begin{lemma}\label{lem:graph_low}
Let $s = \min\{(2/3 - b_0) \rodds \log n, a_n {n}^{1/3}\}$ for any constant $b_0 > 0$ and a diminishing sequence $a_n = o(1)$. We have $\lim_{n\to\infty}\P(X_s > 0) = 1$.
\end{lemma}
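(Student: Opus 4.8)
The plan is to establish the existence of an $s\times s$ biclique by the second moment method applied to $X_s$, the number of (not necessarily maximal) $s\times s$ bicliques, i.e. $X_s=\sum_{A,B}\mathbbm 1\{M_{ij}=1\ \forall i\in A,\,j\in B\}$, the sum being over $A,B\subseteq[n]$ with $|A|=|B|=s$. Under Assumption~\ref{A:random-graph} the edges are i.i.d.\ Bernoulli($\rho$), so $\E[X_s]=\binom ns^2\rho^{s^2}$, and the Paley--Zygmund inequality gives $\P(X_s=0)\le \E[X_s^2]/(\E X_s)^2-1$. Hence it suffices to show $\E[X_s^2]/(\E X_s)^2\to 1$. (The same first-moment expression, for the stated choice of $s$, also tends to infinity, which is the heuristic reason $s$ lies below the biclique threshold; but only the ratio bound is needed.)

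For the second moment I would group pairs of candidate bicliques $(A\times B,\,A'\times B')$ by the overlaps $i=|A\cap A'|$ and $j=|B\cap B'|$. Both are bicliques precisely when a set of $2s^2-ij$ edges is present (the $ij$ edges of $(A\cap A')\times(B\cap B')$ being shared), so, counting configurations with prescribed overlaps,
\[
\frac{\E[X_s^2]}{(\E X_s)^2}=\sum_{i,j=0}^{s}u_{i,j},\qquad u_{i,j}:=\frac{\binom si\binom{n-s}{s-i}\binom sj\binom{n-s}{s-j}}{\binom ns^2}\,\rho^{-ij}.
\]
The leading term $u_{0,0}=\binom{n-s}s^2/\binom ns^2$ equals $1-O(s^2/n)\to 1$, because the cap $s\le a_nn^{1/3}$ forces $s^2/n\to 0$. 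So the task reduces to showing $\sum_{(i,j)\ne(0,0)}u_{i,j}\to 0$.

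To do this I would factor $u_{i,j}=u_{0,0}\,r_i r_j\,\rho^{-ij}$ with $r_i=\binom si\binom{n-s}{s-i}/\binom{n-s}s$, and use the elementary bound $r_i\le(2es^2/n)^i=:\epsilon_n^{\,i}$ for $i\ge 1$, where $\epsilon_n=o(n^{-1/3})$. The ``mixed'' terms in which exactly one of $i,j$ is $0$ then sum to at most $2\sum_{i\ge1}\epsilon_n^{\,i}=O(\epsilon_n)\to 0$. For the terms with $i,j\ge 1$ I would combine $\rho^{-1}=1+\rodds^{-1}\le e^{1/\rodds}$ with the defining inequality $\rodds^{-1}\le(2/3-b_0)(\log n)/s$ (valid since $s\le(2/3-b_0)\rodds\log n$) and $ij\le(i+j)^2/4$ to get $r_ir_j\rho^{-ij}\le \epsilon_n^{\,m}\,n^{m^2(2/3-b_0)/(4s)}$ with $m=i+j$; then, bounding $m\le 2s$ in the exponent and using $-\log\epsilon_n\ge\tfrac13\log n-O(1)$ (again from $s\le n^{1/3}$), this becomes $n^{-\Omega(m)}$, so $\sum_{i,j\ge1}u_{i,j}\le\sum_{m\ge 2}m\,n^{-\Omega(m)}\to 0$. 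Combining the three pieces gives $\E[X_s^2]/(\E X_s)^2=u_{0,0}+o(1)\to 1$.

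The step needing care is this last estimate: one must verify that $r_ir_j\rho^{-ij}$ decays geometrically in $i+j$ \emph{uniformly} over $1\le i,j\le s$, including the large-overlap corner $i=j=s$ (where $u_{s,s}=1/\E[X_s]$). This is exactly where the factor $2/3$ in the statement is used---the exponents reconcile only because $2\log s\le\tfrac23\log n$ (from $s\le n^{1/3}$) beats $(2/3-b_0)\log n$ (from $s\le(2/3-b_0)\rodds\log n$)---so the interplay of the two constraints defining $s$ is essential. The argument is the bipartite analogue of the classical computation of the clique number of $G(n,p)$ \citep{matula1976largest, frieze2015introduction}, here carried out with an edge probability $\rho=\rho_n$ permitted to tend to $1$; the remaining steps (the first-moment formula and $u_{0,0}\to1$) are routine.
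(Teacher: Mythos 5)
Your proposal is correct, but it reaches the conclusion by a different concentration tool than the paper. The paper applies Janson's inequality, $\P(X_s=0)\le\exp\bigl(-(\E X_s)^2/(2\bar\Delta)\bigr)$, so it only has to show that the overlapping-pair ratio $\bar\Delta/(\E X_s)^2=\sum_{i,j\ge 1}u_{i,j}$ vanishes; it does so by proving $u_{i,j}\le u_{1,1}$ for all $1\le i,j\le s$ (for large $n$) and then bounding the sum crudely by $s^2u_{1,1}=O(s^6/n^2)=o(1)$. You instead run the plain second-moment (Chebyshev/Paley--Zygmund) method, which forces you to also handle the $i=0$ or $j=0$ terms --- your observations that $u_{0,0}=\binom{n-s}{s}^2/\binom{n}{s}^2=1-O(s^2/n)\to1$ and that the mixed terms are $O(\epsilon_n)$ are exactly what is needed, and they hold because the cap $s\le a_n n^{1/3}$ forces $s^2/n\to0$. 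For the overlapping terms, both proofs ultimately rest on the same exponent bookkeeping (the interplay $2\log s\le\tfrac23\log n$ versus $1/\rodds\le(2/3-b_0)(\log n)/s$, i.e.\ the same computation the paper does when showing $\tfrac{es^2}{in}\rho^{-i/2}=o(1)$), but your bound $u_{i,j}\le u_{0,0}\,\epsilon_n^{\,i+j}e^{\,ij/\rodds}\le n^{-\Omega(i+j)}$, summed as a geometric series in $m=i+j$, is a somewhat more direct way to control the double sum than the paper's reduction to $u_{1,1}$; in particular it transparently covers the large-overlap corner $u_{s,s}=1/\E X_s$, which you rightly flag as the delicate case. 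What each approach buys: Janson yields an exponentially small bound on $\P(X_s=0)$ and dispenses with the non-overlapping terms, while your route is more elementary (no Janson machinery) at the cost of the extra, easy, $u_{0,0}$ and mixed-term estimates; both give the stated limit.
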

\begin{proof}[Proof of Lemma \ref{lem:graph_low}]
By definition, $X_s = \sum_{S, S'} \mathbb{I}\{(S, S')~\text{is a biclique}\}$, where $S$ and $S'$ are subsets of $[n]$ with size $s$. To analyze the tail behavior of $X_s$, let $S, S', T, T'$ be subsets of $[n]$ with size $s$, and denote $S \times S' \sim T \times T'$ if $S \times S' \cap T \times T' \neq \emptyset$. Then, define
$$
\bar{\Delta}=\sum_{\substack{(S,S') \sim (T, T')}} \mathbb{P}\left((S, S'), (T, T') \text { are $s\times s$ bicliques in } G\right)\;.
$$
Then, by Janson's inequality (Theorem 27.13 of \citet{frieze2015introduction}) we have
$$
\mathbb{P}\left(X_s=0\right) \le \exp(-\frac{\left(\mathbb{E} X_s\right)^2}{2 \bar{\Delta}})\;.
$$
It suffices to show that the ratio $\bar{\Delta}/(\E X_s)^2$ goes to zero. By definition of $X_s$, we have 
\begin{align*}
    \E X_s & = \sum_{S, S'} \P((S, S')~\text{is a biclique}) \stackrel{\text{(i)}}{=} \sum_{S, S'} \rho^{s^2} \stackrel{\text{(ii)}}{=} \binom{n}{s}^2 \rho^{s^2}\;.
\end{align*}
Here (i) holds because $(S, S')$ has $s^2$ edges in total and each edge occurs with probability $\rho$. (ii) follows from the fact that there are $\binom{n}{s}$ different choices for $S$ and $S'$, and hence we have $\binom{n}{s}^2$ combinations of $(S, S')$. Similarly, to compute $\bar{\Delta}$, we first enumerate all possible pair of $(S, S')$, then enumerate $(T, T')$ where the overlapping indices vary from $1$ to $s$.
\begin{align*}
    \bar{\Delta} &= \sum_{\substack{(S,S') \sim (T, T')}} \mathbb{P}\left((S, S'), (T, T') \text { are $s\times s$ bicliques}\right)\\
    &= \sum_{S, S'} \sum_{T, T'} \mathbb{I}\{|S\cap T| \ge 1\} \mathbb{I}\{|S'\cap T'| \ge 1\} \P\left((S, S'), (T, T') \text { are $s\times s$ bicliques}\right) \\
    &= \sum_{S, S'} \sum_{T, T'} \sum_{i=1}^s \sum_{j=1}^s \mathbb{I}\{|S\cap T| = i\} \mathbb{I}\{|S'\cap T'| = j\} \P\left((S, S'), (T, T') \text { are $s\times s$ bicliques}\right)\;.
\end{align*}
For each pair of $i$ and $j$, it requires $s^2$ edges for $(S, S')$ to be a biclique, and $(s^2 - ij)$ additional edges for $(T, T')$ to be a biclique. Therefore, $\P((S, S'), (T, T') \text { are $s\times s$ bicliques}) = \rho^{s^2} \times \rho^{s^2 - ij}$ and
\begin{align*}
    \bar{\Delta} &=\rho^{s^2} \sum_{S, S'} \sum_{T, T'} \sum_{i=1}^s \sum_{j=1}^s \mathbb{I}\{|S\cap T| = i\} \mathbb{I}\{|S'\cap T'| = j\} \rho^{s^2 - ij} \\
    &= \rho^{s^2} \sum_{S, S'} \sum_{i=1}^s \sum_{j=1}^s \binom{n-s}{s-i}\binom{s}{i} \binom{n-s}{s-j}\binom{s}{j} \rho^{s^2 - ij}\;.
\end{align*}
In the last equality, the number of choices for $T$ is $\binom{n-s}{s-i}\binom{s}{i}$ since we select $i$ indices from $S$ and $s - i$ indices from the complement of $S$. Similarly, the number of choices for $T'$ is $\binom{n-s}{s-j}\binom{s}{j}$. Lastly, as shown earlier, there are $\binom{n}{s}$ choices for $S$ and $S'$, which leads to 
\[
\bar{\Delta} = \binom{n}{s}^2 \rho^{s^2} \sum_{i=1}^s \sum_{j=1}^s \binom{n-s}{s-i}\binom{s}{i} \binom{n-s}{s-j}\binom{s}{j}\rho^{s^2-ij}\;.
\]

Based on the expression of $\E X_s$ and $\bar{\Delta}$, we have
\begin{align*}
\frac{\bar{\Delta}}{\left(\mathbb{E} X_s\right)^2} & =\frac{\binom{n}{s}^2 \rho^{s^2} \sum_{i=1}^s \sum_{j=1}^s \binom{n-s}{s-i}\binom{s}{i} \binom{n-s}{s-j}\binom{s}{j}\rho^{s^2-ij}}{\left(\binom{n}{s}^2 \rho^{s^2}\right)^2} = \sum_{i=1}^s \sum_{j=1}^s \frac{\binom{n-s}{s-i}\binom{s}{i} \binom{n-s}{s-j}\binom{s}{j}}{\binom{n}{s}^2} \rho^{-ij} \eqqcolon \sum_{i,j=1}^s u_{i,j}.
\end{align*}
In the following we show that $u_{i,j}\le u_{1,1}$. By definition, for any $i,j$,
\begin{align*}
\frac{u_{i, j+1}}{u_{i,j}} &= 
\frac{(s-j)^2}{(n-2s + j + 1)(j+1)} \rho^{-i} \le \frac{s^2}{(n-2s)(j+1)} \rho^{-i} = \left(1 + \frac{2s}{n-2s}\right) \frac{s^2}{n (j+1)} \rho^{-i}\;.
\end{align*}
Similarly, 
\begin{equation*}
    \frac{u_{i+1, j}}{u_{i,j}} \le \left(1 + \frac{2s}{n-2s}\right) \frac{s^2}{n (i+1)} \rho^{-j}\;.
\end{equation*}
Therefore, for any $i, j$,
\begin{align*}
    \frac{u_{i,j}}{u_{1,1}} &\le \left(1 + \frac{2s}{n-2s}\right)^{i+j} \left(\frac{s^2}{n}\right)^{i+j} \frac{\rho^{-ij-1}}{i! j!}\\
    &\stackrel{\text{(i)}}{\le} \left(1 + \frac{2s}{n-2s}\right)^{i+j} \left(\frac{es^2}{in}\right)^{i} \left(\frac{es^2}{jn}\right)^{j} \rho^{-ij-1} \\
    &\stackrel{\text{(ii)}}{\le} 
    \left(1 + \frac{2s}{n-2s}\right)^{i+j} \left(\frac{es^2}{in}\cdot \rho^{-i/2}\right)^{i} \left(\frac{es^2}{jn}\cdot \rho^{-j/2}\right)^{j} \rho^{-1}
\end{align*}
where (i), (ii) follow from $i! \ge (i/e)^i$ and $ij\le (i^2 + j^2)/2$, respectively. Since
\begin{align*}
\frac{es^2}{in}\cdot \rho^{-i/2} = \exp(1 + 2 \log s - \log i - \log n + \frac{i}{2} \log \frac{1}{\rho})
\end{align*}
and $1/\rho = 1 + 1/\rodds$ and $\log(1+x) \asymp x$, we have $\log \frac{1}{\rho} \asymp \frac{1}{\rodds}$ and
\begin{align*}
\frac{es^2}{in}\cdot \rho^{-i/2} &\le \exp(2 \log s - \log n + \frac{s}{2\rodds}) \le \exp(2 \log s - \log n + \frac{(2/3-b_0)\rodds\log n}{2\rodds})\\
&= \exp(2 \log s - \log n + (1/3-b_0/2)\log n)\\
&\le \exp(2/3 \log n - \log n + 1/3 \log n  - b_0 \log n / 2) = \exp(- b_0 \log n / 2) = o(1)\;,
\end{align*}
where the last inequality follows from $s \le n^{1/3}$. Similarly, we can show that $\frac{es^2}{jn}\cdot \rho^{-j/2} = o(1)$. In addition, by the inequality $1+x \le \exp(x)$, we have
\begin{equation*}
    \left(1 + \frac{2s}{n-2s}\right)^{i+j} \le \exp(\frac{2s}{n-2s} (i+j)) \le \exp(\frac{4s^2}{n-2s}) = O(1)\;,
\end{equation*}
where the last equality follows form $s \le n^{1/3}$. Therefore, for $n$ large enough, we have $u_{i,j}/u_{1,1}<1$, i.e., $0 < u_{i,j} < u_{1,1}$.

Based on the analysis above, we have
\begin{align*}
\frac{\bar{\Delta}}{\left(\mathbb{E} X_s\right)^2} &\le \sum_{i,j=1}^s u_{i,j} \le s^2 u_{1,1} = s^2 \frac{\binom{n-s}{s-1}\binom{s}{1} \binom{n-s}{s-1}\binom{s}{1}}{\binom{n}{s}^2} \rho^{-1} = O\Bigl(\frac{s^6}{n^2}\Bigr)\;.
\end{align*}
Since $s = o(n^{1/3})$, the above quantity is $o(1)$, which completes the proof.
\end{proof}

\begin{proof}[Proof of Proposition \ref{thm:biclique}]
By setting $b_0 = 1/3$ and $a_n = 1/\log n$ in the parameter $s$ in Lemma \ref{lem:graph_low}, we have that with probability converging to 1, there exists a biclique of size $\min\Bigl\{\frac{1}{3} \rodds \log n, \frac{n^{1/3}}{\log n}\Bigr\}$.
\end{proof}

\subsection{Proof of Theorem \ref{thm:power_miss}}
Suppose the largest biclique for the permutation block is of size $s_1\times s_1$, and define the event 
\begin{equation*}
    \cE = \{ s_1 \ge s \}\;, \quad s = \min\Bigl\{\frac{1}{3} \rodds \log n, \frac{n^{1/3}}{\log n}\Bigr\}\;.
\end{equation*}
Based on Proposition \ref{thm:biclique}, we have $\P(\cE) \to 1$ as $n$ goes to infinity. Since the randomness of $\cE$ only comes from the random bipartite graph, it is independent of covariates and thus we have the conditional convergence $\P(\cE \mid \{x_{ij}, d_{ij}\}_{i,j\in[n]}) \to 1$.

To establish the diminishing type II error, note that 
\begin{align*}
\P\Bigl(\mathrm{pval} > \frac{1}{K+1} \mid \{x_{ij}, d_{ij}\}_{i,j\in[n]} \Bigr) &= \P\Bigl(\mathrm{pval} > \frac{1}{K+1} \mid \cE, \{x_{ij}, d_{ij}\}_{i,j\in[n]}\Bigr) \P(\cE \mid \{x_{ij}, d_{ij}\}_{i,j\in[n]}) \\
&+ \P\Bigl(\mathrm{pval} > \frac{1}{K+1} \mid \cE^\complement, \{x_{ij}, d_{ij}\}_{i,j\in[n]}\Bigr) \P(\cE^\complement \mid \{x_{ij}, d_{ij}\}_{i,j\in[n]}) \\
&\le \P\Bigl(\mathrm{pval} > \frac{1}{K+1} \mid \cE, \{x_{ij}, d_{ij}\}_{i,j\in[n]} \Bigr) + \P(\cE^\complement \mid \{x_{ij}, d_{ij}\}_{i,j\in[n]})\;.
\end{align*}
Since $\P(\cE^\complement \mid \{x_{ij}, d_{ij}\}_{i,j\in[n]})$ is $o(1)$, it suffices to ensure that $\P(\mathrm{pval} > \frac{1}{K+1} \mid \cE, \{x_{ij}, d_{ij}\}_{i,j\in[n]})$ is also $o(1)$. Since the test is conditioned on the single permutation block and remove the other observations, we may write
\begin{align*}
\P\Bigl(\mathrm{pval} > \frac{1}{K+1} \mid \cE, \{x_{ij}, d_{ij}\}_{i,j\in[n]}\Bigr) 
&= \P\Bigl(\mathrm{pval} > \frac{1}{K+1} \mid \{x_{ij}, d_{ij}\}_{i\in I, j\in J}, ~|I| = |J| \ge s \Bigr)\;,
\end{align*}
where we denote by $(I, J)$ the largest biclique of the random bipartite graph. For each fixed $(I, J)$ that satisfies $|I| = |J| \ge s$, we apply Theorem \ref{thm:power} with the number of clusters $n = s$, which shows that it converges to zero. Therefore, the probability above converges to zero.

\section{Supporting Lemmas}\label{sec:support}
We introduce the following results adapted from Lemma A13 and Lemma A6 from \citet{wen2025residual}, respectively. We also provide the proofs for completeness. 
\begin{lemma}\label{lem:MP-ineq}
Consider a symmetric positive semidefinite matrix $M \in \R^{n \times n}$ and a permutation matrix $P\in \R^{n \times n}$, we have $|\tr(MP)| \le \tr(M)$.
\end{lemma}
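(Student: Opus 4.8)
The plan is to diagonalize $M$ and exploit that a permutation matrix is orthogonal. Since $M$ is symmetric positive semidefinite, write its spectral decomposition $M = \sum_{k=1}^n \lambda_k u_k u_k^\top$ with eigenvalues $\lambda_k \ge 0$ and orthonormal eigenvectors $u_k$. By linearity and the cyclic property of the trace,
\[
\tr(MP) = \sum_{k=1}^n \lambda_k \, \tr(u_k u_k^\top P) = \sum_{k=1}^n \lambda_k \, u_k^\top P u_k\;.
\]
The key observation is that a permutation matrix is orthogonal, so $\|P u_k\| = \|u_k\| = 1$, and by the Cauchy–Schwarz inequality $|u_k^\top P u_k| \le \|u_k\|\,\|P u_k\| = 1$.

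\textbf{Conclusion of the argument.} Combining these, I would bound
\[
|\tr(MP)| \le \sum_{k=1}^n \lambda_k \, |u_k^\top P u_k| \le \sum_{k=1}^n \lambda_k = \tr(M)\;,
\]
where the last equality is just the fact that the trace equals the sum of eigenvalues, and nonnegativity of the $\lambda_k$ is what allows pulling the absolute value inside the sum. This completes the proof.

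\textbf{Alternative and remarks.} An equally short route avoids eigenvectors entirely: writing $P$ as the permutation matrix of $\sigma$, one has $\tr(MP) = \sum_i M_{i\sigma(i)}$, and since every $2\times 2$ principal submatrix of a PSD matrix is PSD we get $|M_{i\sigma(i)}| \le \sqrt{M_{ii}M_{\sigma(i)\sigma(i)}} \le \tfrac12(M_{ii}+M_{\sigma(i)\sigma(i)})$; summing over $i$ and noting that $\sigma$ is a bijection yields $|\tr(MP)| \le \tfrac12(\tr(M)+\tr(M)) = \tr(M)$. I do not expect any genuine obstacle here — this is a routine auxiliary fact — the only thing to get right is to make sure the absolute value is handled correctly before invoking nonnegativity of the eigenvalues (equivalently, of the diagonal entries), which is exactly where positive semidefiniteness is used.
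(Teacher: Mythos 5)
Your proposal is correct. Your main argument, however, takes a genuinely different route from the paper: you diagonalize $M = \sum_k \lambda_k u_k u_k^\top$, use the cyclic property to get $\tr(MP) = \sum_k \lambda_k\, u_k^\top P u_k$, and bound $|u_k^\top P u_k| \le \|u_k\|\,\|Pu_k\| = 1$ via Cauchy--Schwarz and orthogonality of $P$; nonnegativity of the eigenvalues then gives $|\tr(MP)| \le \sum_k \lambda_k = \tr(M)$. The paper instead argues entrywise: positive semidefiniteness gives $|M_{ij}| \le \tfrac12(M_{ii}+M_{jj})$ for every $2\times 2$ principal submatrix, and summing $M_{i,\sigma(i)}$ over the permutation $\sigma$ induced by $P$ yields the bound directly --- which is precisely the ``alternative'' you sketch in your closing remark. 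The spectral route buys a bit more generality: since it only uses $\|Pu_k\| \le \|u_k\|$, it in fact shows $|\tr(MP)| \le \|P\|\,\tr(M)$ for any matrix $P$ of operator norm at most one, not just permutations. The paper's entrywise argument is more elementary (no spectral theorem) and matches how the lemma is actually invoked, namely to control a trace along a permutation of the diagonal. Both proofs are complete and correct; your handling of the absolute value before invoking $\lambda_k \ge 0$ is exactly right.
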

\begin{proof}
Using the positive semidefiniteness and symmetry of $M$, we have for any $i, j$, $|{M}_{ij}| \le \frac{{M}_{ii}+{M}_{jj}}{2}$. Let $\sigma$ be the permutation induced by $P$. We have
$$
\tr(MP) = \sum_{i=1}^n M_{i, \sigma(i)} \le \sum_{i=1}^n |M_{i, \sigma(i)}| \le \sum_{i=1}^n \frac{M_{ii}+{M}_{\sigma(i) \sigma(i)}}{2}=\tr(M)\;.
$$
\end{proof}

\begin{lemma}\label{lem:quad_form}
Let $u \in \R^m$ be a random vector with i.i.d. entries and $\E u_1 = 0$, $\E u_1^2 \le 2$. For a symmetric matrix $M\in\R^{m\times m}$ with zero diagonal entries, we have for any fixed $\delta > 0$, 
\begin{equation}\label{eq:sym_quad_form}
    \P\left( \frac{u^\top M u}{n^2} > \delta \right) \le \frac{8 \|M\|_F^2}{n^4 \delta^2}\;.
\end{equation}
Let $v \in \R^k$ be another random vector with i.i.d. entries and $\E v_1 = 0$, $\E v_1^2 \le 2$, $u\indep v$. For any matrix $N\in\R^{m\times k}$, we have for any fixed $\delta > 0$, 
\begin{equation}\label{eq:asym_quad_form}
    \P\left( \frac{u^\top N v}{n^2} > \delta \right) \le \frac{4 \|N\|_F^2}{n^4 \delta^2}\;.
\end{equation}
\end{lemma}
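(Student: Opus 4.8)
The plan is to establish both tail bounds by a one-line second-moment (Chebyshev/Markov) argument, so that the entire task reduces to computing the variances $\E[(u^\top M u)^2]$ and $\E[(u^\top N v)^2]$. In each case, since $\delta>0$ we have $\{u^\top M u/n^2>\delta\}\subseteq\{(u^\top M u)^2\ge n^4\delta^2\}$, and Markov's inequality applied to the nonnegative random variable $(u^\top M u)^2$ gives $\P(u^\top M u/n^2>\delta)\le \E[(u^\top M u)^2]/(n^4\delta^2)$, and similarly for the bilinear form. So it remains only to bound the two second moments by $8\|M\|_F^2$ and $4\|N\|_F^2$, respectively; note that these bounds should not require any fourth-moment assumption, only $\E u_1^2\le 2$ and $\E v_1^2\le 2$.

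For \eqref{eq:asym_quad_form}, I would expand $u^\top N v=\sum_{i}\sum_{j}N_{ij}u_i v_j$. Because $u\indep v$ and both have i.i.d.\ mean-zero entries, $\E[u^\top N v]=0$ and
\[
\E[(u^\top N v)^2]=\sum_{i,i'}\sum_{j,j'}N_{ij}N_{i'j'}\,\E[u_i u_{i'}]\,\E[v_j v_{j'}]=\E[u_1^2]\,\E[v_1^2]\sum_{i,j}N_{ij}^2\le 4\|N\|_F^2,
\]
using $\E[u_i u_{i'}]=\E[u_1^2]\mathbbm{1}\{i=i'\}$ and the analogous identity for $v$. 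For \eqref{eq:sym_quad_form}, the zero-diagonal hypothesis on $M$ lets me write $u^\top M u=\sum_{i\neq j}M_{ij}u_i u_j$, so $\E[u^\top M u]=0$, and
\[
\E[(u^\top M u)^2]=\sum_{i\neq j}\sum_{i'\neq j'}M_{ij}M_{i'j'}\,\E[u_i u_j u_{i'}u_{j'}].
\]
The key observation is that, since $i\neq j$ and $i'\neq j'$, a summand is nonzero only when $\{i,j\}=\{i',j'\}$ as sets, i.e.\ $(i',j')\in\{(i,j),(j,i)\}$, in which case $\E[u_i u_j u_{i'}u_{j'}]=(\E u_1^2)^2$ by independence. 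Combined with $M_{ij}=M_{ji}$, this yields $\E[(u^\top M u)^2]=2(\E u_1^2)^2\sum_{i\neq j}M_{ij}^2=2(\E u_1^2)^2\|M\|_F^2\le 8\|M\|_F^2$, and the conclusion follows from Markov as above.

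The whole argument is elementary; the only place requiring care — and the nearest thing to an obstacle — is the enumeration of which index quadruples $(i,j,i',j')$ contribute in the quadratic-form moment computation. It is precisely the zero-diagonal assumption on $M$ that eliminates the pairings with $i=j$ or $i'=j'$, which would otherwise force $\E u_1^4$ into the bound and would not be controllable under the stated second-moment-only hypotheses.
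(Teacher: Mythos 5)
Your proof is correct and follows essentially the same route as the paper: expand the (bi)linear form, observe that under the second-moment and independence assumptions only the diagonal (resp.\ matched-pair) terms survive—exactly the paper's step (i) with its factor of $2$ from symmetry—and conclude by a Chebyshev/Markov second-moment bound. Your explicit enumeration of contributing index quadruples and the remark that the zero-diagonal assumption is what avoids any fourth-moment condition simply make explicit what the paper leaves implicit.
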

\begin{proof}
To prove \eqref{eq:sym_quad_form}, observe that 
\begin{equation*}
    \E \left(\frac{u^\top M u}{n^2} \right)^2 = \E \left( \sum_{i\neq j} M_{ij} \frac{u_i u_j}{n^2}\right)^2 \stackrel{\text{(i)}}{=} 2 \E \left( \sum_{i\neq j} M_{ij}^2 \frac{u_i^2 u_j^2}{n^4}\right) \le \frac{8\|M\|_F^2}{n^4}\;.
\end{equation*}
where (i) follows from $u_i \indep u_j$ for any $i \neq j$. Then by applying Chebyshev's inequality, we obtain \eqref{eq:sym_quad_form}.
To prove \eqref{eq:asym_quad_form}, since $u_i \indep v_j$ for any $i, j$, we have
\begin{equation*}
\E \left(\frac{u^\top N v}{n^2} \right)^2 = \E \left( \sum_{i=1}^m \sum_{j=1}^k N_{ij} \frac{u_i v_j}{n^2}\right)^2 \stackrel{\text{(i)}}{=} \E \left( \sum_{i=1}^m \sum_{j=1}^k N_{ij}^2  \frac{u_i^2 v_j^2}{n^4}\right) \le \frac{4 \|N\|_F^2}{n^4}\;.
\end{equation*}
By Chebyshev's inequality, we obtain \eqref{eq:asym_quad_form}.
\end{proof}

The next Lemma gives the upper and lower bounds for the trace of projection matrix products, which is used in our analysis.
\begin{lemma}\label{lem:proj}
Let $M_1, M_2\in\R^{n\times n}$ be two projection matrices and $P$ be a $n\times n$ permutation matrix. We have
\begin{align}
    \tr(M_1 M_2) &\ge \max\{0, \tr(M_1) + \tr(M_2) - n\}\;, \label{eq:proj1}\\
    \tr(M_1 M_2 P) &\le \min\{\tr(M_1), \tr(M_2)\}\;. \label{eq:proj2}
\end{align}
\end{lemma}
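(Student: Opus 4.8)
The plan is to prove both inequalities from elementary properties of orthogonal projections: a projection matrix $M$ is symmetric and idempotent ($M=M^\top=M^2$), satisfies $0\preceq M\preceq I$, has operator norm $\|M\|\le 1$, and obeys $\tr(M)=\mathrm{rank}(M)$; write $r_i:=\tr(M_i)=\mathrm{rank}(M_i)$.

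For \eqref{eq:proj1}, I would first obtain nonnegativity from the Frobenius identity $\tr(M_1M_2)=\tr(M_1^2M_2^2)=\tr\big((M_1M_2)(M_1M_2)^\top\big)=\|M_1M_2\|_F^2\ge 0$, using $(M_1M_2)^\top=M_2M_1$ and cyclicity of the trace. For the second term inside the maximum, I would write $\tr(M_1M_2)=\tr(M_1)-\tr\big(M_1(I-M_2)\big)$ and bound the subtracted quantity from above. Setting $C:=I-M_2\succeq 0$, trace monotonicity on the PSD cone gives $\tr(M_1C)=\tr\big(C^{1/2}M_1C^{1/2}\big)\le\tr\big(C^{1/2}C^{1/2}\big)=\tr(C)=n-\tr(M_2)$, where the inequality uses $M_1\preceq I$. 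Hence $\tr(M_1M_2)\ge\tr(M_1)+\tr(M_2)-n$, and combining with nonnegativity yields \eqref{eq:proj1}.

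For \eqref{eq:proj2}, the subtlety is that $M_1M_2P$ is not symmetric, so I would argue via operator norms (equivalently, singular values) rather than eigenvalues. Factor $M_i=U_iU_i^\top$ with $U_i\in\R^{n\times r_i}$ having orthonormal columns. By cyclicity, $\tr(M_1M_2P)=\tr\big(U_1^\top M_2PU_1\big)$, the trace of an $r_1\times r_1$ matrix whose operator norm is at most $\|U_1\|\,\|M_2\|\,\|P\|\,\|U_1\|\le 1$; since $|\tr(B)|\le k\|B\|$ for any $k\times k$ matrix $B$, this gives $|\tr(M_1M_2P)|\le r_1=\tr(M_1)$. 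Rewriting $\tr(M_1M_2P)=\tr(M_2PM_1)=\tr\big(U_2^\top PM_1U_2\big)$ gives the symmetric bound $|\tr(M_1M_2P)|\le\tr(M_2)$, and the two together imply \eqref{eq:proj2}. A one-line alternative is von Neumann's trace inequality applied to $M_1$ and $M_2P$, whose ordered singular values are $\min\{r_1,r_2\}$ common $1$'s followed by $0$'s.

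Since each step reduces to a standard matrix inequality, I do not anticipate a genuine obstacle; the only place requiring care is \eqref{eq:proj2}, where one must avoid treating $M_1M_2P$ as symmetric — the crude bound $|\tr(AB)|\le\|A\|_F\|B\|_F$ gives only $\sqrt{\tr(M_1)\tr(M_2)}$, which is weaker than $\min\{\tr(M_1),\tr(M_2)\}$, so the operator-norm argument (or von Neumann's inequality) is essential.
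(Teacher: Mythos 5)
Your proposal is correct, and both halves take a genuinely different route from the paper's proof. For \eqref{eq:proj1}, the paper aligns orthonormal bases of $\mathrm{span}(M_1)$ and $\mathrm{span}(M_2)$ along their intersection and invokes Grassmann's formula, so that $\tr(M_1M_2)=\|U^\top W\|_F^2\ge \dim(\mathrm{span}(M_1)\cap\mathrm{span}(M_2))\ge r+s-n$; your identity $\tr(M_1M_2)=\tr(M_1)-\tr\bigl(M_1(I-M_2)\bigr)$ combined with $M_1\preceq I$ and trace monotonicity on the PSD cone reaches the same bound more directly, avoiding the basis-alignment construction, though the paper's version yields the (slightly stronger) intermediate fact that $\tr(M_1M_2)$ dominates the intersection dimension. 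For \eqref{eq:proj2}, the paper applies von Neumann's inequality to the pair $(M_1M_2,\,P)$ and then needs a Cauchy--Schwarz step together with $\mathrm{rank}(M_1M_2)\le\min\{\tr M_1,\tr M_2\}$ and $\|M_1M_2\|_F^2=\tr(M_1M_2)\le\min\{\tr M_1,\tr M_2\}$; your argument via $\tr(M_1M_2P)=\tr\bigl(U_1^\top M_2PU_1\bigr)$ and $|\tr(B)|\le k\|B\|$ (or, equivalently, von Neumann applied directly to $(M_1,\,M_2P)$, using that $P$ is orthogonal so $\sigma_i(M_2P)=\sigma_i(M_2)$) is shorter and dispenses with those intermediate bounds. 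Your closing remark is also well taken: the naive Frobenius bound only gives the geometric mean $\sqrt{\tr(M_1)\tr(M_2)}$, so the operator-norm or von Neumann route is indeed what delivers the minimum.
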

\begin{proof}
Let $\mathrm{span}(M_1)$ and $\mathrm{span}(M_2)$ be the linear spaces spanned by projection matrices $M_1$ and $M_2$, and let $t = \mathrm{dim}(\mathrm{span}(M_1) \cap \mathrm{span}(M_2)), r = \mathrm{dim}(\mathrm{span}(M_1)), s = \mathrm{dim}(\mathrm{span}(M_2))$. Then, we express the projection matrices by $M_1 = U U^\top, M_2 = W W^\top$. Here, $U\in\R^{n\times r}$ and $W\in\R^{n\times s}$ are orthonormal matrices corresponding to $M_1$ and $M_2$, and $U$, $W$ have the same top $t$ columns from their intersection space. Then, $\tr(M_1 M_2) = \|U^\top W\|_F^2 \ge 0$. Note that $U^\top W$ is a $r\times s$ matrix, and one can easily verify that the top $t\times t$ block is the identity matrix by construction. Therefore, $\|U^\top W\|_F^2\ge t$. In addition, by the Grassmann's formula, 
\[
t = \mathrm{dim}(\mathrm{span}(M_1)) + \mathrm{dim}(\mathrm{span}(M_2)) - \mathrm{dim}(\mathrm{span}(M_1) \cup \mathrm{span}(M_2)) \ge r + s - n\;.
\]
For projection matrices, we have $r = \tr(M_1)$ and $s = \tr(M_2)$. Hence we obtain \eqref{eq:proj1}.

Secondly, let $\sigma_i(A)$ be the $i$-th singular value of a matrix $A$. By von Neumann’s trace inequality, for $n\times n$ matrices $A$ and $B$, we have $|\tr(AB)| \le \sum_{i=1}^n \sigma_{i}(A) \sigma_i(B)$. By setting $A = M_1 M_2$ and $B = P$, we have
\[
|\tr(M_1 M_2 P)| \le \sum_{i=1}^n \sigma_{i}(M_1 M_2) \sigma_i(P) = \sum_{i=1}^n \sigma_{i}(M_1 M_2)\;,
\]
where the last equality follows from the fact that $P$ is an orthogonal matrix and $\sigma_i(P) = 1$. Then, by the AM-GM inequality we have
\[
\sum_{i=1}^n \sigma_{i}(M_1 M_2) \le \sqrt{\mathrm{rank}(M_1 M_2)} \sqrt{\sum_{i=1}^n \sigma_{i}(M_1 M_2)^2} = \sqrt{\mathrm{rank}(M_1 M_2)} \|M_1 M_2\|_F\;.
\]
The basic rank inequality justifies that $\mathrm{rank}(M_1 M_2) \le \min\{\tr(M_1), \tr(M_2)\}$. Additionally, $\|M_1 M_2\|_F^2 = \tr(M_1 M_2) \le \min\{\tr(M_1), \tr(M_2)\}$. To summarize, we have $\tr(M_1 M_2 P) \le \min\{\tr(M_1), \tr(M_2)\}$ and the second result is proved.
\end{proof}

\end{document}